\def\final{1}
\newcommand{\mypar}[1]{\paragraph{#1}}
\newenvironment{remark}{\smallskip\noindent {\em Remark:}}{\smallskip}
\def \N {\mathbb{N}}
\def \d {\delta}
\def \R {\mathbb{R}}
\def \E {\mathbb{E}}
\def \T {\mathbb{T}}
\def \NN {\mathcal{N}}
\def \b {\mathbf{b}}
\def \D {U}
\def \T {T}
\def \Phi {V}
\def \M {\mathbf{U}}
\def \BT {\mathbf{T}}
\def \Z {\mathbf{V}}
\def \y {\mathbf y}
\def \x {\mathbf x}
\def \z {\mathbf z}
\def \v {\mathbf v}
\def \s {{\mathbf s}}
\def \etc {,\ldots,}
\newcommand{\AAA}{\mathcal A}
\newcommand{\LIN}{\mathcal{L}_{\mathrm{lin}}}
\newcommand{\LOG}{\mathcal{L}_{\mathrm{log}}}
\newcommand{\norm}[1]{\left \| #1 \right \|}
\DeclareSymbolFont{AMSb}{U}{msb}{m}{n}
\DeclareMathSymbol{\erert}{\mathbin}{AMSb}{"50}
\DeclareMathSymbol{\C}{\mathbin}{AMSb}{"43}
\newcommand{\bit}[1]{\ensuremath{\{0,1\}^{#1}}}
\newcommand{\eg}{\emph{e.g.,}\xspace}
\newcommand{\mynote}[1]{\marginpar{\tiny\sf #1}}
\newcommand{\mynote}[1]{}
\newenvironment{CompactEnumerate}{
\begin{list}{\arabic{enumi}.}{%
\usecounter{enumi}
\setlength{\leftmargin}{12pt}
\setlength{\itemindent}{3pt}
\setlength{\topsep}{3pt}
\setlength{\itemsep}{1pt}
}}
{\end{list}}
\newenvironment{CompactItemize}{
\begin{list}{$\bullet$}{%
\setlength{\leftmargin}{12pt}
\setlength{\itemindent}{3pt}
\setlength{\topsep}{3pt}
\setlength{\itemsep}{1pt}
}}
{\end{list}}
\newtheorem{theorem}{Theorem}[section]
\newtheorem{lemma}[theorem]{Lemma}
\newtheorem{definition}{Definition}
\newtheorem{claim}{Claim}
\newtheorem{proposition}[theorem]{Proposition}
\newtheorem{corollary}[theorem]{Corollary}
\newcommand{\anote}[1]{\mynote{AS: {#1}}}
\begin{document}
\title{\Large The Power of Linear Reconstruction Attacks}
\author{Shiva Prasad Kasiviswanathan\thanks{GE Global Research Center, San Ramon. Most of the work was done when the author was a postdoc at the IBM T. J.\ Watson Research Center, kasivisw@gmail.com.} \and \ \ \ \ Mark Rudelson\thanks{University of Michigan, rudelson@umich.edu.}  \and \ \ \ \ Adam Smith\thanks{Pennsylvania State University, asmith@cse.psu.edu.}
}
\date{}
\maketitle
\begin{abstract}
We consider the power of ``linear reconstruction attacks'' in statistical data privacy, showing that they can be applied to a much wider range of settings than previously understood. Linear attacks have been studied before~\cite{DiNi03,DMT07,KRSU10,De11,MuthuN12} but have so far been applied only in settings with releases that are ``obviously'' linear. 

Consider a database curator who manages a database of sensitive information but wants to release statistics about how a {\em sensitive} attribute (say, disease) in the database relates to some {\em nonsensitive} attributes (e.g., postal code, age, gender, etc). This setting is widely considered in the literature, partly since it arises with medical data. Specifically, we show one can mount linear reconstruction attacks based on any release that gives:
\begin{enumerate}
\item \label{onea} the fraction of records that satisfy a given non-degenerate boolean function. Such releases include contingency tables (previously studied by Kasiviswanathan \emph{et al.}\ \cite{KRSU10}) as well as more complex outputs like the error rate of classifiers such as decision trees;

\item \label{twob} any one of a large class of $M$-estimators (that is, the output of empirical risk minimization algorithms), including the standard estimators for linear and logistic regression.
\end{enumerate}
We make two contributions: first, we show how these types of releases can be transformed into a {\em linear} format, making them amenable to existing polynomial-time reconstruction algorithms. This is already perhaps surprising, since many of the above releases (like $M$-estimators) are obtained by solving highly nonlinear formulations. 

Second, we show how to analyze the resulting attacks under various distributional assumptions on the data. Specifically, we consider a setting in which the same statistic (either~\ref{onea} or~\ref{twob} above) is released about how the sensitive attribute relates to all subsets of size $k$ (out of a total of $d$) nonsensitive boolean attributes. 
\end{abstract}

\section{Introduction}
The goal of \emph{private data analysis} is to provide global, statistical properties of a database of sensitive information while protecting the privacy of the individuals whose records the database contains. There is a vast body of work on this problem in statistics and computer science. 

Until a few years ago, most schemes proposed in the literature lacked rigor: typically, the schemes had either no formal privacy guarantees or ensured security only against a specific suite of attacks. The seminal results of Dinur and Nissim~\cite{DiNi03} and Dinur, Dwork and Nissim \cite{DiDwNi} initiated a rigorous study of the tradeoff between privacy and utility. The notion of \emph{differential privacy}~(Dwork, McSherry, Nissim and Smith~\cite{DMNS06}, Dwork~\cite{Dwork06}) that emerged from this line of work provides rigorous guarantees even in the presence of a malicious adversary with access to arbitrary side information.  Differential privacy requires,
roughly, that any single individual's data have little effect on the outcome of the analysis. Recently, many techniques have been developed for designing differentially private algorithms. A typical objective is to release as accurate an approximation as possible to some high-dimensional function $f$ evaluated on the database~$D$.

A complementary line of work seeks to establish lower bounds on how much distortion is necessary for particular functions $f$. Some of these bounds apply only to specific notions of privacy (\eg lower bounds for differential privacy~\cite{DMNS06,GRS09,HT10,MMPRTV10,De11}). A second class of  bounds  rules out {\em any} reasonable notion of privacy by giving algorithms to reconstruct almost all of the  data $D$ given sufficiently accurate approximations to $f(D)$~\cite{DiNi03,DMT07,DY08,KRSU10,De11}. We refer to the latter results as \emph{reconstruction attacks}. 

We consider reconstruction attacks against \emph{attribute privacy}: consider a curator who manages a database of sensitive information but wants to release statistics about how a {\em sensitive} attribute (say, disease) in the database relates to some {\em nonsensitive} attributes (e.g., postal code, age, gender, etc). This setting is widely considered in the applied data privacy literature, partly since it arises with medical and retail data.\anote{substantiate this.}

For concreteness, consider a database $D$ that contains, for each individual $i$, a sensitive attribute $s_i\in\{0,1\}$ as well as some other information $\D_i \in \R^d$ which is assumed to be known to the attacker. The $i$th record is thus $(\D_i,s_i)$. We denote the entire database $D=(\D|\s)$ where $\D\in \R^{n\times d}$, $\s\in\bit{n}$, and $|$ denote concatenation. Given some released information $\y$, the attacker constructs an estimate $\hat\s$ that she hopes is close to $\s$.  We measure the attack's success in terms of the Hamming distance $d_H(\s,\hat\s)$. A scheme is \emph{not} attribute private if an attacker can consistently get an estimate that is within distance $o(n)$. Formally:

\begin{definition}[Failure of Attribute Privacy]\!\footnote{This definition generalizes \emph{blatant non-privacy} (Dinur and Nissim~\cite{DiNi03}) and first appeared in \cite{KRSU10}. The order of the qualifiers here has been changed, correcting an error pointed out by Graham Cormode.} \label{def:apy} 
A (randomized) mechanism $\mathcal{M} \,:\, \R^{n \times d+1} \to \R^a$ is said to allow $(\alpha,\beta)$ attribute reconstruction if there exists a setting of the nonsensitive attributes $\D \in \R^{n \times d}$ and an algorithm (adversary) $\AAA \::\; \R^{n \times d} \times \R^a \to \R^n$ such that for every $\s \in \{0,1\}^n$,
\begin{align*}
\Pr_{\y \gets \mathcal{M}((\D|\s))} [\AAA(\D,\y) = \hat{\s} \,: \, d_H(\s,\hat{\s}) \leq \alpha  ] \geq 1 - \beta.
\end{align*}
\end{definition}

Asymptotically, we say that a mechanism is \emph{attribute nonprivate} if there is an infinite sequence of $n$ for which $\mathcal M$ allows $(o(1), o(1))$-reconstruction. Here $d=d(n)$ is a function of $n$.  We say the attack $\AAA$ is \emph{efficient} if it runs in time $\mbox{poly}(n,d)$. 

Instead of simply showing that a setting of $\D$ exists, we will normally aim to show that reconstruction is possible with high probability when $\D$ is chosen from one of a class of natural distributions.

\paragraph{Linear Reconstruction Attacks.}
In this paper, we consider the power of \emph{linear} reconstruction attacks. Given the released information $\y$, the attacker constructs a system of approximate linear equalities, namely a matrix $A$ and vector $\z$ such that $A \s \approx \z$ and attempts to solve for $\s$. A typical algorithmic approach is to find $\hat{\s}$ which minimizes some norm ($\ell_2$ or $\ell_1$) of the error $(A\hat\s - \z)$. Minimizing the $\ell_2$ error is known as {\em least squares decoding}  and minimizing the $\ell_1$ error is known as  {\em LP decoding}.  One sometimes also considers algorithms that exhaustively search over all $2^n$ possible choices for $\s$ (as in~\cite{DiNi03,MuthuN12}).

Such attacks were first considered in the context of data privacy by Dinur and Nissim~\cite{DiNi03}. They showed that any mechanism which answers (or allows the user to compute) $\Omega(n\log n)$ \emph{random} inner product queries with $\bit{}$  vectors on a database $\s\in \{0,1\}^n$ with $o(\sqrt{n})$ noise per query is not private. That is, they assume that the mechanism releases $\y = A\s +\mathbf{e}$, where $A$ is a random matrix in $\bit{\Omega(n\log n) \times n}$ and $\mathbf{e}$ is a noise vector with $\|\mathbf{e}\|_\infty =o(\sqrt{n})$.
Their attack was subsequently extended to use a linear number of queries~\cite{DMT07}, allow a small fraction of answers to be arbitrarily distorted~\cite{DMT07}, and run significantly more quickly~\cite{DY08}.  

In their simplest form, such inner product queries require the adversary to be able to ``name rows'', that is, specify a coefficient for each component of the vector $\s$. Thus, the lower bound does not seem to apply to any functionality that is symmetric in the rows of the database (such as, for example, ``counting queries'').\!\footnote{It was pointed out in~\cite{DiDwNi} that in databases with more than one entry per row, random inner product queries on the sensitive attribute vector $\s$ can be simulated via hashing: for example, the adversary could ask for the sum the function $H(\D_i)\cdot s_i$ over the whole database, where $H:\bit{d-1}\to\bit{}$ is an appropriate hash function. This is a symmetric statistic, but it is unlikely to come up in a typical statistical publication.}

\paragraph{Natural Queries.}
This paper focuses on linear attacks mounted based on the release of natural, symmetric statistics. A first attack along these lines appeared in a previous work of ours (together with J. Ullman)~\cite{KRSU10} in which we analyzed the release of marginal
tables (also called contingency tables). Specifically, in~\cite{KRSU10}, we showed that any mechanism which releases the marginal distributions of all subsets of $k+1$\!\footnote{For asymptotic statements, $k$ is considered constant in this paper, as in previous works~\cite{KRSU10,De11}.} attributes with $o(\sqrt{n})$ noise
per entry is attribute non-private if $d =
\tilde{\Omega}(n^{1/k})$.\!\footnote{The $\tilde{\Omega}$ notation hides polylogarithmic factors.} These noise bounds were improved in~\cite{De11}, which presented an attack that can tolerate a constant fraction of entries with arbitrarily high noise, as long as the remaining positions have $o(\sqrt{n})$ noise. We generalize both these results in this paper.

Recently, linear attacks were also considered based on range query releases~\cite{MuthuN12} (which, again, are natural, linear queries).

\paragraph{Our Results.} We greatly expand the applicability of linear attacks in ``natural'' settings. Specifically, we show one can mount linear reconstruction attacks based on any release that gives:

\begin{enumerate}
\item \label{one} the fraction of records that satisfy a given {\em non-degenerate} boolean function (a boolean function over $p$ variables is non-degenerate if its multilinear representation has degree exactly $p$). Such releases include contingency tables as well as more complex outputs like the error rate of certain classifiers such as decision trees; or

\item \label{two} the $M$-estimator associated with a differentiable loss function.  $M$-estimators are a broad class of estimators which are obtained by minimizing sums of functions of the records (they are also called \emph{empirical risk minimization} estimators). $M$-estimators include the standard estimators for linear and logistic regression (both these estimators are associated with differentiable loss functions). See Section~\ref{sec:mest} for definitions.
\end{enumerate}

Our contributions are two-fold. First, we show how these types of releases can be transformed into a (noisy) linear release problem, making them amenable to linear reconstruction attacks.  This is already perhaps surprising, since many of the above statistics (like $M$-estimators) are obtained by solving highly nonlinear formulations. After performing this transformation, we can apply polynomial-time methods (like least squares or LP decoding) on this linear release problem to estimate the sensitive data.

Second, we show how to analyze these attacks under various distributional assumptions on the data. This gives lower bounds on the noise needed to release these statistics attribute privately. Specifically, we consider a setting in which the same statistic (either~\ref{one} or~\ref{two} above) is released about how the sensitive attribute relates to all subsets of (constant) size $k$ (out of a total of $d$) nonsensitive boolean attributes.  For a subset $J \subseteq [d]$ of size $k$, let $\D|_J$ denote the submatrix of $\D$ consisting of the columns in $J$.

The $\binom d k$ entries for a statistic are obtained by evaluating the same statistic on $(\D|_J|\s)$  for all sets $J$ of size $k$.  Specifically:
\begin{CompactItemize}
\item 
Consider a mechanism which releases, for every set $J$ of size $k$, the fraction of records (rows) in $(\D|_J|\s)$ satisfying a non-degenerate boolean function over $k+1$ variables. We show that if the mechanism adds $o(1/\sqrt{n})$ noise per entry and if $d=\tilde{\Omega}(n^{1/k})$, then it is attribute non-private. 
\item   Consider a mechanism which releases, for every set $J$ of size $k$, a particular $M$-estimator evaluated over  $(\D|_J|\s)$. We show that if the mechanism adds $o(1/(\lambda\sqrt{n}))$ noise per entry and if $d=\Omega(n)$, then it is attribute
non-private. Here, $\lambda$ is the Lipschitz constant of the loss function gradient. The loss function also needs to satisfy a mild variance condition. For the case of linear and logistic regression estimators, $\lambda=\Theta(1)$ for bounded data, and so the noise bound is $o(1/\sqrt{n})$. 
\end{CompactItemize} 
The statements above are based on the least squares attack. For most settings, we show the LP decoding attack can also handle a constant fraction of entries with arbitrarily high noise (the exception is the setting of general $M$-estimators).

\paragraph{Techniques for Deriving the Attacks.} Casting the releases as a system of linear equations requires two simple insights which we hope will be broadly useful. First, we note that when $\s=(s_1 \etc s_n)$ is boolean, then any release which allows us to derive an equation which involves a sum over database records can in fact be made linear in $\s$. Specifically, suppose we know that $\sum_i g_i(s_i) = t$, where $t$ is a real number and $g_i$ is an arbitrary real-valued function that could depend on the index $i$, the public record $\D_i$, and any released information. We can rewrite $g_i(s_i)$ as $g_i(0) + s_i(g_i(1)-g_i(0))$; the constraint $\sum_i g_i(s_i) = t$ can then be written as $\sum_i s_i\cdot (g_i(1)-g_i(0)) = t-\sum_i g_i(0)$, which is affine in $\s$. This allows us to derive linear constraints from a variety of not-obviously-linear releases; for example, it allows us to get linear attacks based on the error rate of a given binary classifier (see Section~\ref{sec:multilinear}).

The second observation is that for many \emph{nonlinear} optimization problems, every optimal solution must satisfy constraints that are, in fact, sums over data records. For example, for $M$-estimators associated with differentiable loss functions, the gradient at the solution $\hat \theta$ must equal $0$, leading to an equation of the form $\sum_i \partial\, \ell(\hat\theta;(\D_i,s_i)) =0$. This can be made linear in $\s$ using the first technique. We bound the effect of any noise added to the entries of $M$-estimator ($\hat \theta$) via the Lipschitz properties of the gradient of the loss function~$\ell$. 

\paragraph{Techniques for Analyzing the Attacks.} The techniques just mentioned give rise to a variety of linear reconstruction attacks, depending on the type of released statistics. We can provide theoretical guarantees on the performance of these attacks in some settings, for example when the same statistic is released about many subsets of the data (\eg all sets of a given size $k$) and when the data records themselves are drawn i.i.d.\ from some underlying distribution. The main technique here is to analyze the geometry of the constraint matrix $A$ that arises in the attack. For the case of non-degenerate boolean functions, we do so by relating the constraint matrix to a row-wise product of a matrix with i.i.d.\ entries (referred to as a random row product matrix, see Section~\ref{sec:randconj}), which was recently analyzed by Rudelson~\cite{R11} (see also~\cite{KRSU10}). The results of~\cite{R11} showed that the least singular value of a random row product matrix is asymptotically the same as that of a matrix of same dimensions with i.i.d.\ entries, and a random row product matrix is a Euclidean section. Our results show that a much broader class of matrices with correlated rows satisfy these properties. 

\paragraph{Organization.} In Section~\ref{sec:prelim}, we introduce some notation and review the least squares and LP decoding techniques for solving noisy linear systems. In Section~\ref{sec:multilinear}, we present our results on evaluating non-degenerate boolean functions. As mentioned earlier, we first reduce the release problem to a linear reconstruction problem (Section~\ref{sec:lower}), and then the attacks works by using either least squares or LP decoding techniques. The analysis requires analyzing spectral and geometric properties of the constraint matrix that arises in these attack which we do in Section~\ref{sec:spect}. In Section~\ref{sec:mest}, we present our results on releasing $M$-estimators associated with differentiable loss functions. For clarity, we first discuss the attacks for the special cases of linear and logistic regression estimators (in Section~\ref{sec:linlog}), and then discuss the attacks for the general case (in Section~\ref{sec:genfn}).

\section{Preliminaries} \label{sec:prelim}
\mypar{Notation.} We use $[n]$ to denote the set $\{1 \etc n\}$. $d_H(\cdot,\cdot)$ measures the Hamming distance. Vectors used in the paper are by default column vectors and are denoted by boldface letters. For a vector $\v$, $\v^\top$ denotes its transpose, $\|\v\|$ denotes its Euclidean norm, $\|\v\|_1$ denotes its $\ell_1$ norm, and $\|\v\|_{\infty}$ denotes its $\ell_{\infty}$ norm. For two vectors $\v_1$ and $\v_2$, $\langle \v_1, \v_2 \rangle$ denotes the inner product of $\v_1$ and $\v_2$. We use $(a)_n$ to denote a vector of length of $n$ with all entries equal to $a$. For a matrix $M$, $\|M\|$ denotes the operator norm and $M_i$ denotes the $i$th row of $M$. Random matrices are denoted by boldface capitalized letters. We use $\mathrm{\it diag}(a_1 \etc a_n)$ to denote an $n \times n$ diagonal matrix with entries $a_1 \etc a_n$ along the main diagonal. The notation $\mbox{vert}(\cdot\etc\cdot)$ denotes vertical concatenation of the argument matrices.

Let $M$ be an $N \times n$ real matrix with $N \geq n$. The singular values $\sigma_j(M)$ are the eigenvalues of $\sqrt{M^\top M}$ arranged in non-increasing order. Of particular importance in this paper is the least singular value $\sigma_n(M) = \inf_{\z: \|\z\| =1 } \|M\z\|$. The unit sphere in $n$ dimensions centered at origin is denoted by $S^{n-1} = \{\z \,:\, \|\z\| =1 \}$. 

Our analysis uses random matrices, and we add a subscript of $r$ to differentiate a random matrix from a non-random matrix. As mentioned earlier,  $k$ is a constant in this paper and we often omit dependence on $k$ in our results. 

\subsection{Background on Noisy Linear Systems.}\label{sec:actrecon}
Noisy linear systems arise in a wide variety of statistical and signal-processing contexts. Suppose we are given a matrix $A$ and vector $\z$ such that $\z = A\s+\mathbf{e}$, where $\mathbf{e}$ is assumed to be ``small'' (in a sense defined below). A natural approach to estimating $\s$ is to
output $\hat \s = \mbox{argmin}_{\s}\, \|A\s-\z\|_p$ for some $p\geq 1$. We will consider $p=1$ and $2$; we summarize the assumptions and guarantees for each method below. When it is known that $\s\in\bit{n}$, the attacker can then round the entries of $\hat \s$ to the nearer of $\bit{}$ to improve the estimate.

In the sequel, we call a vector $\mathbf{z} \in \R^m$ \emph{$(a,b)$-small} if at least $1-a$ fraction of its entries have magnitude less than $b$. In other words, for some set $S$, $|S| \geq (1-a) \cdot m$, it is the case that $|z_i| \leq b$ for all $i \in S$. 

\mypar{$\ell_2$ error minimization (``least squares'').} 
Widely used in regression problems, the least squares method guarantees a good approximation to $\s$ when the Euclidean norm $\|\mathbf{e}\|$ is small and $A$ has no small eigenvalues. It was first used in data privacy by Dwork and Yekhanin~\cite{DY08}. For completeness, we present the entire analysis of this attack (in a general setting) here. 

Let $A = P \Sigma Q^\top$ be the singular value decomposition of $A$. Here, $P$ is an orthogonal $m\times m$ matrix,  $\Sigma$ is a  diagonal $m\times n$ matrix, and $Q$ is an orthogonal $n \times n$ matrix. Let $\mathbf{0}_{n \times (m-n)}$ be an $n \times (m-n)$ matrix with all entries zero. Define 
$$\Gamma_{\mathrm{inv}} = (\mathrm{\it diag}(\sigma_{n}(A)^{-1} \etc \sigma_1(A)^{-1})|\mathbf{0}_{n \times (m-n)}).$$ 
The dimension of $\Gamma_{\mathrm{inv}}$ is $n \times m$. Define $A_\mathrm{inv}=Q\Gamma_{\mathrm{inv}}P^\top.$ 

Given $\y$, the adversary uses $A_\mathrm{inv}$ to construct $\hat{\s} = (\hat{s}_1 \etc \hat{s}_n)$ as follows: 
\begin{align*}
\hat{s}_i = \left\{ \begin{array}{rl}
0 &\mbox{ if $i$th entry in $A_\mathrm{inv}\z< 1/2$}, \\
1 &\mbox{ otherwise.}
\end{array} \right.
\end{align*}
In other words, $\hat{\s}$ is obtained by rounding $A_\mathrm{inv}\z$ to closest of $0,1$.

Now  the claim is that $\hat{\s}$ is a good reconstruction of $\s$. The idea behind the analysis is that $A_\mathrm{inv}\z= \s + A_\mathrm{inv}\mathbf{e}$. 
Now (as $P$ and $Q$ are orthogonal matrices, they don't affect the norms),
$$\|A_\mathrm{inv}\mathbf{e}\| =\|Q \Gamma_{\mathrm{inv}}P^\top \mathbf{e}\|  \leq \|\Gamma_{\mathrm{inv}}\| \|P^\top \mathbf{e}\| =  \frac{\|\mathbf{e}\|}{\sigma_n(A)}.$$

Let us assume that $\sigma_n(A) = \sigma$.  If (the absolute value of) all the entries in $\mathbf{e}$ are less than $\beta$ then $\|\mathbf{e}\| =  \beta \sqrt{m}$, and therefore $\|A_\mathrm{inv}\mathbf{e} \| = (\beta \sqrt{m})/\sigma$. In particular, this implies that $A_\mathrm{inv}\mathbf{e}$ cannot have $(4 m \beta^2)/\sigma^2$ entries with absolute value above $1/2$, and therefore the Hamming distance between $\hat{\s}$ and $\s$ is $O(4 m \beta^2)/\sigma^2$ (as the adversary only fails to recover those entries of $\s$ whose corresponding $A_\mathrm{inv}\mathbf{e}$ entries are greater than $1/2$). The time complexity of the attack is dominated by the cost of computing the singular value decomposition of $A$ which takes $O(mn^2)$ time.\!\footnote{SVD decomposition of $N_1 \times N_2$ sized matrix can be done in $O(N_1 N_2^2)$ time.} 
\begin{theorem} \label{thm:KRSU}
Let $A \,:\, \R^{n} \to \R^m$ be a full rank linear map with $(m > n)$ such that the least singular value of $A$ is $\sigma$. Then if $\mathbf{e}$ is $(0,\beta)$ small (that is, if $\|\mathbf{e}\|_\infty \leq \beta$), the vector $\mbox{argmin}_{\s}\, \|A\s-\z\|$, rounded to $\bit{n}$, satisfies
$d_H(\s,\hat{\s}) \leq (4 m \beta^2)/\sigma^2$. In particular, if $\sigma = \Omega(\sqrt{m})$ and $\beta=o(\sqrt{n})$, then $\hat\s$ agrees with a $1-o(1)$ fraction of
$\s$. The attack runs in $O(m n^2)$ time.
\end{theorem}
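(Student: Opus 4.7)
The plan is to identify the least-squares minimizer with the Moore--Penrose pseudo-inverse of $A$, use the least singular value hypothesis to bound the resulting $\ell_2$ reconstruction error, and finally convert that $\ell_2$ bound into a Hamming bound via the rounding step.

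First I would set up the singular value decomposition $A = P\Sigma Q^\top$ as in the preamble, with $P$ orthogonal $m\times m$, $Q$ orthogonal $n\times n$, and $\sigma_1(A)\ge\cdots\ge\sigma_n(A)=\sigma>0$ on the diagonal of the $m\times n$ matrix $\Sigma$. Assembling the pseudo-inverse $A_{\mathrm{inv}} = Q\Gamma_{\mathrm{inv}}P^\top$ so that $\Gamma_{\mathrm{inv}}\Sigma = I_n$, two easily verified facts follow: $A_{\mathrm{inv}} A = I_n$ (which uses that $A$ has full column rank), and $\|A_{\mathrm{inv}}\| = 1/\sigma$ (since the orthogonal factors $P,Q$ preserve operator norms, leaving only the reciprocals of the singular values of $A$). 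These two linear-algebra facts are the only structural inputs to the argument.

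Next I would compute $A_{\mathrm{inv}}\z = A_{\mathrm{inv}}A\s + A_{\mathrm{inv}}\mathbf{e} = \s + A_{\mathrm{inv}}\mathbf{e}$ and set $\mathbf{w} := A_{\mathrm{inv}}\mathbf{e}$. By the operator-norm bound above, $\|\mathbf{w}\| \le \|\mathbf{e}\|/\sigma$, and by the $(0,\beta)$-smallness assumption $\|\mathbf{e}\| \le \beta\sqrt{m}$, so $\|\mathbf{w}\|^2 \le m\beta^2/\sigma^2$. Then I would argue that the componentwise rounding of $A_{\mathrm{inv}}\z = \s + \mathbf{w}$ to the nearer of $0,1$ can misrecover coordinate $i$ only when $|w_i| \ge 1/2$, because any smaller perturbation of a $0/1$-value rounds back correctly. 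A one-line counting bound --- each bad coordinate contributes at least $1/4$ to $\|\mathbf{w}\|^2$ --- then gives $d_H(\s,\hat\s) \le 4\|\mathbf{w}\|^2 \le 4m\beta^2/\sigma^2$. The asymptotic corollary with $\sigma = \Omega(\sqrt m)$ and $\beta = o(\sqrt n)$ is immediate: $d_H = O(\beta^2) = o(n)$.

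To close, I would verify that $A_{\mathrm{inv}}\z$ really is the argmin of $\|A\s-\z\|$: full column rank makes the normal equations $A^\top A\hat\s = A^\top\z$ uniquely solvable, and a direct SVD calculation shows $(A^\top A)^{-1}A^\top\z = A_{\mathrm{inv}}\z$. The runtime is dominated by computing the SVD of the $m\times n$ matrix $A$ in $O(mn^2)$ time; applying $A_{\mathrm{inv}}$ and rounding are lower-order. There is no serious obstacle --- the theorem is a purely deterministic calculation, and the only step that needs care is the pseudo-inverse setup, after which the $\ell_2$-to-Hamming conversion is a trivial counting inequality.
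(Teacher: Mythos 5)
Your proposal is correct and follows essentially the same path as the paper's own proof: SVD, form the pseudo-inverse $A_{\mathrm{inv}}$, bound $\|A_{\mathrm{inv}}\mathbf{e}\|$ by $\|\mathbf{e}\|/\sigma \le \beta\sqrt{m}/\sigma$, and count coordinates of the error vector with magnitude at least $1/2$. The only addition is your explicit check via the normal equations that $A_{\mathrm{inv}}\z$ is indeed the least-squares minimizer, a step the paper leaves implicit; that is a harmless and slightly more careful presentation, not a different argument.
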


\mypar{$\ell_1$ error minimization (``LP decoding'').} In the context of privacy, the ``LP decoding'' approach was first used by Dwork \emph{et al.}\ \cite{DMT07}. (The name stems from the fact that the minimization problem can be cast as a linear program.) The LP attack is slower than the least squares attack but can handle considerably more complex error patterns at the cost of a stronger assumption on $A$.  Recently, De~\cite{De11} gave a simple analysis of this attack based on the geometry of the operator $A$. We need the following definition of Euclidean section.
 
\begin{definition} \label{def:euclidean}
A linear operator $A \,: \, \R^n \to \R^m$ is said to be a $\alpha$-Euclidean section if for all $\s$ in $\R^n$, 
$$\sqrt{m} \|A\s\| \geq \|A\s\|_1 \geq \alpha \sqrt{m} \|A\s\|\,.$$ 
Note that by Cauchy-Schwarz, the first inequality, $\sqrt{m} \|A\s\| \geq \|A\s\|_1$, always holds. We remark that when we say $A$ is Euclidean, we simply mean that
there is some constant $\alpha > 0$ such that $A$ is $\alpha$-Euclidean.
\end{definition}

The following theorem gives a sufficient condition under which LP decoding gives a good approximation to $\s$. The time bound here was derived from the LP algorithm of Vaidya~\cite{Vaidya}, which uses $O(((N_1+N_2)N_2^2 + (N_1+N_2)^{1.5} N_2)N_3)$ arithmetic operations where $N_1$ is the number of constraints, $N_2$ is the number of variables, and $N_3$ is a bound on the number of bits used to describe the entries of $A$.  In our setting, the LP has $n$ variables, $m$ constraints, and $N_3$ could be upper bounded by $mn$, and therefore the LP could be solved in $O(m^2n^3 + m^{2.5}n^2)$ time.

\begin{theorem}[From~\cite{De11}] \label{thm:d11} Let $A \,:\, \R^n \to \R^m$ be a full rank linear map ($m > n$) such that the least singular value of $A$ is $\sigma$. Further, let $A$ be a $\alpha$-Euclidean section. Then there exists a $\gamma=\gamma(\alpha)$ such that if $\mathbf{e}$ is $(\gamma,\beta)$ small, then any solution $\hat{\s} = \mbox{argmin}_{\s}\, \|A\s-\z\|_1$, rounded to $\bit{n}$, satisfies $d_H(\s,\hat{\s}) \leq O(\beta \sqrt{mn}/\sigma)$ where the constant inside the $O(\cdot)$ notation depends on $\alpha$. In particular, if $\sigma = \Omega(\sqrt{m})$ and $\beta=o(\sqrt{n})$, then this attack recovers $1-o(1)$ fraction of $\s$. The attack runs in $O(m^2n^3 + m^{2.5}n^2)$ time.
\end{theorem}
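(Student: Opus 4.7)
The plan is to bound $\|\mathbf{h}\|$ where $\mathbf{h} = \tilde{\s} - \s$ and $\tilde{\s}$ is the unrounded LP minimizer (so that $\hat{\s}$ is obtained from $\tilde{\s}$ by coordinatewise rounding), and then convert the $\ell_2$ bound into Hamming distance via a rounding argument. Three ingredients enter: LP optimality, the Euclidean section property (to absorb the arbitrarily corrupted coordinates), and the least singular value bound (to invert $A$).

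The starting point is optimality: $\|A\tilde{\s} - \z\|_1 \leq \|A\s - \z\|_1 = \|\mathbf{e}\|_1$, which, since $\z = A\s + \mathbf{e}$, rewrites as $\|A\mathbf{h} - \mathbf{e}\|_1 \leq \|\mathbf{e}\|_1$. Partition $[m] = S \sqcup T$ where $S = \{i : |e_i| \leq \beta\}$ has size $\geq (1-\gamma)m$ and $|T| \leq \gamma m$. Applying the reverse triangle inequality on $S$ in the form $|(A\mathbf{h})_i - e_i| \geq |(A\mathbf{h})_i| - |e_i|$ and on $T$ in the form $|(A\mathbf{h})_i - e_i| \geq |e_i| - |(A\mathbf{h})_i|$, then summing and rearranging yields
\[
\|A\mathbf{h}\|_1 \;\leq\; 2\sum_{i \in S}|e_i| + 2\sum_{i \in T}|(A\mathbf{h})_i| \;\leq\; 2\beta m + 2\sum_{i \in T}|(A\mathbf{h})_i|.
\]
To control the $T$-sum, I would apply Cauchy--Schwarz followed by the Euclidean section property: $\sum_{i \in T}|(A\mathbf{h})_i| \leq \sqrt{|T|}\,\|A\mathbf{h}\| \leq \sqrt{\gamma m}\cdot\|A\mathbf{h}\|_1/(\alpha\sqrt{m}) = (\sqrt{\gamma}/\alpha)\,\|A\mathbf{h}\|_1$. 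Substituting back gives $(1 - 2\sqrt{\gamma}/\alpha)\|A\mathbf{h}\|_1 \leq 2\beta m$; fixing $\gamma = \gamma(\alpha) < \alpha^2/16$ forces the leading coefficient to be at least $1/2$, yielding $\|A\mathbf{h}\|_1 = O(\beta m/\alpha)$. Using the Euclidean section once more gives $\|A\mathbf{h}\| = O(\beta\sqrt{m}/\alpha)$, and the least singular value inequality $\sigma\|\mathbf{h}\| \leq \|A\mathbf{h}\|$ then yields $\|\mathbf{h}\| = O(\beta\sqrt{m}/(\alpha\sigma))$.

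To conclude, I would translate $\|\mathbf{h}\|$ into Hamming distance: any coordinate on which the rounded $\hat{\s}$ disagrees with $\s$ must satisfy $|h_i| \geq 1/2$, so it contributes at least $1/2$ to $\|\mathbf{h}\|_1$; hence $d_H(\s,\hat{\s}) \leq 2\|\mathbf{h}\|_1 \leq 2\sqrt{n}\,\|\mathbf{h}\| = O(\beta\sqrt{mn}/\sigma)$ as claimed (with the hidden constant depending on $\alpha$). The runtime statement follows by writing the $\ell_1$-minimization as a standard LP (introduce slack variables $t_i$ with constraints $-t_i \leq (A\s - \z)_i \leq t_i$ and minimize $\sum_i t_i$) on $O(n)$ variables and $O(m)$ constraints and invoking Vaidya's algorithm. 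I expect the delicate step to be the calibration between $\gamma$ and $\alpha$: the arbitrarily corrupted coordinates in $T$ can push individual entries of $A\mathbf{h}$ to be very large, and only the Euclidean section hypothesis (which forces the $\ell_1$-mass of $A\mathbf{h}$ to be spread out) keeps their aggregate contribution a constant fraction of $\|A\mathbf{h}\|_1$, which is precisely why $\gamma$ must be taken small in $\alpha^2$.
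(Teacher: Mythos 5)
Your reconstruction is correct; the paper itself only states this theorem with a citation to De~\cite{De11} and does not reproduce a proof, but what you give is essentially the standard ``geometry of the operator'' argument De's analysis rests on (and which the paper alludes to when introducing the theorem). The decomposition of the optimality inequality $\|A\mathbf{h}-\mathbf{e}\|_1\le\|\mathbf{e}\|_1$ across the well-measured set $S$ and the corrupted set $T$, the Cauchy--Schwarz plus Euclidean-section absorption of the $T$ contribution, the calibration $\gamma<\alpha^2/16$, the passage $\|A\mathbf{h}\|_1\to\|A\mathbf{h}\|\to\|\mathbf{h}\|$ via the section property and the least singular value, and finally the rounding bound $d_H\le 2\|\mathbf{h}\|_1\le 2\sqrt{n}\|\mathbf{h}\|$ are all correct and yield exactly the claimed $O(\beta\sqrt{mn}/\sigma)$ with constant depending on $\alpha$. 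One small remark: your own chain actually gives the slightly sharper $d_H\le 4\|\mathbf{h}\|^2=O(\beta^2 m/(\alpha^2\sigma^2))$ for free (square the $\ge 1/2$ per-coordinate bound rather than using $\ell_1$), which in the regime $\sigma=\Omega(\sqrt m)$ is $O(\beta^2)$ rather than $O(\beta\sqrt n)$; both are $o(n)$ under the stated hypotheses, so this is a refinement rather than a necessity. The LP runtime via slack variables and Vaidya's algorithm is also the intended justification.
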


\section{Releasing Evaluations of Non-Degenerate Boolean Functions} \label{sec:multilinear}
In this section, we analyze privacy lower bounds for releasing evaluations of non-degenerate boolean functions. We use the following standard definition of representing polynomial (see Appendix~\ref{app:boolprelim} for a background about representing boolean functions as multilinear polynomials).
\begin{definition}
A polynomial $P^{(f)}$ over the reals represents a function $f$ over $\{0,1\}^{k+1}$ if $f(x_1 \etc x_{k+1})  = P^{(f)}(x_1 \etc x_{k+1})$ for all $(x_1 \etc x_{k+1}) \in \{0,1\}^{k+1}$.
\end{definition}
\begin{definition}\label{def:deg1}
A function $f: \{0,1\}^{k+1} \to \{0,1\}$ is non-degenerate iff it can be written as a multilinear polynomial of degree $k+1$.
\end{definition}
If $f$ is non-degenerate, then it depends on all of its $k+1$ variables. Note that non-degenerate functions constitute  a large class of functions.\!\footnote{A simple counting argument shows that among the $2^{2^{k+1}}$ boolean functions over $k+1$ variables, $2^{2^{k+1}} - \binom{2^{k+1}}{2^{k}}$ are non-degenerate.} For example, it includes widely used boolean functions like AND, OR, XOR, MAJORITY, and depth $k+1$ decision trees \cite{boolfunc}. 
 
\mypar{Problem Statement.} Let $f : \{0,1\}^{k+1} \to \{0,1\}$ represent the function that we want to evaluate on a database $D$. Let $D=(\D|\s) \in (\{0,1\})_{n \times (d+1)}$, where $\D \in (\{0,1\})_{n \times d}$ and $\s \in \{0,1\}^n$. Let $\D=(\d_{i,j})$, i.e., $\d_{i,j}$ denotes the $(i,j)$th entry in $\D$ (with $1 \leq i \leq n$ and $1 \leq j \leq d$). Let
\begin{align*}
& J =(j_1 \etc j_{k})  \in \{1 \etc d \}^{k} \\ &(\mbox{where }  \{1 \etc d\}^k = \underbrace{\{1 \etc d\} \times \dots \times \{1 \etc d\}}_{k \mbox{ times}}). 
\end{align*}
Note that $J$ allows repeated entries.\!\footnote{We allow repeated entries for convenience of notation. Our results also hold if we use the more natural $J \subseteq [d]$, $|J| = k$.} Let $D|_J$ be the submatrix of $D$ restricted to columns indexed by $J$. For a fixed $J$, define $F(D|_J)$~as
$$F(D|_J) = \sum_{i=1}^n f(\d_{i,j_1} \etc \d_{i,j_k},s_i),\;\;\; J = (j_1 \etc j_{k}).$$
Note that $F(D|_J)$ is an integer between $0$ to $n$.  Let $\Sigma_f(D)$ be the vector obtained by computing $F$ on all different $D|_J$'s: $\Sigma_f(D) = ( F(D|_J) ) \mbox{ where } J \in \{1 \etc d\}^{k}$. Note that $\Sigma_f(D)$ is a vector of length $d^{k}$. The goal is to understand how much noise is needed to attribute privately release $\Sigma_f(D)$ (or $\Sigma_f(D)/n$) when $f$ is non-degenerate. 

\mypar{Our Results.} We prove the following results using the $\ell_2$ and $\ell_1$ error minimization attacks outlined in Section~\ref{sec:actrecon}.
\begin{theorem}[Informal Statements] \label{thm:multlin} 
Let $f:\{0,1\}^{k+1} \to \{0,1\}$ be a non-degenerate boolean function. Then 
\begin{CompactEnumerate}
\item \label{part1} any mechanism which for every database $D \in (\{0,1\})_{n \times (d+1)}$ with $n \ll  d^{k}$ releases $\Sigma_f(D)$ by adding $o(\sqrt{n})$ (or releases $\Sigma_f(D)/n$ by adding $o(1/\sqrt{n})$) noise to each entry is attribute non-private. The attack that achieves this non-privacy violation runs in  $O(d^{k} n^2)$ time.
\item \label{part2} there exists a constant $\gamma > 0$ such that any mechanism which for every database $D \in (\{0,1\})_{n \times (d+1)}$ with $n \ll d^{k}$ releases $\Sigma_f(D)$ by adding $o(\sqrt{n})$ (or releases $\Sigma_f(D)/n$ by adding $o(1/\sqrt{n})$) noise to at most $1-\gamma$ fraction of the entries is attribute non-private. The attack that achieves this non-privacy violation runs in $O(d^{2k} n^3 + d^{2.5 k}n^2)$~time.
\end{CompactEnumerate}
\end{theorem}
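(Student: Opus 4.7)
The plan has three parts: (i) linearize the release in $\s$ via the Boolean identity from the introduction, (ii) analyze the resulting $d^k\times n$ constraint matrix $A$, and (iii) invoke Theorems~\ref{thm:KRSU} and~\ref{thm:d11}. I will take $\D$ drawn uniformly at random from $\{0,1\}^{n\times d}$, so that showing the required properties of $A$ hold with probability $1-o(1)$ over this choice produces the existential $\D$ demanded by Definition~\ref{def:apy}. For $s_i\in\{0,1\}$ and any $J=(j_1,\ldots,j_k)$ one has $f(\d_{i,J},s_i)=f(\d_{i,J},0)+s_i\,g_J(\d_{i,J})$ with $g_J(\mathbf{x}):=f(\mathbf{x},1)-f(\mathbf{x},0)$; summing over $i$ and stacking over $J\in\{1,\ldots,d\}^k$ rewrites $\Sigma_f(D)+\mathbf{e}$ as $A\s+\mathbf{e}+\mathbf{c}$, where $\mathbf{c}$ is a constant vector known to the adversary and $A_{J,i}=g_J(\d_{i,J})$. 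The adversary then faces a noisy linear system $A\s\approx\z$ with $A\in\R^{d^k\times n}$ and $\|\mathbf{e}\|_\infty=o(\sqrt n)$ (the normalized release with per-entry noise $o(1/\sqrt n)$ rescales to the same situation after multiplying by $n$).

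\emph{Structure of $A$.} Expand $f$ as its unique multilinear polynomial. Non-degeneracy says the coefficient $c$ of the monomial $x_1\cdots x_{k+1}$ is nonzero; since every monomial of $f$ not containing $x_{k+1}$ cancels in $g_J$, the polynomial $g_J$ is multilinear of degree $\le k$ in its $k$ arguments, with leading coefficient $c$ on $x_1\cdots x_k$. Thus each row of $A$ has a dominant contribution $c\prod_{\ell=1}^k \d_{i,j_\ell}$---precisely a row of the random row-product matrix of~\cite{R11,KRSU10}---plus correlated lower-degree multilinear corrections in the public coordinates.

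\emph{Spectral and geometric properties, and decoding.} In Section~\ref{sec:spect} I will show that, under the random choice of $\D$, matrices of this broader form---random row-products perturbed by lower-degree multilinear corrections with non-vanishing top-degree coefficient---still satisfy $\sigma_n(A)=\Omega(\sqrt{d^k})$ and are an $\Omega(1)$-Euclidean section with probability $1-o(1)$. The hypothesis $c\neq 0$ is essential: it guarantees that the top-order multilinear component of each row is non-negligible, driving both estimates through an extension of the row-product analysis of~\cite{R11}. Given these two properties, Theorem~\ref{thm:KRSU} with $m=d^k$, $\sigma_n(A)=\Omega(\sqrt m)$, and $\|\mathbf{e}\|_\infty=o(\sqrt n)$ yields $d_H(\hat\s,\s)=o(n)$ in time $O(d^k n^2)$, proving part~(\ref{part1}); part~(\ref{part2}) follows by replacing least-squares by the LP decoder of Theorem~\ref{thm:d11}, which tolerates a constant fraction $\gamma$ of arbitrarily corrupted entries thanks to the Euclidean section bound, at cost $O(d^{2k}n^3+d^{2.5k}n^2)$.

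\emph{Main obstacle.} The technical heart is the spectral analysis. The attack of~\cite{KRSU10} handled only the indicator function $f=\mathbf{1}[x=(b_1,\ldots,b_{k+1})]$, where $g_J$ equals its top-degree monomial and $A$ is literally a random row-product matrix, so~\cite{R11} applied verbatim. For a general non-degenerate $f$, the lower-degree corrections in $g_J$ have operator norm comparable to the leading piece, so a naive Weyl-type perturbation argument does not suffice. The main work in Section~\ref{sec:spect} is therefore to broaden the arguments of~\cite{R11} to matrices whose rows are arbitrary multilinear polynomials in the public coordinates with a top-degree coefficient bounded away from zero---the class isolated by the non-degeneracy hypothesis.
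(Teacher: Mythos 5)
Your linearization and overall structure match the paper's exactly: $g_J = f(\cdot,1)-f(\cdot,0)$ is the paper's $f_2$, your constraint matrix $A$ is the paper's row-function matrix $\Pi_{f_2}(\T\etc\T)$, and you correctly pinpoint non-degeneracy as forcing $g_J$ to have full degree $k$ with nonzero top coefficient. But there is a genuine wrong step for part~\ref{part2}: you claim the $\{0,1\}$-entry matrix $A=\Pi_{f_2}(\BT\etc\BT)$ is an $\Omega(1)$-Euclidean section, and the paper explicitly notes (footnote in Section~\ref{sec:lower}) that it is \emph{not}, precisely because $\BT$ is non-centered and the lower-degree terms of $f_2$ give $\norm{\Pi_{f_2}(\BT\etc\BT)}\approx d^k$ rather than $\Theta(\sqrt{d^k})$, so the ratio $\norm{A\x}_1/(\sqrt{d^k}\norm{A\x})$ collapses. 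To make LP decoding go through, the paper shifts to centered $\pm1$ variables $\Phi=2\T-\mathbf{1}$, transfers $f$ to $g=2f-1$, decomposes $g$ to obtain $g_2$, rewrites the release linearly in $\Pi_{g_2}(\Phi\etc\Phi)$ (Equations~\eqref{eqn:sigmag}--\eqref{eqn:newlp}), and then proves the Euclidean section property for that matrix (Theorem~\ref{thm:euclidean}, using the norm bound of Theorem~\ref{thm:norm}). Without this change of variables, part~\ref{part2} does not follow from your $A$.

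Separately, for part~\ref{part1} you assert $\sigma_n(A)=\Omega(\sqrt{d^k})$ by ``broadening the arguments of~\cite{R11}'' but do not say how, while correctly observing that a Weyl-type perturbation argument fails because the lower-order multilinear corrections are not small. The actual mechanism in the paper is not a perturbation bound at all but a telescoping finite-difference identity (Proposition~\ref{prop:derivatives}, Corollary~\ref{cor:derivatives}): splitting $\BT$ into $2k$ disjoint row blocks $\BT_j^0,\BT_j^1$, the genuine row-product $(\BT_1^1-\BT_1^0)\odot\cdots\odot(\BT_k^1-\BT_k^0)$ equals $c_h(k)\sum_{I\subseteq[k]}(-1)^{|I|}\Pi_h(\BT_{(1)}(I)\etc\BT_{(k)}(I))$, each summand being a submatrix of $\Pi_h(\BT\etc\BT)$. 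The triangle inequality then transfers the $\ell_1$ lower bound of~\cite{R11} for row products to $\Pi_h(\BT\etc\BT)$ (Theorem~\ref{thm:L1}), and Cauchy--Schwarz gives the singular value bound (Theorem~\ref{thm:least}). This algebraic identity, not a norm perturbation, is where the nonzero top coefficient $1/c_h(k)$ is used; you should make that the centerpiece of the spectral analysis rather than leaving it as a promissory note.
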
 
For convenience of notation, in this section, we work mostly with the transpose of $\D$. Let $\T = \D^\top$. So $\T$ is a $d \times n$ matrix. 
\subsection{Reducing to a Linear Reconstruction Problem.} \label{sec:lower}
In this section, we reduce the problem of releasing $\Sigma_f(D)$ for a database $D$ into a linear reconstruction problem. First, we define a simple decomposition of boolean functions. Consider a non-degenerate boolean function $f \,: \{0,1\}^{k+1} \to \{0,1\}$. Now  there exists two function $f_0 \,:\, \{0,1\}^{k} \to \{0,1\}$ and $f_1 \,:\, \{0,1\}^{k} \to \{0,1\}$ such that
\[ f(\d_1 \etc \d_{k+1}) = f_0(\d_1 \etc \d_{k})(1-\d_{k+1}) + f_1(\d_1 \etc \d_{k}) \d_{k+1} \;\;\;\; \forall (\d_1 \etc \d_{k+1}) \in \{0,1\}^{k+1}. \]
This can be re-expressed as
\[ f(\d_1 \etc \d_{k+1})  = f_0(\d_1 \etc \d_{k}) + (f_1(\d_1 \etc \d_{k}) -f_0(\d_1 \etc \d_{k})) \d_{k+1}. \]
Define $f_{2}(\d_1 \etc \d_{k}) = f_1(\d_1 \etc \d_{k}) -f_0(\d_1 \etc \d_{k})$.  Therefore, 
\begin{equation}  \label{eqn:fdecomp} f(\d_1 \etc \d_{k+1})  = f_0(\d_1 \etc \d_{k}) + f_{2}(\d_1 \etc \d_{k}) \d_{k+1}. \end{equation}
Note that $f_{2}$ is a function from $\{0,1\}^{k} \to \{-1,0,1\}$. Since both $f_0$ and $f_1$ are both boolean functions and can be represented as multilinear polynomials over the variables $\d_1 \etc \d_{k}$, therefore $f_{2}$ also could be represented as a multilinear polynomial over the variables $\d_1 \etc \d_{k}$. Since $f$ is represented by a multilinear polynomial of degree $k+1$, therefore, the multilinear polynomial representing $f_{2}$ has degree $k$ (if it has any lower degree, then $f$ could be represented as multilinear polynomial of degree strictly less than $k+1$, which is a contradiction). 
To aid our construction, we need to define a particular function of matrices. 
\begin{definition}[Row Function Matrix] \label{def:funcmat}
Let $h$ be a function from $\{0,1\}^k \to \{-1,0,1\}$. Let $\T_{(1)}=(\d^{(1)}_{i,j}),\T_{(2)}=(\d^{(2)}_{i,j}) \etc \T_{(k)}=(\d^{(k)}_{i,j})$ be $k$ matrices with $\{0,1\}$ entries and dimensions $d \times n$. Define a row function matrix (of dimension $d^k \times n$) $\Pi_h(\T_{(1)} \etc \T_{(k)})$ as follows. Any row of this matrix will correspond to a sequence 
$$J = (j_1, j_2 \etc j_k) \in \{1 \etc d \}^k$$ 
of $k$ numbers, so the entries of $\Pi_h(\T_{(1)} \etc \T_{(k)})$ will be denoted\footnote{The definition assumes a certain order of the rows of the matrix $\Pi_h(\T_{(1)} \etc \T_{(k)})$. This order, however, is not important, for our analysis. Note that changing the relative positions of rows of a matrix doesn't affect its eigenvalues and singular values.} by $\pi_{J,a}$, where $a \in \{1 \etc n\}$.  For $J = (j_1, j_2 \etc j_k)$ the entries of the matrix $\Pi_h(\T_{(1)} \etc \T_{(k)})$ will be defined by the relation
\[ \pi_{J,a}=h(\d^{(1)}_{j_1,a}, \d^{(2)}_{j_2,a} \etc \d^{(k)}_{j_k,a}). \] 
\end{definition}
The row product matrices from~\cite{KRSU10} (see Definition~\ref{def:entrywise}) is a particular example of this construction where the function $h(\d_1, \d_2 \etc \d_k)= \d_1\cdot \d_2 \cdot \ldots \cdot \d_k,$, which implies that $\Pi_h(\T_{(1)} \etc \T_{(k)}) = \T_{(1)} \odot \dots \odot \T_{(k)}$, where $\odot$ is the row-product operator from Definition~\ref{def:entrywise}. 

Let $D=(\D|\s)$ be a database, and let $\T = \D^\top$. Let $\D =(\d_{i,j})$. Consider any fixed $J = (j_1 \etc j_{k}) \in \{1 \etc d\}^{k}$. Now for this $J$, there exists an entry in $\Sigma_f(D)$ equaling $\sum_{i=1}^n f(\d_{i,j_1} \etc \d_{i,j_{k}},s_i)$. Now consider the matrices $\Pi_{f_{2}}(\T \etc \T)$ and $\Pi_{f_0}(\T \etc \T)$. Consider the rows in $\Pi_{f_0}(\T \etc \T)$ and $\Pi_{f_{2}}(\T \etc \T)$ corresponding to this above $J$. Let this be the $l$th row in these matrices. Then the $l$th row of the matrix $\Pi_{f_0}(\T \etc \T)$ has $n$ entries equaling $f_0(\d_{i,j_1} \etc \d_{i,j_{k}})$ and the $l$th row of the matrix $\Pi_{f_{2}}(\T \etc \T)$ has $n$ entries equaling $f_{2}(\d_{i,j_1} \etc \d_{i,j_{k}})$ for $i=1 \etc n$. Since 
\begin{equation*}f(\d_{i,j_1} \etc \d_{i,j_{k}},s_i) = f_0(\d_{i,j_1} \etc \d_{i,j_{k}}) \\ +f_{2}(\d_{i,j_1} \etc \d_{i,j_{k}})s_i,\end{equation*} it follows that 
\begin{align*}  \sum_{i=1}^n f(\d_{i,j_1} \etc \d_{i,j_{k}},s_i) & = \sum_{i=1}^n f_0(\d_{i,j_1} \etc \d_{i,j_{k}})+f_{2}(\d_{i,j_1} \etc \d_{i,j_{k}})s_i  \\
&  = \langle \Pi_{f_0}(\T \etc \T)_l, \mathbf{1}_n  \rangle + \langle \Pi_{f_{2}}(\T \etc \T)_l, \s \rangle, \end{align*}
where $\Pi_{f_0}(\T \etc \T)_l$ and $\Pi_{f_{2}}(\T \etc \T)_l$ denote the $l$th row of matrices $\Pi_{f_0}(\T \etc \T)$ and $\Pi_{f_{2}}(\T \etc \T)$ respectively and $\mathbf{1}_n$ denotes the vector $(1)^n$.
Now define a vector $H_{f}(D)$ whose $l$th element ($1 \leq l \leq d^k$) is
\begin{equation}\label{eqn:hf}   H_{f}(D)_l  = \langle \Pi_{f_0}(\T \etc \T)_l, \mathbf{1}_n  \rangle + \langle \Pi_{f_{2}}(\T \etc \T)_l, \s  \rangle.\end{equation}
The length of vector $H_{f}(D)$ is $d^{k}$. The above arguments show that all the entries of $H_f(D)$ are contained in the vector $\Sigma_f(D)$. Since every row in these $\Sigma_f(D)$ correspond to some $J$, it also follows that all the entries of $\Sigma_f(D)$ are contained in the vector $H_f(D)$, implying the following claim.
\begin{claim} \label{claim:eq}
$\Sigma_f(D) = H_{f}(D)$.\!\footnote{Under proper ordering of both the vectors.}
\end{claim}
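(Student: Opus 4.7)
The plan is to verify the identity coordinate by coordinate. Both $\Sigma_f(D)$ and $H_f(D)$ are vectors of length $d^k$ whose coordinates are naturally indexed by the tuples $J = (j_1 \etc j_k) \in \{1 \etc d\}^k$, so once a consistent ordering on these tuples is fixed (the point of the footnote on the claim), it suffices to prove equality of the entries corresponding to the same $J$.

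I would fix an arbitrary such $J$ and let $l$ denote the index it corresponds to in both vectors. First, unfold the left-hand side from its definition, $\Sigma_f(D)_l = F(D|_J) = \sum_{i=1}^{n} f(\d_{i,j_1} \etc \d_{i,j_k}, s_i)$, and then apply the decomposition in Equation~(\ref{eqn:fdecomp}), with $\d_{k+1}$ specialized to $s_i$, to each summand. This rewrites the above as
\[
\sum_{i=1}^{n} f_0(\d_{i,j_1} \etc \d_{i,j_k}) \;+\; \sum_{i=1}^{n} f_2(\d_{i,j_1} \etc \d_{i,j_k})\, s_i.
\]

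The remaining task is to recognize these two sums as the inner products appearing in Equation~(\ref{eqn:hf}). Specializing Definition~\ref{def:funcmat} with $\T_{(1)} = \cdots = \T_{(k)} = \T = \D^\top$, the row of $\Pi_h(\T \etc \T)$ indexed by $J$ has $a$-th entry $h(\d_{a,j_1} \etc \d_{a,j_k})$, using that the $(j,a)$ entry of $\T$ equals the $(a,j)$ entry of $\D$. Substituting $h = f_0$ and $h = f_2$, the two sums above become exactly $\langle \Pi_{f_0}(\T \etc \T)_l, \mathbf{1}_n \rangle$ and $\langle \Pi_{f_2}(\T \etc \T)_l, \s \rangle$ respectively, and their sum is $H_f(D)_l$ by Equation~(\ref{eqn:hf}).

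There is no real mathematical obstacle in this argument: the claim is essentially a repackaging of the decomposition $f = f_0 + f_2 \cdot \d_{k+1}$ into inner-product notation via the row function matrices from Definition~\ref{def:funcmat}. The only things to be careful about are the bookkeeping (the $J$-to-$l$ correspondence must be identical in both vectors, as flagged in the footnote) and the transpose between $\D$ and $\T$; both are purely notational and involve no new content beyond the two defining equations already established.
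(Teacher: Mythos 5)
Your proof is correct and matches the paper's own argument: fix a tuple $J$, expand $F(D|_J)$ via the decomposition $f = f_0 + f_2 \cdot \d_{k+1}$ from Equation~\eqref{eqn:fdecomp}, and identify the two resulting sums as the inner products in Equation~\eqref{eqn:hf} using Definition~\ref{def:funcmat}. The only difference is cosmetic: you make the transpose bookkeeping between $\D$ and $\T$ explicit, which the paper leaves implicit.
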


\mypar{Setting up the Least Squares Attack.} The privacy mechanism releases a noisy approximation to $\Sigma_f(D)$. Let $\y=(y_1 \etc y_{d^k})$ be this noisy approximation. 
The adversary tries to reconstruct an approximation of $\s$ from $\y$. Let $b_{f_i} = \langle \Pi_{f_0}(\T \etc \T)_i, \mathbf{1}_n  \rangle$, and $\mathbf{b}_f =(b_{f_1} \etc b_{f_{d^k}})$. Given $\y$, the adversary solves the following linear reconstruction problem:
\begin{align} \label{eqn:reconprob} \y = \mathbf{b}_f + \Pi_{f_{2}}(\T \etc \T)\s. \end{align}
In the setting of attribute non-privacy the adversary knows $\T$, and therefore can compute $\Pi_{f_{2}}(\T \etc \T)$ and $\Pi_{f_0}(\T \etc \T)$ (hence, $\b_f$). The goal of the adversary is to compute a large fraction of $\s$ given $\y$. The definition of iterated logarithm $(\log_{(q)})$ is given in Definition~\ref{defn:iterlog}. In the below analysis, we use a random matrix $\BT$ and the least singular value lower bound on a random row function matrix from Theorem~\ref{thm:least}. We use boldface letters to denote random matrices.
\begin{theorem} [Part~\ref{part1}, Theorem~\ref{thm:multlin}] \label{thm:ls}
Let $f:\{0,1\}^{k+1} \to \{0,1\}$ be a non-degenerate boolean function and $n \leq c d^k/\log_{(q)}d$ for a constant $c$ depending only on $k,q$.  Any privacy mechanism which for every database $D \in (\{0,1\})_{n \times (d+1)}$ releases $\Sigma_f(D)$  by adding $o(\sqrt{n})$ (or releases $\Sigma_f(D)/n$ by adding $o(1/\sqrt{n})$) noise to each entry is attribute non-private. The attack that achieves this non-privacy violation runs in $O(d^{k} n^2)$~time.
\end{theorem}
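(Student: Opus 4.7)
The plan is to cast the release as a noisy linear system in $\s$ by using exactly the reduction built up in Section~\ref{sec:lower}, and then invoke the least squares attack (Theorem~\ref{thm:KRSU}). Let $\y \in \R^{d^k}$ be the mechanism's output, so $\y = \Sigma_f(D) + \mathbf{e}$ with $\|\mathbf{e}\|_\infty = o(\sqrt{n})$. By Claim~\ref{claim:eq} together with~\eqref{eqn:hf}, we can rewrite this as $\y = \b_f + A\s + \mathbf{e}$ where $A := \Pi_{f_{2}}(\T,\ldots,\T)$ and $\b_f$ is the vector whose $l$th entry is $\langle \Pi_{f_0}(\T,\ldots,\T)_l, \mathbf{1}_n\rangle$. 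Because the adversary in the attribute-privacy model knows $\T$, both $A$ and $\b_f$ are computable by the adversary, so she forms $\z = \y - \b_f = A\s + \mathbf{e}$ and runs the least squares decoder on $(A,\z)$, finally rounding to $\bit{n}$.

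To invoke Theorem~\ref{thm:KRSU} with $m = d^k$ and $\beta = o(\sqrt{n})$, I need a matching lower bound $\sigma_n(A) = \Omega(\sqrt{m}) = \Omega(\sqrt{d^k})$. I would obtain this by the probabilistic method: pick $\T$ to be a realization of a random matrix $\BT$ with i.i.d.\ uniform $\{0,1\}$ entries, so that $A = \Pi_{f_2}(\BT,\ldots,\BT)$ is a random row function matrix in the sense of Definition~\ref{def:funcmat}. The function $f_2$ has multilinear degree exactly $k$ — this follows from the decomposition~\eqref{eqn:fdecomp} together with the hypothesis that $f$ is non-degenerate, since otherwise $f$ itself would collapse to multilinear degree $\le k$, contradicting Definition~\ref{def:deg1}. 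This non-degeneracy of $f_2$ is exactly the input needed by the spectral bound on random row function matrices (Theorem~\ref{thm:least} in Section~\ref{sec:spect}), which under the regime $n \leq c\, d^k/\log_{(q)} d$ yields $\sigma_n(A) \geq c'\sqrt{d^k}$ with probability $1-o(1)$. Fixing any $\T$ in the (nonempty) good event then gives a deterministic choice of the non-sensitive attributes satisfying Definition~\ref{def:apy}.

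With this $\T$ fixed, plugging into Theorem~\ref{thm:KRSU} gives
\[
 d_H(\hat\s,\s) \;\leq\; \frac{4 m \beta^2}{\sigma_n(A)^2} \;=\; O\!\left(\frac{d^k \cdot o(n)}{d^k}\right) \;=\; o(n),
\]
so the attack recovers a $1-o(1)$ fraction of the sensitive bits, failing $(\alpha,\beta)$-attribute privacy for $\alpha = o(1)$. For the normalized release of $\Sigma_f(D)/n$ with per-entry noise $o(1/\sqrt{n})$, the adversary simply multiplies the output by $n$, obtaining noise $o(\sqrt{n})$ per entry on $\Sigma_f(D)$, and reruns the same attack. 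The running time is dominated by computing the SVD of $A$, which is $O(m n^2) = O(d^k n^2)$.

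The main obstacle is the singular-value bound $\sigma_n(A) = \Omega(\sqrt{d^k})$: once one has it, the rest is a mechanical plug-in of Theorem~\ref{thm:KRSU}. The nontrivial content is arguing that the rows of $A$, which are highly correlated (since the same columns of $\T$ are reused across coordinates of $J$), still behave spectrally like a matrix of independent entries. Establishing this for the full class of non-degenerate $f_2$ — as opposed to only the pure product function $h(\d_1,\ldots,\d_k) = \d_1\cdots\d_k$ handled in~\cite{KRSU10} — is the role of Theorem~\ref{thm:least}, and I would treat that as a black box at this stage, deferring its proof to Section~\ref{sec:spect}.
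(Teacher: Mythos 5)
Your proposal is correct and matches the paper's own proof essentially step for step: the same reduction via Claim~\ref{claim:eq} to the linear system $\y = \b_f + \Pi_{f_2}(\T,\ldots,\T)\s + \mathbf{e}$, the same choice of $\BT$ with i.i.d.\ Bernoulli$(1/2)$ entries, the same observation that non-degeneracy of $f$ forces $f_2$ to have multilinear degree exactly $k$, and the same invocation of Theorem~\ref{thm:least} for $\sigma_n(\Pi_{f_2}(\BT,\ldots,\BT)) = \Omega(\sqrt{d^k})$ followed by the plug-in to Theorem~\ref{thm:KRSU}. (You also correctly supply the reference to Theorem~\ref{thm:KRSU} where the paper's text contains a typo citing Theorem~\ref{thm:least} a second time.)
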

\begin{proof}
Consider the least squares attack outlined in Theorem~\ref{thm:KRSU} on Equation~\eqref{eqn:reconprob}. Let $\BT$ be a random matrix of dimension $d \times n$ with independent Bernoulli entries taking values $0$ and $1$ with probability $1/2$, and let database $D=(\BT^\top|\s)$ for some $\s \in \{0,1\}^n$.  For analyzing the attack in Theorem~\ref{thm:KRSU}, we need a lower bound on the least singular value of $\Pi_{f_{2}}(\BT \etc \BT)$. The following claim follows from Theorem~\ref{thm:least}. Note that $f_{2}$ is a function over $k$ variables.
\begin{claim} \label{claim:least}
For function $f_{2}$ defined above and $n \leq c d^k/\log_{(q)}d$ (where $c$ is the constant from Theorem~\ref{thm:L1}), the matrix $\Pi_{f_{2}}(\BT \etc \BT)$ satisfies
\begin{align*} 
& \Pr \left[ \sigma_n(\Pi_{f_{2}}(\BT \etc \BT)) \le C' \sqrt{d^k}   \right] \le  c_1 \exp \left( - c_2 d  \right). \end{align*}
\end{claim}
\begin{proof}
Apply Theorem~\ref{thm:least} with function $h = f_2$. 
\end{proof}
Claim~\ref{claim:least} shows that with exponentially high probability $\sigma_n(\Pi_{f_{2}}(\BT \etc \BT)) = \Omega(\sqrt{d^{k}})$.  Invoking Theorem~\ref{thm:least} with $m=d^k$ and $\beta=o(\sqrt{n})$ shows that with exponentially high probability the adversary fails to recover only $o(n)$ entries of $\s$. Noise bound of $o(\sqrt{n})$ for releasing $\Sigma_f(D)$ translates into a noise bound of $o(1/\sqrt{n})$ for releasing $\Sigma_f(D)/n$. 
\end{proof}

\mypar{Setting up the LP Decoding Attack.} The LP decoding attack solves a slightly different reconstruction problem than Equation~\eqref{eqn:reconprob}. The reason is because  $\Pi_{f_{2}}(\BT \etc \BT)$ is not a Euclidean section\footnote{If we take a matrix $\BT$ of dimension $d \times n$ with independent Bernoulli entries taking values $0$ and $1$ with probability $1/2$, the resulting matrix $\Pi_{f_{2}}(\BT \etc \BT)$ is not a Euclidean section. This is because the matrix $\BT$ is non-centered (expectation of each entry in the matrix is $1/2$) which makes the $\norm{\Pi_{f_{2}}(\BT \etc \BT)}$ to be $\approx d^k$ instead of $\sqrt{d^k}$. } (a property needed for applying Theorem~\ref{thm:d11}). However, we show that a related reconstruction problem has all the properties needed for the LP decoding attack. The analysis goes via matrices with $\{1,-1\}$ entries  which have the desired properties. We establish the Euclidean section property in Appendix~\ref{app:euclid}.

Let $D=(\D|\s)$ be a database, and let $\T=\D^\top$. Let $\Phi = 2\T - \mathbf{1}_{d \times n}$ where $\mathbf{1}_{d \times n}$ is a $d \times n$ matrix of all $1$'s. Define $g \,:\, \{-1,1\}^{k+1} \to \{-1,1\}$ as
\begin{equation} \label{eqn:fg}  g(\phi_1 \etc \phi_{k+1}) =2f \left(\frac{1+\phi_1}{2} \etc \frac{1+\phi_{k+1}}{2} \right) -1. \end{equation}
We can decompose $g$ as
\begin{equation*} g(\phi_1 \etc \phi_{k+1}) =  \left (\frac{1+\phi_{k+1}}{2} \right ) g_1(\phi_1 \etc \phi_{k})  + \left ( \frac{1-\phi_{k+1}}{2} \right ) g_{-1}(\phi_1 \etc \phi_{k}),\end{equation*}
where $g_1 : \{-1,1\}^{k} \to \{-1,1\}$ and $g_{-1} : \{-1,1\}^{k} \to \{-1,1\}$. Using the notation, $g_1 =g_1(\phi_1 \etc \phi_{k})$ and $g_{-1} = g_{-1}(\phi_1 \etc \phi_{k})$, we get 
$$g(\phi_1 \etc \phi_{k+1}) = (g_1 + g_{-1})/2 + \phi_{k+1}(g_1- g_{-1})/2.$$
Define  $g_{2}=g_{2}(\phi_1 \etc \phi_{k}) = (g_1 - g_{-1})/2$ and $g_3 = g_3(\phi_1 \etc \phi_{k}) = (g_1 + g_{-1})/2$. Let us denote $\d_i = (1+\phi_i)/2$. Using the decomposition of $f$ from Equation~\eqref{eqn:fdecomp},
\begin{equation*} f\left(\frac{1+\phi_1}{2} \etc \frac{1+\phi_{k+1}}{2} \right) = f(\d_1 \etc \d_{k+1}) = f_0(\d_1 \etc \d_{k}) + f_{2}(\d_1 \etc \d_{k}) \d_{k+1}.\end{equation*}
Using the notation, $f_0 = f_0(\d_1 \etc \d_{k})$ and $f_{2}=f_{2}(\d_1 \etc \d_{k})$, and substituting the decomposition of $f$ and $g$ into Equation~\eqref{eqn:fg} gives: 
\begin{equation}
\label{eqn:fgrel}
g_3 + g_{2}\phi_{k+1} = 2  (f_0 + f_{2}\d_{k+1})-1  \equiv f_0 + f_{2}\d_{k+1} = (1/2)(g_3 + g_{2}\phi_{k+1} + 1).
\end{equation}
\begin{claim}
Let $D=(\D|\s)$, $\T=\D^\top$, and $\Phi = 2\T - \mathbf{1}_{d \times n}$. Then
\begin{equation*} 
\Sigma_f(D)_i = \frac{1}{2} \left (  \langle \Pi_{g_3}(\Phi \etc \Phi)_i, \mathbf{1}_n \rangle +  \langle \Pi_{g_{2}}(\Phi \etc \Phi)_i, 2\s-\mathbf{1}_n \rangle  + n\right  ). \end{equation*}
\end{claim}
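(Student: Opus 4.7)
The plan is to apply the pointwise identity \eqref{eqn:fgrel} summand-by-summand to the definition of $\Sigma_f(D)_i$, and then recognize the resulting two sums as inner products of the $i$th row of the appropriate row function matrix against the claimed vectors.

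First, I fix the sequence $J = (j_1 \etc j_k)$ indexing the $i$th entry of $\Sigma_f(D)$ and expand $\Sigma_f(D)_i = \sum_{a=1}^n f(\d_{a,j_1} \etc \d_{a,j_k}, s_a)$. Since $\Phi = 2\T - \mathbf{1}_{d\times n}$ and $\T = \D^\top$, the entries of $\Phi$ satisfy $\phi_{l,a} = 2\d_{a,l} - 1$, and hence $\d_{a,j_l} = (1+\phi_{j_l,a})/2$; similarly $s_a = (1 + (2s_a - 1))/2$. Substituting these into each $f(\cdot)$, then using the relation $f(\cdot) = (g(\cdot)+1)/2$ that follows from \eqref{eqn:fg}, and finally invoking the decomposition $g = g_3 + g_2\,\phi_{k+1}$ established just before \eqref{eqn:fgrel}, I obtain the pointwise identity
\begin{equation*}
f(\d_{a,j_1} \etc \d_{a,j_k}, s_a) = \tfrac{1}{2}\bigl( g_3(\phi_{j_1,a} \etc \phi_{j_k,a}) + g_2(\phi_{j_1,a} \etc \phi_{j_k,a})(2s_a - 1) + 1 \bigr),
\end{equation*}
which is exactly \eqref{eqn:fgrel} specialized to the data at record $a$.

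Next, I sum this identity over $a = 1 \etc n$. By Definition~\ref{def:funcmat}, the row of $\Pi_h(\Phi \etc \Phi)$ indexed by $J$ (which is the same $i$th row corresponding to the $i$th entry of $\Sigma_f(D)$) has $a$th entry $h(\phi_{j_1,a} \etc \phi_{j_k,a})$. Therefore $\sum_a g_3(\phi_{j_1,a} \etc \phi_{j_k,a}) = \langle \Pi_{g_3}(\Phi \etc \Phi)_i, \mathbf{1}_n \rangle$ and $\sum_a g_2(\phi_{j_1,a} \etc \phi_{j_k,a})(2s_a - 1) = \langle \Pi_{g_2}(\Phi \etc \Phi)_i, 2\s - \mathbf{1}_n \rangle$, while the constant term contributes $n$. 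Factoring out the $\tfrac{1}{2}$ yields the formula in the claim.

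The argument is essentially bookkeeping; the only step that requires any care is keeping the two roles of the indices straight. In $\Pi_h$, the coordinates of $J$ index the rows of the input matrices $\Phi$ while $a$ traverses the $n$ columns (one per record), whereas in $\Sigma_f$ the summation index $a$ traverses the rows of $\D$ and $J$ picks out columns of $\D$. Passing from $\D$ to $\T = \D^\top$ and then to $\Phi$ swaps these roles in exactly the required way, so once this correspondence is pinned down there is no substantive obstacle.
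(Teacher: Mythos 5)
Your proof is correct and follows essentially the same route as the paper: the paper starts from the already-established identity $\Sigma_f(D)_i = \langle \Pi_{f_0}(\T \etc \T)_i, \mathbf{1}_n\rangle + \langle \Pi_{f_2}(\T \etc \T)_i, \s\rangle$ and applies Equation~\eqref{eqn:fgrel} entry-by-entry via Definition~\ref{def:funcmat}, while you expand $\Sigma_f(D)_i$ from its definition and rederive the pointwise form of \eqref{eqn:fgrel} at each record $a$ before summing. These are the same calculation, just unrolled to a different depth, and your bookkeeping of the indices (records as columns of $\Phi$, $J$ picking rows) matches the paper's conventions.
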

\begin{proof}
Note that
from the earlier established decomposition
$$\Sigma_f(D)_i = \langle \Pi_{f_0}(\T \etc \T)_i, \mathbf{1}_n  \rangle + \langle \Pi_{f_{2}}(\T \etc \T)_i, \s \rangle.$$
The reminder of the proof follows by using Equation~\eqref{eqn:fgrel} along with the definition of row-function matrices (Definition~\ref{def:funcmat}).   
\end{proof}
Define 
$$q_{g_i} = \frac{\langle \Pi_{g_3}(\Phi \etc \Phi)_i, \mathbf{1}_n \rangle}{2} + \frac{n}{2} - \frac{\langle \Pi_{g_{2}}(\Phi \etc \Phi)_i,\mathbf{1}_n \rangle}{2}.$$  
Then $\Sigma_f(D)_i = q_{g_i} +  \langle \Pi_{g_{2}}(\Phi \etc \Phi)_i, \s \rangle$. Let $\mathbf{q}_g=(q_{g_1} \etc q_{g_{d^k}})$. We get
\begin{equation} \label{eqn:sigmag} \Sigma_f(D) = \mathbf{q}_g + \Pi_{g_{2}}(\Phi \etc \Phi)\s. \end{equation}

Let $\y$ be the noisy approximation to $\Sigma_f(D)$ released by the privacy mechanism. Given $\y$, the linear program that the adversary solves is:
\begin{align} \label{eqn:newlp}
\mbox{argmin}_{\s}\, \left  \|\y- \mathbf{q}_g - \Pi_{g_{2}}(\Phi \etc \Phi)\s \right  \|_1.
\end{align}
The following theorem analyzes this attack using a random matrix $\Z$. 

\begin{theorem} [Part~\ref{part2}, Theorem~\ref{thm:multlin}] \label{thm:lp}
Let $f: \{0,1\}^{k+1} \to \{0,1\}$ be a non-degenerate boolean function and let $n \leq c d^k/\log_{(q)}d$ for a constant $c$ depending only on $k,q$.  Then there exists a constant $\gamma=\gamma(k,q) > 0$ such that any mechanism which for every database $D \in (\{0,1\})_{n \times (d+1)}$ releases $\Sigma_f(D)$ by adding $o(\sqrt{n})$ (or releases $\Sigma_f(D)/n$ by adding $o(1/\sqrt{n})$) noise to at most $1-\gamma$ fraction of the entries is attribute non-private. The attack that achieves this non-privacy violation runs in $O(d^{2k} n^3 + d^{2.5 k}n^2)$~time.
\end{theorem}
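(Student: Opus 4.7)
The plan is to apply Theorem~\ref{thm:d11} (De's LP decoding guarantee) to the reconstruction problem in Equation~\eqref{eqn:newlp}. Concretely, I will choose the adversary's public data $\D$ by setting $\Phi = 2\T - \mathbf{1}_{d\times n}$ equal to a realization of a random $d \times n$ matrix $\Z$ with i.i.d.\ symmetric $\pm 1$ entries; this is exactly the reason Section~\ref{sec:lower} re-expressed $\Sigma_f(D)$ in terms of $g_{2}$ on the $\{-1,1\}$-matrix $\Phi$ rather than $f_{2}$ on $\T$ (the centered ensemble is what makes a Euclidean section possible). For such a $D$ and any $\s\in\bit{n}$, the attacker's constraint matrix is $A = \Pi_{g_{2}}(\Z \etc \Z)$, the target right-hand side is $\mathbf{q}_g + A\s$, and the noise $\mathbf{e}$ released by the mechanism is $(\gamma,\beta)$-small with $\beta = o(\sqrt{n})$.

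To invoke Theorem~\ref{thm:d11} I need two spectral/geometric properties of $A = \Pi_{g_{2}}(\Z \etc \Z)$, both holding with probability $1 - \exp(-\Omega(d))$ over the choice of $\Z$:
\begin{enumerate}
\item $\sigma_n(A) \ge C'\sqrt{d^k}$. Since $f$ is non-degenerate of degree $k+1$, the function $f_{2}$, and hence its $\pm 1$ counterpart $g_{2} = (g_1 - g_{-1})/2$, is a multilinear polynomial of degree exactly $k$ in the $\pm 1$-variables; in particular $g_2 \not\equiv 0$ and is non-degenerate over $\{-1,1\}^k$. So Theorem~\ref{thm:least} applied to the random row function matrix $\Pi_{g_{2}}(\Z \etc \Z)$ gives the needed lower bound exactly as in Claim~\ref{claim:least}, under the hypothesis $n \le c d^k/\log_{(q)}d$.
\item $A$ is an $\alpha$-Euclidean section for some absolute constant $\alpha = \alpha(k,q) > 0$. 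This is the content of the result deferred to Appendix~\ref{app:euclid}; I will cite that statement and apply it with the function $g_{2}$.
\end{enumerate}

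Given these two properties, Theorem~\ref{thm:d11} yields a constant $\gamma = \gamma(\alpha) > 0$ such that, whenever at most a $\gamma$-fraction of the $d^k$ released entries are corrupted and the remaining entries carry additive error at most $\beta = o(\sqrt{n})$, the minimizer of Equation~\eqref{eqn:newlp} rounded to $\bit{n}$ satisfies
\[
d_H(\s,\hat\s) \;=\; O\!\left(\frac{\beta\sqrt{d^k \cdot n}}{\sqrt{d^k}}\right) \;=\; O(\beta\sqrt{n}) \;=\; o(n),
\]
which contradicts attribute privacy. The scaling for $\Sigma_f(D)/n$ follows by dividing the noise tolerance by $n$, and the $O(d^{2k}n^3 + d^{2.5k}n^2)$ running time is immediate from the $m = d^k$, $N_2 = n$ substitution into Theorem~\ref{thm:d11}.

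The main obstacle is property (2): verifying that $\Pi_{g_{2}}(\Z \etc \Z)$ is a Euclidean section with a constant distortion independent of $n$ and $d$. The upper bound $\|A\z\|_1 \le \sqrt{m}\,\|A\z\|$ is free from Cauchy--Schwarz, but the reverse inequality requires a uniform lower bound on $\|A\z\|_1/\|A\z\|$ over the unit sphere, which for correlated-row ensembles like $\Pi_{g_{2}}(\Z \etc \Z)$ cannot be read off from generic concentration and must be proven via a net argument on $S^{n-1}$ combined with anti-concentration estimates for the rows of $A$ (exploiting that $g_{2}$ is a non-constant multilinear polynomial in independent symmetric $\pm 1$ variables). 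The singular-value bound (1) follows from the prior row-product machinery of Theorem~\ref{thm:least}, and the rest of the argument is a direct plug-in of Theorem~\ref{thm:d11}.
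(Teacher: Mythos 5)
Your proposal is correct and follows essentially the same route as the paper: same reconstruction problem (Equation~\eqref{eqn:newlp}), same centered $\pm 1$ random matrix $\Z = 2\BT - \mathbf{1}_{d\times n}$, same two properties (least singular value and Euclidean section) needed to invoke Theorem~\ref{thm:d11}, with the Euclidean section property deferred to Appendix~\ref{app:euclid} exactly as the paper does. One minor imprecision: Theorem~\ref{thm:least} is stated for $\{0,1\}$-valued $\BT$ and $h:\{0,1\}^k\to\{-1,0,1\}$, so it cannot be cited verbatim for the $\pm 1$-valued $\Z$ and $g_2:\{-1,1\}^k\to\{-1,0,1\}$; the paper correctly phrases this as ``repeating an analysis similar to Theorem~\ref{thm:least}'' (the needed $\ell_1$ lower bound in the $\pm 1$ setting is handled via Proposition~\ref{prop:plus1} and the observation in the proof of Theorem~\ref{thm:euclidean} that the argument of Theorem~\ref{thm:L1} carries over).
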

\begin{proof}
The proof uses the LP decoding attack outlined in Theorem~\ref{thm:d11} on Equation~\eqref{eqn:newlp}. Let $\BT$ be a random matrix of dimension $d \times n$ with independent Bernoulli entries taking values $0$ and $1$ with probability $1/2$, and let database $D=(\BT^\top|\s)$ for some $\s \in \{0,1\}^n$ . Let $\Z= 2\BT - \mathbf{1}_{d \times n}$. To use Theorem~\ref{thm:d11}, we need to \renewcommand{\labelenumi}{(\roman{enumi})}\begin{inparaenum} \item establish that $\Pi_{g_{2}}(\Z \etc \Z)$ is a Euclidean section and \item  establish a lower bound on its least singular value. \end{inparaenum} Since $g_{2} \,:\, \{-1,1\}^{k} \to \{-1,0,1\}$ has a representation as a multilinear polynomial of degree $k$, Theorem~\ref{thm:euclidean} shows that $\Pi_{g_{2}}(\Z \etc \Z)$ is with exponentially high probability a Euclidean section. Repeating an analysis similar to Theorem~\ref{thm:least} shows that the least singular value of $\Pi_{g_{2}}(\Z \etc \Z)$ is with exponentially high probability at least $\Omega(\sqrt{d^{k}})$. Hence, with exponentially high probability both of the following statements hold  simultaneously: (i) $\Pi_{g_{2}}(\Z \etc \Z)$ is a Euclidean section and (ii) $\sigma_n(\Pi_{g_{2}}(\Z \etc \Z)) = \Omega(\sqrt{d^{k}})$. 

Invoking Theorem~\ref{thm:d11} with $\beta=o(\sqrt{n})$, $a=\sqrt{n}$, $b=d^k$, and $\sigma=\Omega(\sqrt{d^k})$ shows that with exponentially high probability the adversary fails to recover only $o(n)$ entries of $\s$. This shows that the mechanism is attribute non-private.

In the running time analysis of Theorem~\ref{thm:d11}, $m$ gets replaced by $d^{k}$, and $N_3$ by $d^k n$ (as the input matrix can be represented using $O(d^k n)$ bits). Noise bound of $o(\sqrt{n})$ for releasing $\Sigma_f(D)$ translates into a noise bound of $o(1/\sqrt{n})$ for releasing $\Sigma_f(D)/n$.  
\end{proof}

\subsection{Spectral and Geometric Properties of Random Row Function Matrices.} \label{sec:spect}
Analysis of our reconstruction attacks rely on spectral and geometric properties of random row function matrices that we discuss in this section. Rudelson~\cite{R11} and Kasiviswanathan~\emph{et al.}\ \cite{KRSU10} analyzed certain spectral and geometric properties of a certain class of correlated matrices that they referred to as row product matrices (or conjunction matrices). Our analysis builds upon these results. We first summarize some important definitions and useful results from~\cite{R11,KRSU10} in Section~\ref{sec:randconj}, and establish our least singular value bound in Section~\ref{sec:leastsing}. The Euclidean section property is established in Appendix~\ref{app:euclid}.

\subsubsection{Spectral and Geometric Properties of Random Row Product Matrices.} \label{sec:randconj}
For two matrices with the same number of columns we define the row product as a matrix whose rows consist of entry-wise product of the rows of original matrices.
\begin{definition} [Row Product Matrix] \label{def:entrywise}
The entry-wise product of vectors $p, q \in\R^n$ is the vector in $p \odot q \in\R^n$ with entries $(p \odot q)_i = p_{i} \cdot q_{i}$. If $\T_{(1)}$ is an $N_1 \times n$ matrix, and $\T_{(2)}$ is an $N_2 \times n$ matrix, denote by $\T_{(1)} \odot \T_{(2)}$ an $N_1N_2 \times n$ matrix, whose rows are entry-wise products of the rows of $\T_{(1)}$ and $\T_{(2)}:$ $(\T_{(1)} \odot \T_{(2)})_{j,k}=\T_{1_j} \odot \T_{2_k}$, where $(\T_{(1)} \odot \T_{(2)})_{j,k},\T_{1_j}, \T_{2_k}$ denote rows of the corresponding matrices.\!\footnote{When $N_1=N_2$, the row product matrix is also called the Hadamard product matrix.}
\end{definition}

Rudelson~\cite{R11} showed that if we take entry-wise product of $k$ independent random matrices of dimension $d \times n$, then the largest and the least singular values of the resulting row product matrix (which of dimension $d^k \times n$) is asymptotically the same order as that of a $d^k \times n$ matrix with i.i.d.\ entries. To formulate this result formally, we introduce a class of uniformly bounded random variables, whose variances are uniformly bounded below.
\begin{definition} [$\tau$-random variable and matrix] \label{defn:tau}
Let $\tau > 0$. We will call a random variable $\xi$ a $\tau$-random variable if $|\xi| \leq 1$ a.s., $\E[\xi] = 0$, and $\E[\xi^2] \geq \tau^2$. A matrix $\mathbf{M}$ is called a $\tau$-random matrix if all its entries are independent $\tau$-random.
\end{definition}

We would also need the notion of iterated logarithm ($\log_{(q)}$) that is defined as: 
\begin{definition} \label{defn:iterlog}
For $q \in  \N$, define the function $\log_{(q)} \, : \, (0,\infty) \to \R$ by induction.
\begin{CompactEnumerate}
\item $\log_{(1)} t = \max(\log t, 1)$;
\item $\log_{(q+1)} t = \log_{(1)}(\log_{(q)} t)$.
\end{CompactEnumerate}
\end{definition}

We are now ready to state the main result from~\cite{R11} that establishes a lower bound on the $\ell_1$ norm of $(\BT_{(1)} \odot \dots \odot \BT_{(k)})\x$ where each $\BT_{(i)}$ is an independent $\tau$-random matrix and $\x$ is a unit vector. 
\begin{theorem}[~\cite{R11}]  \label{thm:R11}
Let $k,q,n,d$ be natural numbers. Assume that $n \le c d^k/\log_{(q)}d.$
Let $\BT_{(1)} \etc \BT_{(k)}$ be $k$ matrices with independent $\tau$-random entries and dimensions $d \times n$. Then the $k$-times entry-wise product $\BT_{(1)}  \odot \BT_{(2)} \odot \dots \odot \BT_{(k)}$ is a $d^k \times n$ matrix satisfying
\begin{equation*}\Pr\left [\exists \x \in S^{n-1} \;\; \|(\BT_{(1)} \odot \dots \odot \BT_{(k)})\x \|_1 \leq C_1 d^k \right ]  \leq c_1 \exp(-C_2 d).\end{equation*}
The constants $c,C_1,c_1,C_2$ depends only on $k$ and $q$.
\end{theorem}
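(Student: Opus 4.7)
The plan is to prove the statement by induction on $k$, combining an $\varepsilon$-net argument on $S^{n-1}$ (split via a compressible/incompressible dichotomy) with a conditioning identity that reduces one factor of the product matrix to an independent piece at each inductive step.

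For the base case $k=1$, I would show $\|\BT_{(1)}\x\|_1 \ge C_1 d$ uniformly over $\x \in S^{n-1}$ with probability $1-c_1\exp(-C_2 d)$ whenever $n \le cd/\log_{(1)}d$. The sphere is split into \emph{compressible} vectors (close to vectors with small support) and \emph{incompressible} (spread) vectors. For incompressible $\x$, a Paley--Zygmund-type small-ball estimate (using that $\|\x\|_\infty$ is small and $\E \xi^2 \ge \tau^2$) shows each inner product $\langle \BT_{(1),i}, \x\rangle$ is of constant order with constant probability, so $\|\BT_{(1)}\x\|_1 \gtrsim d$ with probability $1-e^{-\Theta(d)}$ for each fixed $\x$; a union bound over a standard $\varepsilon$-net of cardinality $e^{O(n)}$ makes this uniform, with the assumption $n \le cd$ providing the required slack. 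Compressible vectors are handled by a coarser net in the lower-dimensional subspace they nearly live in.

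For the inductive step I would condition on $\BT_{(k)}$ and use the identity
\[
\bigl\|(\BT_{(1)}\odot\cdots\odot\BT_{(k)})\x\bigr\|_1 \;=\; \sum_{j=1}^{d} \bigl\|(\BT_{(1)}\odot\cdots\odot\BT_{(k-1)})\,(\BT_{(k),j}\odot\x)\bigr\|_1,
\]
where $\BT_{(k),j}$ is the $j$-th row of $\BT_{(k)}$. By the inductive hypothesis (in the probability space of $\BT_{(1)},\ldots,\BT_{(k-1)}$), $\BT_{(1)}\odot\cdots\odot\BT_{(k-1)}$ acts as an $\ell_1$--$\ell_2$ Euclidean section on $\R^n$ with exponentially high probability, so each summand is at least $C_1' d^{k-1}\|\BT_{(k),j}\odot\x\|$. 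It then suffices to lower bound $\sum_{j=1}^d \|\BT_{(k),j}\odot\x\|$ by $c\tau d$ uniformly in $\x\in S^{n-1}$; for incompressible $\x$ this follows from $\E\|\BT_{(k),j}\odot\x\|^2 \ge \tau^2$ combined with a coordinatewise Bernstein bound (using small $\|\x\|_\infty$), followed by a union bound on $j\in[d]$.

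The main obstacle is the \emph{correlation} among rows of the product matrix: rows sharing an index $j_\ell$ share an entire row of $\BT_{(\ell)}$, so independence-based anti-concentration cannot be used directly at the product level. The conditioning identity above is the maneuver that restores independence, but one must keep the ``perturbed'' vector $\BT_{(k),j}\odot\x$ sufficiently incompressible to re-feed into the inductive hypothesis. This is precisely where the iterated logarithm $\log_{(q)} d$ is paid: to cover near-sparse vectors uniformly I would iterate the compressible/incompressible split a bounded ($q$) number of times, each layer peeling off vectors whose effective support is below a threshold that shrinks by one more logarithm. Vectors with extremely small effective support ($\sim n/\log_{(q-1)} d$) are handled by a low-dimensional net plus direct row-by-row anti-concentration for the product matrix restricted to that subspace; intermediate layers reuse the incompressible argument above. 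Carrying this stratified dichotomy uniformly through all $k$ inductive levels is the most delicate part, and it is what prevents a clean $n \le cd^k/\log d$ bound and forces the iterated-log loss.
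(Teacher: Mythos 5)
Your induction on $k$ has a fatal parameter mismatch that makes the inductive step vacuous in exactly the regime where the theorem has content. After conditioning on $\BT_{(k)}$ and using your (correct) identity, you invoke the hypothesis that $\BT_{(1)}\odot\cdots\odot\BT_{(k-1)}$ satisfies a uniform lower bound $\norm{(\BT_{(1)}\odot\cdots\odot\BT_{(k-1)})\y}_1 \gtrsim d^{k-1}\norm{\y}$ for all $\y \in \R^n$. That hypothesis is available only when $n \le c\,d^{k-1}/\log_{(q)} d$. But the theorem to be proved allows $n$ as large as $c\,d^{k}/\log_{(q)} d$; for large $d$ this range contains all of $(d^{k-1},\, c\,d^{k}/\log_{(q)} d\,]$, and for any $n$ in that subrange $\BT_{(1)}\odot\cdots\odot\BT_{(k-1)}$ is a $d^{k-1}\times n$ matrix with more columns than rows, hence has a nontrivial kernel, and the Euclidean-section lower bound you cite is simply false. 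Keeping the perturbed vectors $\BT_{(k),j}\odot\x$ incompressible, as you propose, does not repair this: the obstruction is not compressibility but proximity to the kernel of the $(k-1)$-level product, a random subspace independent of $\BT_{(k)}$ that none of your steps control.

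This is not a gap one patches locally; it is where the theorem's content resides. The statement's whole point is that the $d^k\times n$ product matrix behaves like a matrix with i.i.d.\ entries even when $n$ is nearly $d^k$, i.e.\ far beyond the injectivity threshold $d^{k-1}$ of any $(k-1)$-fold sub-product. An argument that freezes one random factor and then cites the $(k-1)$-level statement on all of $\R^n$ cannot, even in principle, reach $n \gg d^{k-1}$. Rudelson's proof in \cite{R11} therefore does not have this shape: it controls the supremum over $S^{n-1}$ using the joint randomness of all $k$ factors simultaneously, through a multi-scale partition of the sphere whose nets and small-ball (Littlewood--Offord type) estimates are tuned to the correlated row structure of the product at every level at once; that is also where $\log_{(q)}$ really enters, not via the peeling scheme you describe. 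Note also that the paper at hand does not contain a proof of this statement at all --- it is imported verbatim from \cite{R11} and used as a black box in the proofs of Theorems~\ref{thm:L1} and~\ref{thm:least}. Your base case, the conditioning identity, and the compressible/incompressible framing are all in the right spirit, but the inductive step as written proves nothing in the parameter range that matters.
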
	

One of the main ingredients in proving the above theorem is following fact about the norm of row product matrices.
\begin{theorem}[~\cite{R11}] \label{thm:rowprodnorm}
Let $\BT$ be a matrix with independent $\tau$-random entries and dimension $d \times n$. Then the $k$-times entry-wise product $\BT \odot \dots \odot \BT$ is a $d^k \times n$ matrix satisfying
\begin{equation*}\Pr \left [\norm{\BT \odot \dots \odot \BT} \geq c_3 \left (\sqrt{d^k}+\sqrt{n} \right ) \right ]  \leq \exp \left (-c_4 n^{\frac{1}{12k}} \right ).\end{equation*}
The constants $c_3,c_4$ depends only on $k$.
\end{theorem}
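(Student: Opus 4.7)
My plan is to exploit the factorization $\|\M\|^2 = \|\M^\top\M\|$, using the key observation that although the rows of $\M = \BT\odot\cdots\odot\BT$ are highly correlated (products of shared entries of a single matrix), the columns of $\M$ are mutually independent. Writing $\z_i\in\R^d$ for the $i$th column of $\BT$, the $i$th column $\y_i\in\R^{d^k}$ of $\M$ has coordinates $(\y_i)_J = \prod_{l=1}^k \BT_{j_l,i}$, and the $\y_i$ are independent across $i$. Moreover $\M^\top\M = \sum_{i=1}^n \y_i\y_i^\top$ is an $n\times n$ matrix with $(i,i')$ entry equal to $\langle \z_i,\z_{i'}\rangle^k$. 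Split $\M^\top\M = D + B$ into its diagonal and off-diagonal parts. Since $|\BT_{j,i}| \leq 1$ forces $\|\z_i\|\leq \sqrt{d}$, we get $\|D\| \leq \max_i \|\z_i\|^{2k} \leq d^k$ deterministically, which already supplies the $\sqrt{d^k}$ contribution to the claimed bound.

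The core of the argument is the estimate on $\|B\|$, or equivalently on the centered sum $\bigl\|\sum_i (\y_i\y_i^\top - \E[\y_i\y_i^\top])\bigr\|$. A short combinatorial computation using $\E[\BT_{j,i}] = 0$ and independence within each column shows that $\E[\y_i\y_i^\top]$ has a permutation-orbit block structure and operator norm bounded by a constant $c_k$ depending only on $k$, so $\|\E[\M^\top\M]\| \leq c_k n$. For the centered fluctuation I would apply matrix Bernstein (or noncommutative Khintchine) using the deterministic bound $\|\y_i\|^2 \leq d^k$ and the variance proxy $\bigl\|\sum_i \E[(\y_i\y_i^\top)^2]\bigr\| \leq d^k \bigl\|\sum_i \E[\y_i\y_i^\top]\bigr\| \leq c_k n d^k$, which follows from $(\y_i\y_i^\top)^2 = \|\y_i\|^2 \y_i\y_i^\top$. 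This yields $\|\M^\top\M\| \leq c_k n + O_k(d^k \log d)$, hence $\|\M\| \leq c_3(\sqrt{d^k} + \sqrt{n})$ up to polylogarithmic factors in $d$ that can be absorbed into the constants for constant $k$.

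To obtain the precise tail probability $\exp(-c_4 n^{1/(12k)})$, which is considerably weaker than what Bernstein concentration would give directly (suggesting the clean non-log bound can indeed be recovered), I would pass to a moment computation. Estimate $\E\|\M\|^{2p}$ for $p \asymp n^{1/(12k)}$ by expanding $\Tr((\M^\top\M)^p) = \sum_{i_1,\ldots,i_p} \prod_s \langle \z_{i_s}, \z_{i_{s+1}}\rangle^k$; further expanding each inner product in its row index turns this into a sum of monomials of degree $2kp$ in the entries of $\BT$. Only those index configurations in which every entry of $\BT$ appears with even total multiplicity contribute to the expectation; enumerate these matching patterns combinatorially, bound their contribution using $|\BT_{j,i}|\leq 1$ and $\E[\BT_{j,i}^2]\leq 1$, and then apply Markov's inequality. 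The $1/(12k)$ exponent in the tail naturally reflects the polynomial degree $2k$ of $\M^\top\M$ in the entries of $\BT$, with additional constant-factor slack from the combinatorial counting.

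The main obstacle is the combinatorial bookkeeping in the moment computation: because the single matrix $\BT$ is used $k$ times, row indices can collide both across the $k$ factors within a single inner product and across different inner products in the trace cycle, giving many matching classes that must all be tracked. A clean way to sidestep this is to first apply a decoupling inequality of de la Pe\~na / Montgomery-Smith type to replace $\BT^{\odot k}$ by the row product $\BT_{(1)}\odot\cdots\odot\BT_{(k)}$ of $k$ independent copies of $\BT$ at the cost of a factor $C_k$. In the decoupled model each column of $\M$ becomes a genuine tensor product of $k$ independent vectors, and the moment expansion reduces to one for sums of independent tensor outer products, which is substantially more tractable and dovetails naturally with the analysis already used in Theorem~\ref{thm:R11}.
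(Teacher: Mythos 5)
There is a genuine gap, and it is serious enough to make the whole program collapse. The pivotal claim that $\|\E[\y_i\y_i^\top]\| \leq c_k$ for a constant $c_k$ is false when, as in the statement, the \emph{same} matrix $\BT$ appears in all $k$ factors. For $k=2$ take the unit vector $\x = \mathrm{vec}(I_d)/\sqrt{d}\in\R^{d^2}$; then $\x^\top\y_i = \|\z_i\|^2/\sqrt d$ and hence $\x^\top\E[\y_i\y_i^\top]\x = \E[\|\z_i\|^4]/d \geq (\E[\|\z_i\|^2])^2/d \geq \tau^4 d$, so $\|\E[\y_i\y_i^\top]\| = \Omega(d)$, not $O(1)$. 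Consequently $\|\E[\M\M^\top]\| = \Omega(nd)$, and your matrix--Bernstein ``mean plus fluctuation'' decomposition lands you at $\|\M\| = \Omega(\sqrt{nd})$, which \emph{exceeds} the claimed $c_3(\sqrt{d^k}+\sqrt n)$ once $n \gg d^{k-1}$. This is not a slack constant: the statement as written is actually false for $k\ge 2$. Take $\BT$ with i.i.d.\ Rademacher entries. Each ``diagonal'' row $J=(j,j)$ of $\BT\odot\BT$ has entries $\BT_{j,i}^2\equiv 1$, so $\BT\odot\BT$ contains $d$ copies of $\mathbf 1_n$; summing them and normalizing shows deterministically that $\|\BT\odot\BT\| \geq \sqrt{dn}$, which is strictly larger than $c_3(d+\sqrt n)$ for all $n \gtrsim d$. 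The result that actually holds (and is what Rudelson proves in~\cite{R11}, and what Theorem~\ref{thm:R11} in this paper states correctly) concerns the row product $\BT_{(1)}\odot\cdots\odot\BT_{(k)}$ of $k$ \emph{independent} copies; there $\E[\y_i\y_i^\top]=\bigotimes_{l=1}^k \E[\z_i^{(l)}(\z_i^{(l)})^\top]$ factors into a tensor product of diagonal $d\times d$ matrices with entries in $[\tau^2,1]$, so its operator norm really is $\le 1$, and your plan is sound from that point on.

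Two secondary remarks. First, there is a dimensional mix-up: $\sum_i\y_i\y_i^\top$ is $\M\M^\top\in\R^{d^k\times d^k}$, whereas the Gram matrix with $(i,i')$ entry $\langle\z_i,\z_{i'}\rangle^k$ is $\M^\top\M\in\R^{n\times n}$; the diagonal/off-diagonal split lives in the second space, the rank-one sum and matrix Bernstein in the first, and you are silently switching between them. Since $\|\M\M^\top\|=\|\M^\top\M\|$ this is harmless for the norm, but you need to pick one picture and stay in it when centering. Second, your last-paragraph fallback of decoupling \`a la de la Pe\~na--Montgomery-Smith will not rescue the repeated-matrix formulation: decoupling controls the off-diagonal part of the chaos (index tuples with distinct coordinates), whereas the counterexample above lives entirely on the diagonal tuples $J=(j,\ldots,j)$, which decoupling leaves untouched. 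If one wants a bound on $\|\Pi_h(\BT\etc\BT)\|$ for a degree-$k$ multilinear $h$, the route taken in the paper's Theorem~\ref{thm:L1} (split $\BT$ into disjoint row blocks, take differences to manufacture $k$ genuinely independent $\tau$-random matrices, then invoke the independent-copies result) is the correct repair; one should verify that the argument in Theorem~\ref{thm:norm} also needs to be routed through that device rather than through Theorem~\ref{thm:rowprodnorm} applied literally to $\Z\odot\cdots\odot\Z$.
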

The bound on the norm appearing in the above theorem (asymptotically) matches that for a $d^k \times n$ matrix with independent $\tau$-random entries (refer~\cite{V} for more details). 

\subsubsection{Least Singular Value of Random Row Function Matrices.} \label{sec:leastsing}
We start by proving a simple proposition about functions that can be represented as multilinear polynomials. The main step behind the following proposition proof is the following simple fact about multilinear polynomials. Let $P^{(h)}$ be a multilinear polynomial representing function $h$, and let $(\d_1 \etc \d_k) \in \{0,1\}^k$ and $\d_i' \in \{0,1\}$ then
\begin{equation*}
\begin{split}
P^{(h)}(\d_1 \etc \d_i  \etc \d_k) - P^{(h)}(\d_1 \etc \d_i' \etc \d_k)  &  = h(\d_1 \etc \d_i \etc \d_k) - h(\d_1 \etc \d_i' \etc \d_k) \\
& =(\d_i-\d_i') \cdot \frac{\partial\,}{\partial \d_i} P^{(h)}.
\end{split}
\end{equation*}

\begin{proposition}  \label{prop:derivatives}
Let $h$ be a function from  $\{0,1\}^k \to \{-1,0,1\}$ having a representation as a multilinear polynomial of degree $k$.  Let $P^{(h)}$ denote this multilinear polynomial. Let $(\d_1 \etc \d_k), (\d_1' \etc \d_k') \in \{0,1\}^k$. For $I \subseteq \{1 \etc k \}$ let $\d(I) \in \{0,1\}^k$ be the point with coordinates $\d_j(I)=  \d_j' \text{ if } j \in I; \mbox{ and } \d_j(I)= \d_j  \text{ if } j \notin I$. Then 
\begin{equation*}(\d_1-\d_1') \cdot \ldots (\d_k-\d_k') = c_h(k) \sum_{I \subseteq [k]} (-1)^{|I|} h(\d_1(I) \etc \d_k(I)),\end{equation*}
where $1/c_h(k)$ is the coefficient of the monomial corresponding to all $k$ variables in the multilinear representation of $h$.
\end{proposition}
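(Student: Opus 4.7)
The plan is to prove the identity by an iterated finite-difference argument, using the hint already displayed just above the proposition. For each coordinate $i \in [k]$, define the finite-difference operator $\Delta_i$ acting on functions of $\d_1,\ldots,\d_k$ by
\[
\Delta_i f(\d_1,\ldots,\d_k) \;=\; f(\d_1,\ldots,\d_i,\ldots,\d_k) - f(\d_1,\ldots,\d_i',\ldots,\d_k).
\]
The displayed fact preceding the proposition states that, when applied to $P^{(h)}$, this operator satisfies $\Delta_i P^{(h)} = (\d_i-\d_i')\,\partial P^{(h)}/\partial \d_i$, because $P^{(h)}$ is linear in $\d_i$ (multilinearity). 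My first step is to simply record this.

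The second step is to apply the full composition $\Delta_1\Delta_2\cdots\Delta_k$ to $P^{(h)}$. The operators $\Delta_i$ commute (each acts only on its own coordinate and does not touch the others), and iterating the single-variable identity gives
\[
\Delta_1\Delta_2\cdots\Delta_k\, P^{(h)} \;=\; (\d_1-\d_1')(\d_2-\d_2')\cdots(\d_k-\d_k')\cdot \frac{\partial^{k}P^{(h)}}{\partial \d_1\,\partial \d_2\cdots\partial \d_k}.
\]
Since $P^{(h)}$ is a multilinear polynomial of degree exactly $k$, the only monomial that survives taking $k$ distinct partial derivatives is the top monomial $(1/c_h(k))\cdot \d_1\d_2\cdots\d_k$; hence the mixed partial equals $1/c_h(k)$. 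Here the hypothesis that $h$ has degree exactly $k$ (and the definition of $c_h(k)$) is used in an essential way.

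The third step is to expand $\Delta_1\Delta_2\cdots\Delta_k$ directly from the definition. Each $\Delta_i$ is a two-term operator (the evaluation at $\d_i$ minus the evaluation at $\d_i'$), so expanding the $k$-fold product yields $2^k$ evaluations indexed by subsets $I\subseteq [k]$ indicating which coordinates have been flipped to $\d_j'$, each carrying the sign $(-1)^{|I|}$. Using that $P^{(h)}$ and $h$ agree on $\{0,1\}^k$, this reads
\[
\Delta_1\Delta_2\cdots\Delta_k\, P^{(h)} \;=\; \sum_{I\subseteq [k]} (-1)^{|I|}\, h(\d_1(I),\ldots,\d_k(I)).
\]
Equating the two expressions for $\Delta_1\cdots\Delta_k P^{(h)}$ and multiplying through by $c_h(k)$ yields the claimed identity.

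There is no real obstacle: once the finite-difference viewpoint is set up, the proof is essentially bookkeeping. The only delicate point worth double-checking is the sign pattern $(-1)^{|I|}$ in the expansion of the composition, and the fact that the degree-exactly-$k$ hypothesis is what ensures the mixed partial of $P^{(h)}$ equals the single coefficient $1/c_h(k)$ rather than a polynomial in the remaining variables.
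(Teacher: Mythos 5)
Your proof is correct and follows essentially the same approach as the paper's: both apply the single-coordinate finite-difference identity $P^{(h)}(\ldots,\d_i,\ldots)-P^{(h)}(\ldots,\d_i',\ldots)=(\d_i-\d_i')\,\partial P^{(h)}/\partial\d_i$ iteratively over all $k$ coordinates, and then equate the resulting mixed partial (which collapses to $1/c_h(k)$ by the degree-$k$ hypothesis) with the $2^k$-term signed expansion over subsets $I$. The only difference is that you package the argument in explicit commuting operators $\Delta_i$, which makes the sign bookkeeping cleaner but does not change the substance.
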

\begin{proof}
By definition, we know that for all $(\d_1 \etc \d_k) \in \{0,1\}^k$, $h(\d_1 \etc \d_k) = P^{(h)}(\d_1 \etc \d_k)$. Since $P^{(h)}$ is a linear function of $\d_1$,
\begin{align*}  P^{(h)}(\d_1, \d_2 \etc \d_k) - P^{(h)}(\d_1', \d_2 \etc \d_k)  & = h(\d_1, \d_2 \etc \d_k) - h(\d_1', \d_2 \etc \d_k) \\ & =(\d_1-\d_1') \cdot \frac{\partial\,}{\partial \d_1} P^{(h)},\end{align*}
where $\frac{\partial\,}{\partial \d_1} P^{(h)}$ denotes the partial derivative of $P^{(h)}(\d_1, \d_2 \etc \d_k)$ with respect to $\delta_1$. Repeating this for the other coordinates, we get
\begin{equation*}
\begin{split}
\sum_{I \subseteq [k]} (-1)^{|I|} h(\d_1(I) \etc \d_k(I)) =  (\d_1-\d_1') \cdot \ldots \cdot (\d_k-\d_k') \cdot 
\left ( \frac{\partial\,}{\partial \d_1} \ldots \frac{\partial\,}{\partial \d_k} P^{(h)} \right ).
\end{split}
\end{equation*}
The last term in the right hand side is the coefficient of the polynomial $P^{(h)}(\d_1 \etc \d_k)$ corresponding to the monomial $\d_1 \cdot \ldots \cdot \d_k$, and we denote it by $1/c_h(k)$. 
\end{proof}

\begin{corollary}[Corollary to Proposition~\ref{prop:derivatives}] \label{cor:derivatives}
Let $\T_{(1)} \etc, \T_{(k)},\T'_{(1)} \etc \T'_{(k)}$ be $2k$ matrices with $\{0,1\}$ entries and dimensions $d \times n$. For a set $I \subseteq [k]$ denote $\T_{(j)}(I)= \T_{(j)} \text{ if } j \in I \mbox{ and } \T_{(j)}(I)=  \T'_{(j)} \text{ if } j \notin I$. Then the following holds,
\begin{equation*}(\T_{(1)}-\T_{(1)}') \odot \ldots \odot (\T_{(k)}-\T_{(k)}')  = c_h(k) \sum_{I \subseteq [k]} (-1)^{|I|} \Pi_h(\T_{(1)}(I) \etc \T_{(k)}(I)).\end{equation*}    
\end{corollary}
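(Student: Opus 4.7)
The plan is to prove the identity entry-by-entry. All matrices appearing in the equation have dimension $d^k \times n$, with rows indexed by tuples $J = (j_1,\ldots,j_k) \in \{1,\ldots,d\}^k$ and columns by $a \in \{1,\ldots,n\}$, so it suffices to fix such a pair $(J,a)$ and verify the resulting scalar identity.

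First, I would read off the $(J,a)$-entry of the left-hand side. Iterating Definition~\ref{def:entrywise}, the row of $\T_{(1)} \odot \cdots \odot \T_{(k)}$ indexed by $J$ has $a$-th component $\prod_{i=1}^k \d^{(i)}_{j_i,a}$, so the corresponding entry of $(\T_{(1)}-\T_{(1)}') \odot \cdots \odot (\T_{(k)}-\T_{(k)}')$ equals $\prod_{i=1}^k (\d^{(i)}_{j_i,a} - \d^{(i)\prime}_{j_i,a})$, where $\d^{(i)}_{j_i,a}$ and $\d^{(i)\prime}_{j_i,a}$ denote the entries of $\T_{(i)}$ and $\T_{(i)}'$ respectively. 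Next, by Definition~\ref{def:funcmat}, the $(J,a)$-entry of $\Pi_h(\T_{(1)}(I),\ldots,\T_{(k)}(I))$ is $h(\d^{(1)}(I)_{j_1,a}, \ldots, \d^{(k)}(I)_{j_k,a})$, and the definition of $\T_{(j)}(I)$ makes $\d^{(i)}(I)_{j_i,a}$ equal to $\d^{(i)}_{j_i,a}$ when $i \in I$ and to $\d^{(i)\prime}_{j_i,a}$ otherwise.

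To finish, I would apply Proposition~\ref{prop:derivatives} with the $k$-tuples $\d_i := \d^{(i)}_{j_i,a}$ and $\d_i' := \d^{(i)\prime}_{j_i,a}$ for $i \in [k]$. The proposition's construction of $\d(I)$ matches our definition of $\T_{(j)}(I)$ coordinatewise, so its conclusion
$\prod_{i=1}^k (\d_i - \d_i') = c_h(k) \sum_{I \subseteq [k]} (-1)^{|I|} h(\d_1(I),\ldots,\d_k(I))$
is precisely the scalar identity needed for the $(J,a)$-entry. Since $(J,a)$ was arbitrary, the matrix identity of the corollary follows. There is no substantive obstacle: the corollary is an entrywise repackaging of Proposition~\ref{prop:derivatives}, and the only care required is lining up the indexing conventions of Definitions~\ref{def:entrywise} and~\ref{def:funcmat}.
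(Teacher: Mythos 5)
Your entrywise reduction to Proposition~\ref{prop:derivatives} is exactly the paper's intended route (the paper's own proof is the one-liner ``follows by a simple extension of Proposition~\ref{prop:derivatives}''), and the identification of entries via Definitions~\ref{def:funcmat} and~\ref{def:entrywise} is set up correctly. However, the step ``the proposition's construction of $\d(I)$ matches our definition of $\T_{(j)}(I)$ coordinatewise'' is false as the two statements are actually written. Proposition~\ref{prop:derivatives} puts the primes on $I$: $\d_j(I)=\d_j'$ for $j\in I$ and $\d_j(I)=\d_j$ for $j\notin I$. Corollary~\ref{cor:derivatives} does the opposite: $\T_{(j)}(I)=\T_{(j)}$ for $j\in I$ and $\T_{(j)}(I)=\T'_{(j)}$ for $j\notin I$. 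Passing from one orientation to the other amounts to replacing $I$ by $[k]\setminus I$ in the sum, which turns $(-1)^{|I|}$ into $(-1)^{k-|I|}$ and so introduces an overall factor $(-1)^k$. A quick sanity check at $k=1$ with $h(\d)=a_1\d+a_0$: the corollary's right-hand side evaluates entrywise to $c_h(1)\bigl(h(\d_1')-h(\d_1)\bigr)=\d_1'-\d_1$, whereas the left-hand side is $\d_1-\d_1'$.

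So as stated, the scalar identity you invoke from the proposition is the negation (for odd $k$) of what the corollary asserts, and the claim that the two $I$-constructions ``match'' is precisely the point that should have been scrutinized rather than asserted. This is almost certainly a typo in the paper rather than a real mathematical problem: the only place the corollary is used, in the proof of Theorem~\ref{thm:L1}, defines $\BT_{(j)}(I)$ with the proposition's orientation (prime, i.e.\ $\BT_j^0$, when $j\in I$), and the sign is harmless downstream because only $\norm{\cdot}_1$ of the resulting matrix enters. But a careful proof of the corollary as printed must either flag the swapped convention and re-index by $I\mapsto[k]\setminus I$, or insert the compensating $(-1)^k$; silently declaring the conventions identical is the one substantive gap in your writeup.
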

\begin{proof}
Follows by a simple extension of Proposition~\ref{prop:derivatives} and using Definitions~\ref{def:funcmat} and~\ref{def:entrywise}. 
\end{proof}

For a random $\BT$, the following theorem shows that the $\ell_1$-norm of $\Pi_h(\BT \etc \BT) \x$ is ``big'' for all $\x \in S^{n-1}$. 
\begin{theorem}  \label{thm:L1}
Let $k,q,n,d$ be natural numbers. Assume that $n \le c d^k/\log_{(q)}d$. Consider a $d \times n$ matrix $\BT$ with independent Bernoulli entries taking values 0 and 1 with probability $1/2$. Let $h$ be a function from $\{0,1\}^k \to \{-1,0,1\} $ having a representation as a multilinear polynomial of degree $k$. Then the matrix $\Pi_h(\BT \etc \BT)$ satisfies 
\begin{equation*}
\Pr\left [ \exists \x \in S^{n-1} \ \norm{\Pi_h(\BT \etc \BT) \x}_1 \le  C' d^k \right ]  \le  c_1 \exp( - c_2 d).
\end{equation*}
The constants $c,C',c_1,c_2$ depend only on $k$ and $q$.
\end{theorem}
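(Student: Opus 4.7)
Plan:

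The proof will parallel that of Theorem~\ref{thm:R11} --- the analogous $\ell_1$ lower bound for row products of $k$ \emph{independent} random matrices --- adapted to our setting where the same matrix $\BT$ is used $k$ times inside $\Pi_h$. Corollary~\ref{cor:derivatives} provides the algebraic bridge between $\Pi_h$-matrices and row-product matrices.

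The core observation is that the columns of $\Pi_h(\BT,\ldots,\BT)$ are i.i.d.\ across $a\in[n]$, since each column is a deterministic function of the corresponding (independent) column of $\BT$. Consequently $\Pi_h(\BT,\ldots,\BT)\x=\sum_a x_a\mathbf v_a$ is a sum of i.i.d.\ random vectors in $\{-1,0,1\}^{d^k}$. For each fixed $\x\in S^{n-1}$, this representation supplies: (i) a Berry--Esseen small-ball estimate showing $|\sum_a x_a(\mathbf v_a)_J|\gtrsim 1$ with constant probability, for each row-index $J\in[d]^k$ with distinct coordinates (a $1-O(k^2/d)$ fraction of rows); and (ii) $\E\|\Pi_h(\BT,\ldots,\BT)\x\|_1\gtrsim d^k$, with $\exp(-\Omega(d))$ concentration around this mean by a bounded-differences (McDiarmid) argument on the entries of $\BT$. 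The hypothesis that $h$ has multilinear degree exactly $k$ (equivalently, $c_h(k)\neq 0$ in Proposition~\ref{prop:derivatives}) is what guarantees non-degeneracy of the small-ball variance, so the constants in (i)--(ii) stay bounded away from zero.

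The main obstacle is upgrading the single-vector bound to a uniform bound over $\x\in S^{n-1}$. A naive $\varepsilon$-net has $\exp(\Theta(n))$ points and $n$ can be as large as $d^k/\log_{(q)}d\gg d$, so a direct union bound fails. Instead I will follow the compressibility-based covering scheme from~\cite{R11}: decompose $S^{n-1}$ into sparse vectors (handled via a standard net combined with the single-vector bound) and spread vectors (handled via a small-ball amplification over random linear combinations). The~\cite{R11} proof of Theorem~\ref{thm:R11} exploits row-wise independence across the $k$ factor matrices, which fails here: rows of $\Pi_h(\BT,\ldots,\BT)$ indexed by $J,J'$ with overlapping coordinates share entries of $\BT$. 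However, the small-ball step relies only on column-wise independence --- it factors the characteristic function of $\sum_a x_a(\mathbf v_a)_J$ column-by-column --- so it transfers to our setting unchanged; the row-wise correlations surface only in concentration steps, where they can be absorbed into slightly modified bounded-differences/martingale arguments. At intermediate points, Corollary~\ref{cor:derivatives} applied with independent auxiliary copies $\BT_{(j)},\BT'_{(j)}$ yields an algebraic identity relating our $\Pi_h$ matrices to the genuine independent row product $(\BT_{(1)}-\BT'_{(1)})\odot\cdots\odot(\BT_{(k)}-\BT'_{(k)})$ of $\tau$-random matrices, to which Theorem~\ref{thm:R11} applies directly; this lets us import parts of~\cite{R11} rather than re-derive them. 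The final failure probability $c_1\exp(-c_2 d)$ and the dependence of constants on $k,q$ are inherited from Theorem~\ref{thm:R11}, matching the statement.
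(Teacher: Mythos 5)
Your proposal has a genuine gap: you have the right tool (Corollary~\ref{cor:derivatives}) but you deploy it in a way that does not close the argument, and you then fall back to re-deriving the machinery of~\cite{R11} from scratch. When you apply Corollary~\ref{cor:derivatives} with \emph{independent auxiliary copies} $\BT_{(j)},\BT'_{(j)}$, the terms $\Pi_h(\BT_{(1)}(I)\etc\BT_{(k)}(I))$ on the right-hand side are built from these auxiliary matrices and have no direct relation to $\Pi_h(\BT\etc\BT)$, so the identity does not give any inequality for $\|\Pi_h(\BT\etc\BT)\x\|_1$. That forces you back into redoing the small-ball estimate, the concentration, and the compressible/incompressible covering scheme of~\cite{R11} for a matrix with correlated rows --- you flag this yourself but only gesture at ``slightly modified bounded-differences/martingale arguments,'' and this is exactly where the dependence structure of $\Pi_h(\BT\etc\BT)$ would require genuinely new work, since the covering and tensorization steps in~\cite{R11} rely on row-wise independence across the $k$ factor matrices.

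The missing idea is that you should \emph{split $\BT$ itself} rather than introduce auxiliary randomness. Partition the $d$ rows of $\BT$ into $2k$ disjoint blocks $\BT_1^0,\BT_1^1\etc\BT_k^0,\BT_k^1$, each of size about $d/(2k)\times n$. These blocks are independent (disjoint rows of $\BT$), and the matrices $\BT_j^1-\BT_j^0$ have independent $\tau$-random entries, so Theorem~\ref{thm:R11} applies verbatim to $(\BT_1^1-\BT_1^0)\odot\cdots\odot(\BT_k^1-\BT_k^0)$ and yields the $\Omega(d^k)$ lower bound on $\|\cdot\x\|_1$ uniformly over $S^{n-1}$, with failure probability $\exp(-\Omega(d))$. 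Now apply Corollary~\ref{cor:derivatives} with $\T_{(j)}=\BT_j^1$, $\T'_{(j)}=\BT_j^0$: each $\Pi_h(\BT_{(1)}(I)\etc\BT_{(k)}(I))$ appearing on the right is a matrix whose rows are a subset of the rows of $\Pi_h(\BT\etc\BT)$ (because each $\BT_{(j)}(I)$ is a row-submatrix of $\BT$), and a row-submatrix can only have smaller $\ell_1$-image norm. Thus the triangle inequality gives
\[
\|(\BT_1^1-\BT_1^0)\odot\cdots\odot(\BT_k^1-\BT_k^0)\x\|_1 \;\le\; |c_h(k)|\,2^k\,\|\Pi_h(\BT\etc\BT)\x\|_1,
\]
and the theorem follows in one step from Theorem~\ref{thm:R11} with no re-derivation. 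In short: you correctly identified Corollary~\ref{cor:derivatives} as the bridge and Theorem~\ref{thm:R11} as the target, but you overlooked that splitting $\BT$'s own rows simultaneously (a) manufactures the required independence and (b) makes the $\Pi_h$-terms \emph{submatrices} of the matrix you actually care about, which is what makes the reduction purely deterministic and short.
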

\begin{proof}
First notice that if an $a_1 \times n$ matrix $M'$ is formed from the $a_2 \times n$ matrix $M$ by taking a subset of rows, then for any $\x \in \R^n$, $\norm{M'\x}_1 \le \norm{M\x}_1$.

Let $d=2kd'+l$, where $0 \le l<2k$. For $j=1 \etc k$ denote by $\BT_j^1$ the submatrix of $\BT$ consisting of rows $(2d'(j-1)+1) \etc (2d'(j-1)+d')$, and by $\BT_j^0$ the submatrix consisting or rows $(2d'(j-1)+d'+1) \etc 2jd'$. For a set $I \subseteq [k]$ denote
\[ \BT_{(j)}(I)= \begin{cases}
\BT_j^0 &\text{if } j \in I, \\
\BT_j^1 &\text{if } j \notin I.
\end{cases} 
\]
Then Corollary~\ref{cor:derivatives} implies
\begin{align*}
(\BT_1^1-\BT_1^0) \odot \ldots \odot (\BT_k^1-\BT_k^0) = c_h(k) \sum_{I \subseteq [k]} (-1)^{|I|} \Pi_h(\BT_{(1)}(I) \etc \BT_{(k)}(I)).     
\end{align*}
Therefore, by triangle inequality,
\begin{align} \label{eqn:prev}
\norm{(\BT_1^1-\BT_1^0) \odot \ldots \odot (\BT_k^1-\BT_k^0) \x}_1  & \le c_h(k) \sum_{I \subseteq [k]} \norm{\Pi_h(\BT_{(1)}(I) \etc \BT_{(k)}(I)) \x}_1 \\
& \le c_h(k)2^k \norm{\Pi_h(\BT \etc \BT) \x}_1. \nonumber
\end{align}
The last inequality follows since $\forall I$ $\Pi_h(\BT_{(1)}(I) \etc \BT_{(k)}(I))$ is a submatrix of $\Pi_h(\BT \etc \BT)$, which implies that
$$\norm{\Pi_h(\BT_{(1)}(I) \etc \BT_{(k)}(I)) \x}_1  \leq \norm{\Pi_h(\BT \etc \BT) \x}_1.$$
The entries of the matrices $(\BT_1^1-\BT_1^0) \etc (\BT_k^1-\BT_k^0)$ are independent $\tau$-random variables. Thus, Theorem~\ref{thm:R11} (note that $d' = O(d))$ yields
\begin{equation*}
\begin{split}
\Pr[\exists \x \in S^{n-1}\norm{(\BT_1^1-\BT_1^0) \odot \ldots \odot (\BT_k^1-\BT_k^0) \x}_1  \le C d^k  ]  \le  c_1 \exp \left( - c_2 d  \right),
\end{split}
\end{equation*}
which along with Equation~\eqref{eqn:prev} proves the theorem. Note that $c_h(k)$ can be bounded as a function of $k$ alone. 
\end{proof}

Combining Theorem~\ref{thm:L1} with Cauchy-Schwarz's inequality, we obtain a lower bound on the least singular value of $\Pi_h(\BT \etc \BT)$. It is well known~\cite{RV08,V} that the least singular value of a $d^k \times n$ matrix with independent $\tau$-random entries is $\approx d^k$. Therefore, we get that in spite of all the correlations that exist in $\Pi_h(\BT \etc \BT)$ its least singular value is asymptotically the same order as that of an i.i.d.\ matrix.
\begin{theorem} \label{thm:least}
Under the assumptions of Theorem~\ref{thm:L1} 
$$\Pr \left[ \sigma_n(\Pi_h(\BT \etc \BT)) \le C' \sqrt{d^k} \right] \le  c_1 \exp \left( - c_2 d  \right).$$
\end{theorem}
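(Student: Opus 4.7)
The plan is to derive this bound as an immediate corollary of Theorem~\ref{thm:L1}, using only the standard $\ell_1$/$\ell_2$ comparison given by Cauchy--Schwarz.

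First, I would recall the variational characterization $\sigma_n(\Pi_h(\BT \etc \BT)) = \inf_{\x \in S^{n-1}} \|\Pi_h(\BT \etc \BT)\x\|_2$, so it suffices to lower bound $\|\Pi_h(\BT \etc \BT)\x\|_2$ uniformly over the sphere. The matrix $\Pi_h(\BT \etc \BT)$ has $d^k$ rows, so for any vector $\mathbf{v} \in \R^{d^k}$, Cauchy--Schwarz yields $\|\mathbf{v}\|_1 \le \sqrt{d^k}\,\|\mathbf{v}\|_2$, equivalently $\|\mathbf{v}\|_2 \ge \|\mathbf{v}\|_1/\sqrt{d^k}$.

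Second, I would invoke Theorem~\ref{thm:L1} directly: with probability at least $1 - c_1 \exp(-c_2 d)$, simultaneously for every $\x \in S^{n-1}$, $\|\Pi_h(\BT \etc \BT)\x\|_1 \ge C' d^k$. Combining this uniform $\ell_1$ lower bound with the Cauchy--Schwarz comparison above gives, on the same high-probability event,
\[
\|\Pi_h(\BT \etc \BT)\x\|_2 \;\ge\; \frac{\|\Pi_h(\BT \etc \BT)\x\|_1}{\sqrt{d^k}} \;\ge\; \frac{C' d^k}{\sqrt{d^k}} \;=\; C' \sqrt{d^k},
\]
uniformly in $\x \in S^{n-1}$, and therefore $\sigma_n(\Pi_h(\BT \etc \BT)) \ge C' \sqrt{d^k}$ on this event (after a harmless relabeling of the constant $C'$, if one prefers to match the constant appearing in Theorem~\ref{thm:L1} exactly).

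There is essentially no obstacle here: the entire content of the theorem is packed into Theorem~\ref{thm:L1}, and this step just converts a uniform $\ell_1$ lower bound into a uniform $\ell_2$ lower bound via the trivial dimension-dependent norm inequality. The only thing to double-check is that the dimension factor in Cauchy--Schwarz matches: since we lose a factor of $\sqrt{d^k}$ going from $\ell_1$ to $\ell_2$ on $\R^{d^k}$, and Theorem~\ref{thm:L1} delivers $d^k$ on the right-hand side, the resulting bound is $\sqrt{d^k}$, which is exactly (up to constants) the scale of the least singular value of a $d^k \times n$ matrix with i.i.d.\ $\tau$-random entries. The probability bound and the constants $c_1, c_2, C'$ carry over verbatim from Theorem~\ref{thm:L1}.
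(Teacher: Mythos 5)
Your proposal is correct and matches the paper's own proof: the paper likewise combines the variational characterization of $\sigma_n$ with the Cauchy--Schwarz inequality $\|\Pi_h(\BT \etc \BT)\x\|_1 \le \sqrt{d^k}\,\|\Pi_h(\BT \etc \BT)\x\|$ to reduce the claim to Theorem~\ref{thm:L1}. There is nothing to add; the constants and probability bound carry over directly as you note.
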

\begin{proof}
By Cauchy-Schwarz inequality, $$\norm{\Pi_h(\BT \etc \BT) \x}_1 \leq \sqrt{d^k} \norm{\Pi_h(\BT \etc \BT) \x}.$$ Therefore,
\begin{align*} \Pr \left[ \exists \x \in S^{n-1} \  \norm{\Pi_h(\BT \etc \BT) \x} \le C' \sqrt{d^k} \right]  \le \Pr \left[ \exists \x \in S^{n-1} \ \norm{\Pi_h(\BT \etc \BT) \x}_1  \le C' d^k \right ].\end{align*}
The right-hand side probability could be bounded using Theorem~\ref{thm:L1} and the left-hand probability is exactly $\Pr \left[ \sigma_n(\Pi_h(\BT \etc \BT)) \le C' \sqrt{d^k} \right]$. 
\end{proof}

\section{Releasing $M$-estimators} \label{sec:mest}
In this section, we analyze privacy lower bounds for releasing $M$-estimators. Assume we have $n$ samples $\x_1 \etc \x_n \in \R^{k+1}$, consider the following optimization problem:
\begin{equation} \label{eqn:mest} \mathcal{L}(\theta;\x_1 \etc \x_n) = \frac{1}{n} \sum_{i=1}^{n} \ell(\theta;\x_i), \end{equation}
where $\theta \in \Theta \subset \R^{k+1}$, the {\em separable} loss function $\mathcal{L} \,: \, \Theta \times (\R^{k+1})^n \to \R$ measures the ``fit'' of $\theta \in \Theta$ to any given data $\x_1 \etc \x_n$, and $\ell \,:\, \Theta \times \R^{k+1} \to \R$ is the loss function associated with a single data point.
It is common to assume that the loss function has certain properties, e.g., for any given sample $\x_1 \etc \x_n$, the loss function assigns a cost $\mathcal{L}(\theta;\x_1 \etc \x_n) \geq 0$ to the estimator $\theta$. The $M$-estimator ($\hat{\theta}$) associated with a given a function $\mathcal{L}(\theta;\x_1 \etc \x_n) \geq 0$ is
\begin{align*}\hat{\theta}  = \mbox{argmin}_{\theta \in \Theta}\, \mathcal{L}(\theta;\x_1 \etc \x_n)  = \mbox{argmin}_{\theta \in \Theta} \, \sum_{i=1}^{n} \ell(\theta;\x_i).\end{align*}
For a differentiable loss function $\ell$, the estimator $\hat{\theta}$ could be found by setting $\partial \, \mathcal{L}(\theta;\x_1 \etc \x_n)$ to zero.

$M$-estimators are natural extensions of the Maximum Likelihood Estimators (MLE)~\cite{stat}. They enjoy similar consistency and are asymptotically normal. natural extensions of the MLE. There are several reasons for studying these estimators: \renewcommand{\labelenumi}{(\roman{enumi})}\begin{inparaenum} \item they may be more computationally efficient than the MLE, and \item they may be more robust (resistant to deviations) than MLE. \end{inparaenum} The linear regression MLE is captured by setting $\ell(\theta;\x) = (y - \langle \z, \theta \rangle)^2$ in Equation~\eqref{eqn:mest} where $\x = (\z,y)$, and MLE for logistic regression is captured by setting $\ell(\theta;\x) = y \langle \z,\theta \rangle - \ell(1+\exp(\z,\theta))$~\cite{micro}.

\mypar{Problem Statement.} Let $D=(\D|\s)$ be a database dimension $n \times d+1$, and let $\D$ be a real-valued matrix of dimension $n \times d$ and $\s \in \{0,1\}^n$ be a (secret) vector. Let $D=(\delta_{i,j})$. Consider the submatrix $D|_J$ of $D$, where $J \in \{1 \etc d\}^{k}$.  Let $\hat{\theta}_J$ be the $M$-estimator for $D|_J$ defined as
\begin{equation*}\hat{\theta}_J = \mbox{argmin}_{\theta \in \Theta} \, \sum_{i=1}^{n} \ell(\theta;(\delta_{i,j_1} \etc \delta_{i,j_k},s_i))  \mbox{ where } J=(j_1 \etc j_k).\end{equation*}
The goal is to understand how much noise is needed to attribute privately release $\hat{\theta}_J$ for all $J$'s when the loss function ($\ell$) is differentiable. 

\mypar{Basic Scheme.} Consider a differentiable loss function $\ell$. Let $D=(\D|\s)$. Let $\D = (\D_{(1)}|  \etc |\D_{(d/k)})$, where each $\D_{(i)}$ is an $n \times k$ matrix (assume $d$ is a multiple of $k$ for simplicity). Consider any $\D_{(i)}$. We have
$$\partial\, \mathcal{L}(\theta;(\D_{(i)},\s)) =  \frac{1}{n} \sum_{j=1}^n \partial \, \ell(\theta_i;(\D_{(i)_j},s_j)),$$
where $\D_{(i)_j}$ is the $j$th row in $\D_{(i)}$. Then $M$-estimators $\hat{\theta}_i$ for $i \in [d/k]$ is obtained by solving
\begin{equation}\label{eqn:gengrad} 
\frac{1}{n} \sum_{j=1}^n \partial \, \ell(\theta_i;(\D_{(i)_j},s_j)) = 0. 
\end{equation}
This gives a set of constraints over $\s$ which the adversary could use to construct $\hat{\s}$. For the case of linear and logistic regression, Equation~\eqref{eqn:gengrad} reduces to a form $\D_{(i)}^\top \s - \mathbf{r}=0$, where $\mathbf{r}$ is a vector independent of $\s$.  For general loss function, we would use the fact that $\s$ is binary and use a decomposition similar to Equation~\eqref{eqn:fdecomp}. The other issue is that the adversary gets only a noisy approximation of $\hat{\theta}_1 \etc \hat{\theta}_{d/k}$  and we overcome this problem by using the Lipschitz properties of the gradient of the loss function.

In the next subsection, we focus on the standard MLE's for linear and logistic regression. In Section~\ref{sec:genfn}, we consider general $M$-estimators. Here, we would require an additional variance condition on the loss function. 
We would use this following standard definition of Lipschitz continuous gradient.

\begin{definition} [Lipschitz Continuous Gradient] \label{defn:lip}
The gradient of a function $G \,:\, \R^k \to \R$ is Lipschitz continuous with parameter $\lambda >0$ if, $\|\partial\, G(\x) - \partial\, G(\y) \| \leq \lambda \|\x-\y\|$.
\end{definition}
\begin{remark}
Also, for any twice differentiable function $G$, $\lambda I \succeq  \partial^2\, G(\x)$ for all $\x$ (where $\partial^2\, G(\x)$ denotes the Hessian matrix)~\cite{smola}.
\end{remark}

\subsection{Releasing Linear and Logistic Regression Estimators.} \label{sec:linlog} In this section, we establish distortion lower bounds for attribute privately releasing linear and logistic regression estimators.

\mypar{Linear Regression Background.} A general linear regression problem could be represented as $\s = X\theta + \varepsilon$, where $\s =(s_1 \etc s_n) \in \R^n$ is a vector of observed responses, $X \in \R^{n \times k}$ is a matrix of regressors, $\mathbf{\varepsilon} = (\varepsilon_1 \etc \varepsilon_n)$ is an unobserved error vector where each $\varepsilon_i$ accounts for the discrepancy between the actual responses ($s_i$) and the predicted outcomes ($ \langle X_i, \theta \rangle $), and $\theta \in \R^{k}$ is a vector of unknown estimators. This above optimization problem has a closed-form solution given as $\hat{\theta} = (X^\top X)^{-1} X^\top \s$ (also known as the ordinary least squares estimator).
Let $\LIN(\theta;\s,X,\sigma^2)$ denote the log-likelihood of linear regression (the likelihood expression is given in Appendix~\ref{app:linlogback}). The gradient (w.r.t.\ to $\theta$) of the log-likelihood is given by~\cite{micro}
$$\partial\, \LIN(\theta;\s,X,\sigma^2) = \frac{1}{\sigma^2} \left (X^\top \s - X^\top X\theta \right ).$$

\mypar{Logistic Regression Background.} Logistic regression models estimate probabilities of events as functions of independent variables. Let $s_i$ be binary variable representing the value on the dependent variable for $i$th input, and the values of $k$ independent variables for this same case be represented as $x_{i,j}$ ($j=1\etc k$).  Let $n$ denote the total sample size, and let $\zeta_i$ denote the probability of success $(\Pr[s_i=1]$). The design matrix of independent variables, $X=(x_{i,j})$, is composed of $n$ rows and $k$ columns.

The logistic regression model equates the logit transform, the log-odds of the probability of a success, to the linear component:
\begin{equation*}\ln \left ( \frac{\zeta_i}{1-\zeta_i} \right ) = \sum_{i=1}^k x_{i,j} \theta \equiv \zeta_i = \frac{1+\exp(\langle \theta,X_i \rangle)}{\exp(\langle \theta,X_i \rangle)}  \mbox{ where } X_i \mbox{ is the } i\mbox{th row in } X.\end{equation*}
To emphasize the dependence of $\zeta_i$ on $\theta$, we use the notation $\zeta_i = \zeta_i(\theta)$. Let $\LOG(\theta;\s, X)$ denote the log-likelihood of logistic regression (see Appendix~\ref{app:linlogback} for the likelihood expression). The gradient (w.r.t.\ to $\theta$) of the log-likelihood is given by~\cite{micro}
$$\partial\, \LOG(\theta;\s,X) = X^\top \s -  X^\top \mbox{vert}(\zeta_1(\theta) \etc \zeta_n(\theta)),$$
where the notation $\mbox{vert}(\cdot\etc\cdot)$ denotes vertical concatenation of the argument matrices/vectors. Our analysis will require a bound on the Lipschitz constant of the gradient of the log-likelihood function, which we bound using the following claim.
\begin{claim} \label{claim:lips}
The Lipschitz constant $\Lambda_{\mbox{log}}$ of the gradient of the log-likelihood $\partial\, \LOG(\theta;\s,X)$ can be bounded by the operator norm of $X^\top X$.
\end{claim}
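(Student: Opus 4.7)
The plan is to reduce the Lipschitz constant of $\partial\, \LOG$ to an operator-norm bound on the Hessian of $\LOG$, and then invoke the Remark immediately following Definition~\ref{defn:lip}. That Remark tells us that if $\lambda I \succeq \partial^2\, G(\x)$ for all $\x$, then $\partial\, G$ is $\lambda$-Lipschitz, so it suffices to prove that $\|\partial^2\, \LOG(\theta;\s,X)\| \le \|X^\top X\|$ uniformly in $\theta$.

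First I would differentiate the stated formula $\partial\, \LOG(\theta;\s,X) = X^\top \s - X^\top \mbox{vert}(\zeta_1(\theta) \etc \zeta_n(\theta))$ a second time with respect to $\theta$. Only the second term depends on $\theta$, and since each $\zeta_i(\theta)$ is the sigmoid of the linear form $\langle \theta, X_i\rangle$, the chain rule gives $\partial \zeta_i / \partial \theta = \zeta_i(\theta)(1-\zeta_i(\theta))\, X_i$. Collecting these row by row yields
$$
\partial^2\, \LOG(\theta;\s,X) \;=\; -\, X^\top D(\theta)\, X,
$$
where $D(\theta)$ is the $n\times n$ diagonal matrix with entries $\zeta_i(\theta)(1-\zeta_i(\theta))$.

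Next I would use the elementary fact that $0 \le t(1-t) \le 1/4 \le 1$ for $t \in [0,1]$, so that $0 \preceq D(\theta) \preceq I$ in the PSD order. Conjugating by $X^\top$ and $X$ preserves this ordering, yielding $-\,X^\top X \preceq \partial^2\, \LOG(\theta;\s,X) \preceq X^\top X$ for every $\theta$. This is exactly the operator-norm bound $\|\partial^2\, \LOG(\theta;\s,X)\| \le \|X^\top X\|$, and applying the Remark then gives $\Lambda_{\log} \le \|X^\top X\|$, as claimed.

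There is no real obstacle here; the argument is a standard Hessian computation for logistic regression, and the only subtlety worth highlighting is that the Hessian of $\LOG$ is negative semidefinite (the log-likelihood is concave), which is why one needs the two-sided bound $-X^\top X \preceq \partial^2\, \LOG \preceq X^\top X$ rather than a one-sided PSD inequality. One could even sharpen the constant to $(1/4)\|X^\top X\|$ using $t(1-t) \le 1/4$, but the weaker bound $\|X^\top X\|$ is all the claim asserts and all that subsequent applications will require.
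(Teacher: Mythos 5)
Your argument is correct, and it follows the same high-level route as the paper: compute the Hessian of $\LOG$, bound its operator norm by $\|X^\top X\|$, and invoke the Remark after Definition~\ref{defn:lip}. Where you diverge — and where you actually improve on the paper — is in the step that passes from the bound $0 \le \zeta_\ell(1-\zeta_\ell) \le 1$ to the operator-norm conclusion. The paper writes the Hessian entry-wise as $\partial^2_{\theta_i\theta_j}\LOG = -\sum_\ell \zeta_\ell(1-\zeta_\ell) X_{\ell,i}X_{\ell,j}$ and then asserts an entry-wise inequality $\partial^2_{\theta_i\theta_j}\LOG \le -\sum_\ell X_{\ell,i}X_{\ell,j}$, concluding ``the Hessian matrix is therefore $-X^\top X$.'' This step is shaky: the entry-wise inequality does not actually hold when the products $X_{\ell,i}X_{\ell,j}$ change sign, and even when it does hold, an entry-wise inequality between symmetric matrices does not by itself control the operator norm. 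Your PSD-sandwich argument — write $\partial^2\LOG = -X^\top D(\theta) X$ with $0 \preceq D(\theta) \preceq I$, conjugate by $X$ to get $-X^\top X \preceq \partial^2\LOG \preceq 0 \preceq X^\top X$, and read off $\|\partial^2\LOG\| \le \|X^\top X\|$ — is the correct way to make this rigorous, and it is sign-safe. Your observation that the log-likelihood is concave, so the Hessian is negative semidefinite and one genuinely needs the two-sided bound, is exactly the subtlety the paper glosses over. You could also have noted the sharper constant $\tfrac14\|X^\top X\|$, as you mention, but the weaker bound suffices for Theorem~\ref{thm:linlog}.
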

\begin{proof}
From Definition~\ref{defn:lip}, we know that the Lipschitz constant of the gradient of the log-likelihood ($\partial\, \LOG(\theta;\s,X)$), can be bounded by maximum eigenvalue of Hessian of $ \LOG(\theta;\s,X)$. The $(i,j)$th entry of Hessian of $\LOG(\theta;\s,X)$~is
$$ \frac{\partial^2\, \LOG(\theta;\s,X)}{\partial \theta_i \partial \theta_j} = - \sum_{l=1}^n \zeta_\ell(\theta) (1-\zeta_\ell(\theta)) X_{l,i} X_{l,j},$$
where $X_{a,b}$ denote the $(a,b)$th entry in $X$. Note that the Hessian is a $k \times k$ matrix. Since $0 \leq \zeta_\ell(\theta) \leq 1$ (as $\zeta_\ell(\theta)$ represents a probability), we have $$ \frac{\partial^2\, \LOG(\theta;\s,X)}{\partial \theta_i \partial \theta_j} \leq -  \sum_{l=1}^n X_{l,i} X_{l,j}.$$
The Hessian matrix is therefore $-X^\top X$. Hence, the Lipschitz constant ($\Lambda_{\mbox{log}}$) of $\partial\, \LOG(\theta;\s,X)$ can be bounded by the operator norm of $X^\top X$ (see the remark after Definition~\ref{defn:lip}). 
\end{proof}

\mypar{Creating a Linear Reconstruction Problem for Linear Regression.} Let $\D = (\D_{(1)}|  \etc  |\D_{(d/k)})$, where each $\D_{(i)}$ is an $n \times k$ matrix (assume $d$ is a multiple of $k$). Let $\hat{\theta}_i$ be the solution to the MLE equation $$\D_{(i)}^\top \s - \D_{(i)}^\top \D_{(i)}\theta =0.$$ The adversary gets $\tilde{\theta}_i$'s which are a noisy approximation to $\hat{\theta}_i$'s. Given $\tilde{\theta}_1 \etc \tilde{\theta}_{d/k}$, the adversary solves the following set of linear constraints\footnote{Equivalently, we could write the below equation as a single constraint: $$\mbox{vert}(\partial\, \LIN(\tilde{\theta}_1;\s,\D_{(1)},\sigma^2) \etc \partial\, \LIN(\tilde{\theta}_{d/k};\s,\D_{(d/k)},\sigma^2)) =0.$$}  to construct $\hat{\s}$:
\begin{equation*}
\partial\, \LIN(\tilde{\theta}_1;\s,\D_{(1)},\sigma^2) = \cdots =  \partial\, \LIN(\tilde{\theta}_{d/k};\s,\D_{(d/k)},\sigma^2) = 0.
\end{equation*}
This can be re-written as:
\begin{equation}
\label{eqn:linear} 
\D_{(1)}^\top\s - \D_{(1)}^\top \D_{(1)} \tilde{\theta}_{1} = \cdots =  \D_{({d/k})}^\top\s - \D_{(d/k)}^\top \D_{(d/k)} \tilde{\theta}_{d/k} = 0.
\end{equation}

\mypar{Creating a Linear Reconstruction Problem for Logistic Regression.}  This reduction is very similar to the linear regression case.  As before let $\D = (\D_{(1)}|  \etc  |\D_{(d/k)})$. 
Let $\hat{\theta}_i$ be the solution to the MLE equation $$\D_{(i)}^\top \s - \D_{(i)}^\top\mbox{vert}(\zeta_{(i)_1}(\theta) \etc \zeta_{(i)_n}(\theta)) =0.$$ Let $\zeta_{(i)}(\theta) =\mbox{vert}(\zeta_{(i)_1}(\theta) \etc \zeta_{(i)_n}(\theta))$. The adversary gets $\tilde{\theta}_i$'s which are a noisy approximation to $\hat{\theta}_i$'s. Using $\tilde{\theta}_1 \etc \tilde{\theta}_{d/k}$ and $\D$, the adversary can construct $\zeta_{(1)}(\tilde{\theta}_1) \etc \zeta_{(d/k)}(\tilde{\theta}_{d/k})$. The adversary then solves the following set of linear constraints to construct $\hat{\s}$:
\begin{equation*}
\partial\, \LOG(\tilde{\theta}_1;\s,\D_{(1)})  = \cdots  = \partial\, \LOG(\tilde{\theta}_{d/k};\s,\D_{(d/k)}) =0.
\end{equation*}
This can be re-written as:
\begin{equation} \label{eqn:log} 
\D_{(1)}^\top \s - \D_{(1)}^\top \zeta_{(1)}(\tilde{\theta}_1) =   \cdots  = \D_{({d/k})}^\top \s - \D_{(d/k)}^\top \zeta_{(d/k)}(\tilde{\theta}_{d/k})  = 0. 
\end{equation}

\mypar{Setting up the Least Squares and LP Decoding Attacks.} Consider linear regression. The attacks operate on the linear reconstruction problem of Equation~\eqref{eqn:linear}. The least squares attack constructs $\hat{\s}$ (an approximation of $\s$) by minimizing the  $\ell_2$ norm of the left hand side of Equation~\eqref{eqn:linear}, whereas LP decoding attack works by minimizing $\ell_1$ norm. The attacks are similar for logistic regression except that they now operate on Equation~\eqref{eqn:log}.

In the following analysis, we use a random matrix for $\D$ where each entry of the random matrix is an independent $\tau$-random variable (Definition~\ref{defn:tau}).
\begin{theorem} \label{thm:linlog}
Let $d \geq 2n$ and set $k=1$. Then
\begin{CompactItemize}
\item Any privacy mechanism which for every database $D=(\D|\s)$ where $\D \in \R^{n \times d}$ and $\s \in \{0,1\}^n$ releases the estimators of the linear/logistic regression model between every column of $\D$ and $\s$ by adding $o(1/\sqrt{n})$ noise to each estimator is attribute non-private. The attack that achieves this attribute non-privacy violation runs in  $O(d n^2)$ time.
\item There exists a constant $\gamma > 0$ such that any mechanism which for every database $D=(\D|\s)$ where $\D \in \R^{n \times d}$ and $\s \in \{0,1\}^n$ releases the estimators of the linear/logistic regression model between every column of $\D$ and $\s$ by adding $o(1/\sqrt{n})$ noise to at most $1-\gamma$ fraction of the estimators is attribute non-private. The attack that achieves this attribute non-privacy violation runs in $O(d^2n^3+d^{2.5} n^2)$ time.
\end{CompactItemize}
\end{theorem}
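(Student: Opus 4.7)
The plan is to stack the $d$ first-order optimality equations into a single linear system in $\s$ and then apply Theorems~\ref{thm:KRSU} and~\ref{thm:d11} directly. Specializing to $k=1$, each $\D_{(j)}$ is the $j$-th column of $\D$ and the estimator $\hat{\theta}_j$ is a scalar. Stacking Equation~\eqref{eqn:linear} (respectively~\eqref{eqn:log}) over $j=1,\ldots,d$ yields the system $\D^\top\s=\mathbf{b}$, where $b_j=\|\D_{(j)}\|^{2}\,\hat{\theta}_j$ in the linear case and $b_j=\D_{(j)}^\top\zeta_{(j)}(\hat{\theta}_j)$ in the logistic case. Given the noisy releases $\tilde{\theta}_1,\ldots,\tilde{\theta}_d$, the attacker forms the analogous vector $\mathbf{y}$ (with $\tilde{\theta}_j$ in place of $\hat{\theta}_j$) and runs least-squares (or LP decoding) on the noisy linear system $\D^\top\s\approx\mathbf{y}$.

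The key analytic step is to translate the per-estimator noise $|\tilde{\theta}_j-\hat{\theta}_j|=o(1/\sqrt{n})$ into an $o(\sqrt{n})$ bound on $\|\mathbf{y}-\D^\top\s\|_\infty$. For linear regression this is immediate since $y_j-\D_{(j)}^\top\s=\|\D_{(j)}\|^{2}(\tilde{\theta}_j-\hat{\theta}_j)$. For logistic regression, I would use that $\partial\,\LOG(\hat{\theta}_j;\s,\D_{(j)})=0$ together with Claim~\ref{claim:lips}, which bounds the Lipschitz constant of the gradient by $\|\D_{(j)}^\top\D_{(j)}\|=\|\D_{(j)}\|^{2}$. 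Consequently $|y_j-\D_{(j)}^\top\s|=|\partial\,\LOG(\tilde{\theta}_j;\s,\D_{(j)})|\leq\|\D_{(j)}\|^{2}\cdot o(1/\sqrt{n})$. In both cases the per-coordinate error is $o(\sqrt{n})$ provided the columns of $\D$ have $\ell_2$-norm $O(\sqrt{n})$.

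For the spectral and geometric requirements, I would take $\D\in\{-1,+1\}^{n\times d}$ with i.i.d.\ Rademacher entries. Then: (i) each column satisfies $\|\D_{(j)}\|^{2}=n$ deterministically, validating the noise bound; (ii) since $d\geq 2n$, the standard least-singular-value bound for a rectangular random sign matrix gives $\sigma_n(\D^\top)=\Omega(\sqrt{d})$ with probability $1-\exp(-\Omega(d))$; and (iii) the operator $\D^\top\colon\R^n\to\R^d$ is $\alpha$-Euclidean with high probability, by the classical $\ell_1/\ell_2$-equivalence for random subspaces of dimension $n$ inside $\R^d$ when $d\geq 2n$ (a $k=1$ analogue of what we establish for row-function matrices). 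A union bound produces a single $\D$ with all three properties simultaneously, as required by the existence statement in Definition~\ref{def:apy}.

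Finally, plugging $m=d$, $\sigma=\Omega(\sqrt{d})$, and $\beta=o(\sqrt{n})$ into Theorem~\ref{thm:KRSU} yields Hamming error $O(m\beta^{2}/\sigma^{2})=o(n)$ in time $O(dn^{2})$, giving the first bullet; invoking Theorem~\ref{thm:d11} under the additional Euclidean-section hypothesis gives Hamming error $O(\beta\sqrt{mn}/\sigma)=o(n)$ while tolerating a constant fraction $\gamma=\gamma(\alpha)$ of arbitrarily corrupted coordinates, in time $O(d^{2}n^{3}+d^{2.5}n^{2})$, giving the second bullet. I expect the only friction point to be the logistic noise-propagation step: one must verify that the Lipschitz constant supplied by Claim~\ref{claim:lips} is exactly $\|\D_{(j)}\|^{2}=\Theta(n)$ and not something polynomially larger, so that $o(1/\sqrt{n})$ error in $\tilde{\theta}_j$ does not blow up past $o(\sqrt{n})$ after multiplication. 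This is precisely what the $k=1$ specialization buys us, since $\D_{(j)}^\top\D_{(j)}$ is a scalar.
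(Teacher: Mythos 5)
Your proposal matches the paper's proof essentially step for step: stack the $d$ scalar first-order optimality equations into $\D^\top\s\approx\mathbf{y}$, scale the per-estimator noise by $\|\D_{(j)}\|^2\le n$ (directly for linear regression, via Claim~\ref{claim:lips} for logistic), choose $\D$ with i.i.d.\ bounded centered entries so that $\sigma_n(\D^\top)=\Omega(\sqrt d)$ and (for LP decoding) $\D^\top$ is a Euclidean section, and then plug $m=d$, $\sigma=\Omega(\sqrt d)$, $\beta=o(\sqrt n)$ into Theorems~\ref{thm:KRSU} and~\ref{thm:d11}. The only cosmetic difference is that you fix Rademacher entries where the paper uses a general $\tau$-random matrix (Rademacher being a special case), which is not a meaningfully different route.
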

\begin{proof} 
We first do the analysis for linear regression. Let $D=(\M|\s)$, where $\M$ is a $\tau$-random matrix of dimension $n \times d$. \\
\noindent{\textbf{Analysis for Linear Regression.}} Now $\hat{\theta}_i$ is the solution to the MLE equation: $\M_{(i)}^\top \s - \M_{(i)}^\top\M_{(i)} \theta=0$ where $\M_{(i)}$ is the $i$th column of $\M$. Since $k=1$, we have $\hat{\theta}_i \in \R$. The adversary gets noisy approximations of $\hat{\theta}_1 \etc \hat{\theta}_d$. Let $\tilde{\theta}_1 \etc \tilde{\theta}_d$ be these noisy approximations with $\tilde{\theta}_i = \hat{\theta}_i + e_i$ (for some unknown $e_i$). The adversary solves the following set of linear constraints:
\begin{equation*}
\M_{(1)}^\top\s - \M_{(1)}^\top \M_{(1)} \tilde{\theta}_{1} =  \cdots = \M_{(d)}^\top \s  - \M_{(d)}^\top \M_{(d)} \tilde{\theta}_{d} = 0.
\end{equation*}
This could be re-written as:
$$\M^\top\s - \mbox{vert}(\M_{(1)}^\top \M_{(1)} \tilde{\theta}_{1} \etc \M_{(d)}^\top \M_{(d)} \tilde{\theta}_{d}) = 0.$$

Let us first look at the least squares attack. Let $\mathbf{e} = (e_1 \etc e_d)$ (i.e., $\mathbf{e}$ is the error vector). We have $\M_{(i)}^\top \M_{(i)} \leq n$ for all $i \in [d]$ (as $\M$ is a $\tau$-random matrix, all entries in the matrix are at most $1$, and $\M_{(i)}$ is the $i$th column in $\M$). The least squares attack produces an estimate $\hat{\s}$ by solving:
\begin{align*} & \mbox{argmin}_{\s}\, \| \M^\top \s - \mbox{vert}(\M_{(1)}^\top \M_{(1)} \tilde{\theta}_{1} \etc \M_{(d)}^\top \M_{(d)} \tilde{\theta}_{d}) \|. \end{align*}
Since  $\tilde{\theta}_i = \hat{\theta}_i + e_i$, we get for all $i \in [d]$
\begin{align*}\M_{(i)}^\top\M_{(i)} \tilde{\theta}_i &  = \M_{(i)}^\top \M_{(i)} \hat{\theta}_i + \M_{(i)}^\top \M_{(i)} e_i  \leq  \M_{(i)}^\top \M_{(i)} \hat{\theta}_i + n e_i.
\end{align*}
This implies
\begin{equation*} 
\M^\top \s - \mbox{vert}(\M_{(1)}^\top \M_{(1)} \tilde{\theta}_{1} \etc \M_{(d)}^\top \M_{(d)} \tilde{\theta}_{d})
\leq \M^\top \s - \mbox{vert}(\M_{(1)}^\top \M_{(1)} \hat{\theta}_{1} \etc \M_{(d)}^\top \M_{(d)} \hat{\theta}_{d}) + n \cdot \mathbf{e}.
\end{equation*}
The remaining analysis is similar to Theorem~\ref{thm:KRSU}, except that again each error term $e_i$ is scaled up by a factor (at most) $n$. Since  $\M$ is a $\tau$-random matrix, it is well-known that if $d \geq 2n$, then the least singular value of $\M$ is with exponentially high probability $\Omega(\sqrt{d})$ (see, e.g.,~\cite{RV08}). Therefore, if a privacy mechanism adds $o(\sqrt{n})/n$\footnote{The $n$ in the denominator is because of the scaling of the noise $\mathbf{e}$.} noise to each $\hat{\theta}_i$, then the least squares attack with exponentially high probability recovers $1-o(1)$ fraction of the entries in $\s$. The time for executing this attack is $O(dn^2)$ as the attack requires computing the SVD of a $d \times n$ matrix.

The LP decoding attack produces an estimate $\hat{\s}$ by solving:
$$\mbox{argmin}_{\s}\, \| \M^\top\s - \mbox{vert}(\M_{(1)}^\top \M_{(1)} \tilde{\theta}_{1} \etc \M_{(d)}^\top \M_{(d)} \tilde{\theta}_{d}) \|_1.$$
The analysis follows from Theorem~\ref{thm:d11}, except that each error is scaled up by a factor (at most) $n$. Since  $\M$ is a $\tau$-random matrix, it also holds that with exponentially high probability $\M$ is a Euclidean section~\cite{Kas77}. Therefore, if a privacy mechanism adds $o(\sqrt{n})/n=o(1/\sqrt{n})$ noise to at most $1-\gamma$ fraction of the $\hat{\theta}_i$'s, then the LP decoding attack  with exponentially high probability recovers $1-o(1)$ fraction of the entries in $\s$. The time for executing this attack is $O(d^2n^3+d^{2.5} n^2)$ (as there are $d$ constraints, $n$ variables, and the number of the bits in the input is bounded by $dn$). \\
\noindent{\textbf{Analysis for Logistic Regression.}} Consider the estimator $\hat{\theta}_i$ of the logistic regression model between the $i$th column of $\M$ and $\s$. $\hat{\theta}_i$ is the solution to the MLE equation: $\M_{(i)}^\top \s - \M_{(i)}^\top \zeta_{(i)}(\theta) =0$ where $\M_{(i)}$ is the $i$th column of $\M$. Since $k=1$, we have $\hat{\theta}_i \in \R$. The adversary gets noisy approximations of $\hat{\theta}_1 \etc \hat{\theta}_d$. Let $\tilde{\theta}_1 \etc \tilde{\theta}_d$ be these noisy approximations with $\tilde{\theta}_i = \hat{\theta}_i + e_i$. 

Using $\tilde{\theta}_1 \etc \tilde{\theta}_d$ and $\M$, the adversary can construct $\zeta_{(1)}(\tilde{\theta}_1) \etc \zeta_{(d)}(\tilde{\theta}_d)$. The adversary then solves the following set of linear constraints:
\begin{equation*} 
\M_{(1)}^\top\s - \M_{(1)}^\top \zeta_{(1)}(\tilde{\theta}_1)  =  \cdots = \M_{(d)}^\top \s - \M_{(d)}^\top \zeta_{(d)}(\tilde{\theta}_{d})  = 0. \end{equation*}
Let us now apply Lipschitz condition on the gradient of the log-likelihood function. Note that since $k=1$, $\partial \, \LOG(\tilde{\theta}_i;\s,\M_{(i)})$ is a scalar variable (for every $i \in [d]$).
By Lipschitz condition, 
\begin{align*}
|\partial\, \LOG(\tilde{\theta}_i;\s,\M_{(i)}) - \partial\, \LOG(\hat{\theta}_i;\s,\M_{(i)}) |   \leq \Lambda_{\mbox{log}} \,|\tilde{\theta}_i - \hat{\theta}_i | = \Lambda_{\mbox{log}} \, |e_i| ,
\end{align*}
where $\Lambda_{\mbox{log}} = \M_{(i)}^\top \M_{(i)} \leq n$ (using Claim~\ref{claim:lips}). Therefore,
$$ \partial\, \LOG(\tilde{\theta}_i;\s,\M_{(i)}) \leq n |e_i| +  \partial\, \LOG(\hat{\theta}_i;\s,\M_{(i)}). $$
Substituting $\partial\, \LOG(\tilde{\theta}_i;\s,\M_{(i)}) = \M_{(i)}^\top\s -\M_{(i)}^\top \zeta_{(i)}(\tilde{\theta}_i)$ and $\partial \, \LOG(\hat{\theta}_i;\s,\M_{(i)}) = \M_{(i)}^\top\s -\M_{(i)}^\top \zeta_{(i)}(\hat{\theta}_i)$ in the above equation,
\begin{equation*} 
\begin{split}
&\M_{(i)}^\top\s  - \M_{(i)}^\top \zeta_{(i)}(\tilde{\theta}_i)  \leq \M_{(i)}^\top\s  - \M_{(i)}^\top \zeta_{(i)}(\hat{\theta}_i) + n |e_i|.
\end{split}
\end{equation*}
This implies
\begin{equation*} 
\M^\top \s  - \mbox{vert}(\M_{(1)}^\top \zeta_{(1)}(\tilde{\theta}_1) \etc \M_{(d)}^\top \zeta_{(1)}(\tilde{\theta}_d))
\leq \M^\top \s - \mbox{vert}(\M_{(1)}^\top \zeta_{(1)}(\hat{\theta}_1) \etc \M_{(d)}^\top \zeta_{(d)}(\hat{\theta}_d)) + n \cdot |\mathbf{e}|.
\end{equation*}

The least squares attack produces an estimate $\hat{\s}$ by solving:
\begin{align*}&\mbox{argmin}_{\s} \, \| \M^\top \s  - \mbox{vert}(\M_{(1)}^\top \zeta_{(1)}(\tilde{\theta}_1) \etc \M_{(d)}^\top \zeta_{(d)}(\tilde{\theta}_d)) \|.
\end{align*}
The remaining analysis follows as in the linear regression case (from Theorem~\ref{thm:KRSU}) and again the errors are scaled by a factor (at most) $n$. Therefore, if a privacy mechanism adds $o(\sqrt{n})/n=o(1/\sqrt{n})$ noise to each $\hat{\theta}_i$, then the least squares attack with exponentially high probability recovers $1-o(1)$ fraction of the entries in $\s$. The time for executing this attack is $O(dn^2)$.

The LP decoding attack produces an estimate $\hat{\s}$ by solving:
\begin{align*}&\mbox{argmin}_{\s}\, \| \M^\top \s  - \mbox{vert}(\M_{(1)}^\top \zeta_{(1)}(\tilde{\theta}_1) \etc \M_{(d)}^\top \zeta_{(d)}(\tilde{\theta}_d)) \|_1.
\end{align*}
Again like in the linear regression case (using Theorem~\ref{thm:d11}), we get that any mechanism that adds $o(\sqrt{n})/n=o(1/\sqrt{n})$ noise to at most $1-\gamma$ fraction of the $\hat{\theta}_i$'s is attribute non-private. 
The time for executing this attack is $O(d^2n^3+d^{2.5} n^2)$. 
\end{proof}

Compared to Theorems~\ref{thm:ls} and ~\ref{thm:lp}, this above theorem requires a much larger $d$ (about $O(n)$). However, it is possible to reduce to dependence on $d \approx n^{O(1/k)}$ if the released statistic is a degree $k$ polynomial kernel of these regression functions. We defer the details to the full version.

\subsection{Releasing General $M$-estimators.} \label{sec:genfn}
In this section, we establish distortion lower bounds for attribute privately releasing $M$-estimators associated with differentiable loss functions.\\
\mypar{Creating a Linear Reconstruction Problem.}  Consider Equation~\eqref{eqn:gengrad}. Since $\s$ is a binary vector, we can decompose $\partial\, \ell(\theta;(\D_{(i)_j},s_j)) \in \R^k$ as follows:
\begin{align*}
\partial \, \ell(\theta;(\D_{(i)_j},s_j))  = \ell_0(\theta;\D_{(i)_j})(1-s_j) + \ell_1(\theta;\D_{(i)_j})s_j   = \ell_0(\theta;\D_{(i)_j}) + (\ell_1(\theta;\D_{(i)_j})-\ell_0(\theta;\D_{(i)_j}))s_j,
\end{align*}
where $\ell_0(\theta;\D_{(i)_j}) = \partial\, \ell(\theta;(\D_{(i)_j},0)) \in \R^k$ and $\ell_1(\theta;\D_{(i)_j}) = \partial\, \ell(\theta;(\D_{(i)_j},1)) \in \R^k$.  This is similar to the decomposition in Equation~\eqref{eqn:fdecomp}.  Let 
$$\ell_2(\theta;\D_{(i)_j}) = \ell_1(\theta;\D_{(i)_j})-\ell_0(\theta;\D_{(i)_j}).$$ 
Now the $M$-estimator ($\hat{\theta}_i$) between $\D_{(i)}$ and $\s$ can be found by setting $\partial\, \mathcal{L}(\theta;(\D_{(i)},\s)) =0$. Therefore, $\hat{\theta}_i$ is the solution to  (ignoring the scaling multiplier $1/n$)
\begin{align*} &\partial\, \mathcal{L}(\theta;(\D_{(i)},\s)) = \sum_{j=1}^n \partial \, \ell(\theta_i;(\D_{(i)_j},s_j)) = 0 \\
& \equiv  \sum_{j=1}^n \ell_0(\theta;\D_{(i)_j}) + \ell_2(\theta;\D_{(i)_j})s_j  =0. \end{align*}
The adversary gets $\tilde{\theta}_i$'s which are a noisy approximation to $\hat{\theta}_i$'s. Given $\tilde{\theta}_1 \etc \tilde{\theta}_{d/k}$, the adversary solves the following set of linear constraints:
\begin{align} \label{eqn:glin}
&\partial\, \mathcal{L}(\tilde{\theta}_1;(\D_{(1)},\s)) =  \cdots =  \partial\, \mathcal{L}(\tilde{\theta}_{d/k};(\D_{(d/k)},\s)) = 0 \\
& \equiv \left ( \sum_{j=1}^n \ell_0(\tilde{\theta}_1;\D_{(1)_j}) + \ell_2(\tilde{\theta}_1;\D_{(1)_j})s_j  \right ) =  \cdots  = \left (  \sum_{j=1}^n \ell_0(\tilde{\theta}_{d/k};\D_{(d/k)_j}) + \ell_2(\tilde{\theta}_{d/k};\D_{(d/k)_j})s_j  \right ) = 0. \nonumber
\end{align}
This could also be represented in a matrix-form as we show below.  For every $i \in [d/k]$, define $A_{(i)}$ as a $k \times n$ matrix whose $j$th column is $\ell_0(\tilde{\theta}_i;\D_{(i)_j})$ and $B_{(i)}$ as a $k \times n$ matrix whose $j$th column is $\ell_2(\tilde{\theta}_i;\D_{(i)_j})$. Then  Equation~\eqref{eqn:glin} can be re-written  as
\begin{equation*}
A_{(1)}\mathbf{1}_n + B_{(1)}\s =   \cdots =  A_{(d/k)} \mathbf{1}_n + B_{(d/k)}\s  =0.
\end{equation*}
The adversary solves the above equation to obtain $\hat{\s}$. The analysis uses the following condition on $\ell$.

\begin{definition}[Variance Condition]\!\footnote{For the LP decoding attack, we need a stricter condition to achieve the guarantees of Theorem~\ref{thm:d11}. We defer this discussion to the full version.} \label{def:varloss}
Consider the decomposition of the gradient of a loss function, $\ell\,:\, \Theta \times \R^{k+1} \to \R$  as $\partial \ell(\theta;(\x,y)) = \ell_0(\theta;\x) + \ell_2(\theta;\x)y$ where $\theta \in \R^k, \x \in \R^k, y \in \{0,1\}$. The loss function $\ell$ is said to be satisfy the variance condition if for every $\theta$, $\mbox{Var}_{\x}[\ell_2(\theta;\x)]$ is bounded away from zero.
\end{definition}

\begin{theorem} \label{thm:genloss}
Let $\ell$ be a differentiable loss function which satisfies the variance condition. Let $\lambda$ denote the Lipschitz constant of the gradient of the loss function $\ell$. Let $d \geq 2n$ and set $k=1$. Then any privacy mechanism which for every database $D=(\D|\s)$ where $\D \in \R^{n \times d}$ and $\s \in \{0,1\}^n$ releases the $M$-estimators associated with the loss function $\ell$ of the models fitted between every  column of $\D$ and $\s$ by adding $o(1/(\sqrt{n}\lambda))$ noise to each $M$-estimator is attribute non-private. The attack that achieves this attribute non-privacy violation runs in  $O(d n^2)$ time.
\end{theorem}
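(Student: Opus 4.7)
The plan is to adapt the two-step template of Theorem~\ref{thm:linlog}: reduce the release to a noisy linear system in $\s$ via Equation~\eqref{eqn:glin}, then invoke Theorem~\ref{thm:KRSU}. Two issues arise relative to the linear/logistic case. There is no closed-form MLE, so the reduction relies on the decomposition $\partial\ell(\theta;(\x,s))=\ell_0(\theta;\x)+\ell_2(\theta;\x)s$ and the boolean structure of $\s$. Moreover, the constraint matrix is now a nonlinear functional of the data, so its least singular value must be controlled using the variance condition of Definition~\ref{def:varloss} rather than classical bounds on $\tau$-random matrices.

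Take $\D=\M$ to be a $\tau$-random matrix of dimension $n\times d$, let $\hat\theta_i\in\R$ be the exact $M$-estimator fit between the column $\M_{(i)}$ and $\s$, and write $\tilde\theta_i=\hat\theta_i+e_i$ for the released noisy estimator. With $k=1$, Equation~\eqref{eqn:glin} becomes the system $B\s=-A\mathbf{1}_n+\mathbf r$, where $B$ is the $d\times n$ matrix with $(i,j)$ entry $\ell_2(\tilde\theta_i;\M_{(i)_j})$, $A$ is the analogous $d\times n$ matrix built from $\ell_0$, and the residual $r_i$ equals $\sum_{j=1}^{n}\bigl(\partial\ell(\tilde\theta_i;(\M_{(i)_j},s_j))-\partial\ell(\hat\theta_i;(\M_{(i)_j},s_j))\bigr)$. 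By the Lipschitz continuity of $\partial\ell$ with constant $\lambda$ (Definition~\ref{defn:lip}), $|r_i|\leq n\lambda|e_i|$, so noise of magnitude $o(1/(\sqrt n\,\lambda))$ per released estimator yields $\|\mathbf r\|_\infty=o(\sqrt n)$, matching the hypothesis of Theorem~\ref{thm:KRSU}. Both $B$ and $A\mathbf{1}_n$ are computable by the adversary from $\M$ and the released $\tilde\theta_i$'s, so the adversary can actually form the system.

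The main obstacle is to show $\sigma_n(B)=\Omega(\sqrt d)$ with high probability. Because $\tilde\theta_i$ depends only on column $i$ of $\M$ (and on $\s$, which we treat as fixed throughout), the $d$ rows of $B$ are mutually independent. Conditioning further on $\tilde\theta_i$, each entry $\ell_2(\tilde\theta_i;\M_{(i)_j})$ is a bounded function of the single fresh $\tau$-random scalar $\M_{(i)_j}$, and the variance condition of Definition~\ref{def:varloss} (applied with $k=1$ and $\theta=\tilde\theta_i$) guarantees its variance is at least some absolute $\tau_0^2>0$. Thus $\mathrm{Cov}(B_i\mid\tilde\theta_i)\succeq\tau_0^2\,I_n$, and by the tower identity the same PSD lower bound survives in the unconditional covariance of each row. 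A standard least-singular-value bound for $d\times n$ matrices with independent bounded rows whose covariances are uniformly PSD-lower-bounded (in the spirit of the tools reviewed in Section~\ref{sec:randconj}) then gives $\sigma_n(B)=\Omega(\sqrt d)$ with exponentially high probability whenever $d\geq 2n$. Applying Theorem~\ref{thm:KRSU} with $m=d$, $\sigma=\Omega(\sqrt d)$, and $\beta=o(\sqrt n)$ shows that the rounded least-squares solution agrees with $\s$ on $(1-o(1))n$ coordinates, violating attribute privacy; the cost is dominated by the SVD of the $d\times n$ matrix $B$, giving running time $O(dn^2)$.
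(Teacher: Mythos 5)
The high-level plan is the same as the paper's: decompose $\partial\ell$ as $\ell_0+\ell_2 s$ to get a linear system $\mathbf{A}_2\mathbf{1}_n+\mathbf{B}_2\s=0$, control the residual by the Lipschitz constant $\Lambda\le n\lambda$, and then invoke the least-squares analysis of Theorem~\ref{thm:KRSU} with $m=d$. Where you diverge is in the least-singular-value bound for the $d\times n$ constraint matrix $B=\mathbf{B}_2$, and that step has a gap.

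You argue as follows: condition on $\tilde\theta_i$; then each entry $\ell_2(\tilde\theta_i;\M_{(i)_j})$ is ``a bounded function of the single fresh $\tau$-random scalar $\M_{(i)_j}$,'' so $\mathrm{Cov}(B_i\mid\tilde\theta_i)\succeq\tau_0^2 I_n$ by the variance condition, and then a row-covariance singular-value bound applies. The problem is that $\tilde\theta_i=\hat\theta_i+e_i$, where $\hat\theta_i$ is the $M$-estimator computed from the \emph{entire} column $\M_{(i)}$, which includes $\M_{(i)_j}$. Conditioning on $\tilde\theta_i$ therefore does not leave the entries $\M_{(i)_j}$ fresh or mutually independent; it constrains them to a level set of the estimating equation, on which their conditional law can be quite different from the $\tau$-random marginal, and the variance condition of Definition~\ref{def:varloss} is stated for the unconditional distribution of $\x$, not the conditional one given the estimator. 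So the PSD lower bound on the conditional row covariance is unjustified, and the subsequent appeal to a ``standard'' bound for matrices with independent rows of lower-bounded covariance does not go through as written.

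The paper handles this more directly (without conditioning): it writes $\mathbf{B}_2=\mathbf{R}+\E[\mathbf{B}_2]$ with $\mathbf{R}$ centered, observes that $\E[\mathbf{B}_2]$ has identical columns (so $\E[\mathbf{B}_2]=\mathbf{u}\mathbf{v}^\top$ is rank~$1$ with polynomially bounded operator norm), and then invokes Lemma~\ref{lem:randpert} --- a least-singular-value estimate for a $\tau'$-random matrix perturbed by a fixed low-rank matrix --- to conclude $\sigma_n(\mathbf{B}_2)=\Omega(\sqrt d)$. The rank-$1$ structure of the mean is the key observation your proof is missing; without it (or some other mechanism for dealing with the non-centered entries), the singular-value bound does not follow. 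To repair your argument you should replace the conditioning step with the mean/rank-$1$ decomposition and cite Lemma~\ref{lem:randpert}, or otherwise explicitly justify why the non-zero mean of $B$'s entries cannot collapse $\sigma_n$.
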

\begin{proof}
Let $D=(\M|\s)$, where $\M$ is a $\tau$-random matrix of dimension $n \times d$ .

The $M$-estimator $\hat{\theta}_i \in \R$ between the $i$th column $\M_{(i)}$ of $\M$ and $\s$ is given by the solution to the equation:
$$\sum_{j=1}^n \ell_0(\theta;\M_{(i)_j}) + \ell_2(\theta;\M_{(i)_j})s_j=0.$$
The adversary gets noisy approximations of $\hat{\theta}_1 \etc \hat{\theta}_d$. Let $\tilde{\theta}_1 \etc \tilde{\theta}_d$ be these noisy approximations with $\tilde{\theta}_i = \hat{\theta}_i + e_i$. Consider $\partial \, \mathcal{L}(\tilde{\theta_i};(\M_{(i)},\s))$ and $\partial \, \mathcal{L}(\hat{\theta}_i;(\M_{(i)},\s))$. By Lipschitz condition\!\footnote{The Lipschitz constant $\Lambda$ of $\mathcal{L}$ is at most $n$ times the Lipschitz constant of $\ell$, and therefore $\Lambda \leq n\lambda$.},
\begin{align*}
|\partial \, \mathcal{L}(\tilde{\theta}_i;(\M_{(i)},\s)) - \partial \, \mathcal{L}(\hat{\theta}_i;(\M_{(i)},\s))| \leq \Lambda |\tilde{\theta}_i - \hat{\theta}_i|  \leq \lambda n  |\tilde{\theta}_i - \hat{\theta}_i| = \lambda n |e_i|.
\end{align*}
This implies that
$$\partial \, \mathcal{L}(\tilde{\theta}_i;(\M_{(i)},\s)) \leq \partial \, \mathcal{L}(\hat{\theta}_i;(\M_{(i)},\s)) + \lambda n |e_i| .$$
Substituting for the decomposition of $\ell$ in the above equation gives
\begin{equation} \label{eqn:th}
\sum_{j=1}^n \ell_0(\tilde{\theta}_i;\M_{(i)_j}) + \ell_2(\tilde{\theta}_i;\M_{(i)_j})s_j 
\leq  \sum_{j=1}^n \ell_0(\hat{\theta}_i;\M_{(i)_j}) + \ell_2(\hat{\theta}_i;\M_{(i)_j})s_j + \lambda n |e_i|.
\end{equation}

Let $\mathbf{A}_1,\mathbf{B}_2,\mathbf{A}_2,\mathbf{B}_2$ be four matrices of dimension $d \times n$ defined as follows:
\begin{align*}
& \mathbf{A}_1 :  \mbox{ $i$th row of $\mathbf{A}_1$ is } \ell_0(\hat{\theta}_i;\M_{(i)_1}) \etc \ell_0(\hat{\theta}_i;\M_{(i)_n}), \\
& \mathbf{B}_1 :  \mbox{ $i$th row of $\mathbf{B}_1$ is  } \ell_2(\hat{\theta}_i;\M_{(i)_1}) \etc \ell_2(\hat{\theta}_i;\M_{(i)_n}), \\
& \mathbf{A}_2 : \mbox{ $i$th row of $\mathbf{A}_2$ is } \ell_0(\tilde{\theta}_i;\M_{(i)_1}) \etc \ell_0(\tilde{\theta}_i;\M_{(i)_n}), \\
& \mathbf{B}_2 : \mbox{ $i$th row of $\mathbf{B}_2$ is } \ell_2(\tilde{\theta}_i;\M_{(i)_1}) \etc \ell_2(\tilde{\theta}_i;\M_{(i)_n}).
\end{align*}

The adversary solves the following reconstruction problem to compute $\hat{\s}$:
\begin{equation} \label{eqn:A2B2} \mathbf{A}_2 \mathbf{1}_n + \mathbf{B}_2 \s = 0. \end{equation}
From Equation~\eqref{eqn:th} it follows that
$$\mathbf{A}_2 \mathbf{1}_n + \mathbf{B}_2 \s \leq \mathbf{A}_1 \mathbf{1}_n + \mathbf{B}_1 \s + \lambda n  |\mathbf{e}|.$$

\noindent{\textbf{$\mathbf{\mathbf{B}_2}$: Least Singular Value.}} Since $\M$ is a $\tau$-random matrix, $\mathbf{B}_2$ is another random matrix. However, $\mathbf{B}_2$ may not be centered (i.e., its entries might have non-zero means). We can re-express $\mathbf{B}_2$ as
$$\mathbf{B}_2 = \underbrace{\mathbf{B}_2 - \E[\mathbf{B}_2]}_{\mathbf{R}} + \E[\mathbf{B}_2].$$
Here, $\mathbf{R}$ is a $\tau'$-random matrix\!\footnote{The expected value of each entry in $\mathbf{R}$ is zero and due to variance condition on the loss function (Definition~\ref{def:varloss}), the second moment of each entry of $\mathbf{R}$ is bounded away from zero. We can re-scale Equation~\eqref{eqn:A2B2} to ensures that all the entries in $\mathbf{B}_2$ have absolute value less than $1$ without changing the solution $\hat{\s}$. This implies that $\mathbf{R}$ is $\tau'$-random matrix for an appropriate $\tau'$ depending on $\ell_2,\tilde{\theta}_1\etc\tilde{\theta}_d$.}  and $\E[\mathbf{B}_2]$ is a matrix of the form $\mathbf{u} \mathbf{v}^\top$ where $\mathbf{u}$ is a $d$-dimensional column vector with entries $\E_{x}[\ell_2(\tilde{\theta}_1;x)] \etc \E_{x}[\ell_2(\tilde{\theta}_d;x)]$ and $\mathbf{v}$ is a $n$-dimensional column vector of all ones. 
\begin{claim}
$\Pr[\sigma_n(\mathbf{B}_2)  \leq c_{19}\sqrt{d} ] \leq \exp(-c_{20}d)$.
\end{claim}
\begin{proof}
$\mathbf{B}_2 = \mathbf{R} + \mathbf{u}\mathbf{v}^\top$, where $\mathbf{R}$ is a $\tau'$-random matrix. The rank of $\mathbf{u}\mathbf{v}^\top$ is $1$, and its operator norm can be polynomially bounded in $d$. Applying Lemma~\ref{lem:randpert} implies the result. 
\end{proof}
\noindent{\textbf{Least Squares Attack.}} The least squares attack produces an estimate $\hat{\s}$ by solving:
$$\hat{\s} = \mbox{argmin}_\s  \, \| \mathbf{A}_2 \mathbf{1}_n + \mathbf{B}_2 \s \|.$$
The analysis is similar to Theorem~\ref{thm:KRSU}, except that each error term $e_i$ is scaled up by a factor (at most) $\lambda n$. Therefore, if a privacy mechanism adds $o(1/(\sqrt{n}\lambda))$ noise to each $\hat{\theta}_i$, then the least squares attack with exponentially high probability recovers $1-o(1)$ fraction of the entries in $\s$. The time for executing this attack is $O(dn^2)$. 

\end{proof}
 

\appendix
\section{Preliminaries about Boolean Functions} \label{app:boolprelim}
We start with an alternate definition of non-degeneracy and show that it is equivalent to Definition~\ref{def:deg1}.
\begin{definition} \label{def:deg}
A boolean function $f: \{0,1\}^{k+1} \to \{0,1\}$ is called non-degenerate if
\begin{align*} \sum_{(\d_1 \etc \d_{k+1}) \in \{0,1\}^{k+1}} (-1)^{f(\d_1 \etc \d_{k+1})-\sum_{j=1}^{k+1} \d_j} \neq 0. \end{align*}
\end{definition}

Any function on the discrete cube $\{-1,1\}^{k+1}$ can be decomposed into a linear combination of characters, which are Walsh functions. Such representation allows to extend the function $f$ from the discrete cube to $\R^{k+1}$ as a multilinear (i.e., linear with respect to each variable separately) polynomial. In what follows, we will use this extension. The following lemma shows that the non-degeneracy condition is equivalent to the fact that this polynomial has the maximal degree.
\begin{lemma} \label{lem:deg}
Definitions~\ref{def:deg} and~\ref{def:deg1} are equivalent.
\end{lemma}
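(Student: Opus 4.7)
My plan is to identify the sum in Definition~\ref{def:deg} with (a nonzero scalar multiple of) the leading coefficient in the multilinear representation of $f$, and then note that $f$ has degree exactly $k+1$ iff that coefficient is nonzero.

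\smallskip
\noindent\textbf{Step 1: The multilinear representation and its top coefficient.} Every function $f:\{0,1\}^{k+1}\to\R$ has a unique representation as a multilinear polynomial $P^{(f)}(\d_1,\ldots,\d_{k+1})=\sum_{S\subseteq[k+1]} c_S \prod_{i\in S}\d_i$, since the $2^{k+1}$ monomials $\prod_{i\in S}\d_i$ form a basis for real-valued functions on $\{0,1\}^{k+1}$. Plugging in the indicator vector $\mathbf{1}_T$ of $T\subseteq[k+1]$ gives $f(\mathbf{1}_T)=\sum_{S\subseteq T} c_S$, and Möbius inversion on the boolean lattice yields, for $S=[k+1]$,
\[
c_{[k+1]}=\sum_{T\subseteq[k+1]}(-1)^{(k+1)-|T|}f(\mathbf{1}_T)=\sum_{\d\in\{0,1\}^{k+1}}(-1)^{(k+1)-\sum_j\d_j}f(\d).
\]
By definition, $f$ satisfies Definition~\ref{def:deg1} (i.e., the multilinear representation has degree exactly $k+1$) iff $c_{[k+1]}\neq 0$.

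\smallskip
\noindent\textbf{Step 2: Rewriting the sum in Definition~\ref{def:deg}.} Since $f(\d)\in\{0,1\}$, we have $(-1)^{f(\d)}=1-2f(\d)$, and $(-1)^{-\sum_j\d_j}=(-1)^{\sum_j\d_j}$. Thus
\[
\sum_{\d}(-1)^{f(\d)-\sum_j\d_j}=\sum_{\d}(-1)^{\sum_j\d_j}-2\sum_{\d}f(\d)(-1)^{\sum_j\d_j}.
\]
The first sum factorizes as $\prod_{i=1}^{k+1}(1+(-1))=0$, so
\[
\sum_{\d}(-1)^{f(\d)-\sum_j\d_j}=-2(-1)^{k+1}\sum_{\d}f(\d)(-1)^{(k+1)-\sum_j\d_j}=-2(-1)^{k+1}\,c_{[k+1]},
\]
where the last equality uses the formula for $c_{[k+1]}$ from Step~1.

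\smallskip
\noindent\textbf{Step 3: Conclusion.} The identity $\sum_{\d}(-1)^{f(\d)-\sum_j\d_j}=\pm 2\,c_{[k+1]}$ shows that the sum in Definition~\ref{def:deg} is nonzero iff $c_{[k+1]}\neq 0$, which by Step~1 is equivalent to $f$ being representable as a multilinear polynomial of degree exactly $k+1$, i.e., Definition~\ref{def:deg1}.

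\smallskip
There is no substantive obstacle here: the entire content is the Möbius-inversion formula for the top coefficient of the multilinear extension and the algebraic identity $(-1)^{f}=1-2f$ valid for boolean $f$. The only thing to be careful about is keeping track of the sign $(-1)^{k+1}$, but since we only care whether the expression vanishes, the sign is irrelevant.
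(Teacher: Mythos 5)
Your proof is correct, and it captures exactly the same underlying idea as the paper's proof: the sum in Definition~\ref{def:deg} equals a nonzero scalar multiple of the top-degree coefficient of $f$'s multilinear representation, and $\deg f = k+1$ iff that coefficient is nonzero. The only difference is cosmetic. You stay on the $\{0,1\}^{k+1}$ cube, expand $P^{(f)}$ in the monomial basis $\{\prod_{i\in S}\d_i\}$, and extract $c_{[k+1]}$ by M\"obius inversion on the subset lattice, using the identity $(-1)^f = 1-2f$ to connect it with the alternating sum in Definition~\ref{def:deg}. The paper instead changes variables to $\{-1,1\}^{k+1}$ via $g(\phi)=2f(\tfrac{1+\phi}{2})-1$, expands $g$ in the Walsh basis $\chi_S$, observes $\deg f=\deg g$, and identifies the alternating sum with $2^{k+1}\widehat{g}([k+1])$. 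These are the same computation carried out in two isomorphic bases; your version is perhaps slightly more self-contained (no Fourier-on-the-cube formalism needed), while the paper's version aligns with the Walsh machinery used elsewhere in its Appendix. Either way, the key observation---that the sign of the nonzero prefactor is irrelevant since one only tests for vanishing---is correctly noted in both.
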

\begin{proof}
Consider the function $g: \{-1,1\}^{k+1} \to \{-1,1\}$ defined by
\begin{equation*} g(\phi_1 \etc \phi_{k+1}) =2f \left( \frac{1}{2} (1+ \phi_1) \etc \frac{1}{2} (1+ \phi_{k+1})  \right) -1. 
\end{equation*}
Let $\phi=(\phi_1 \etc \phi_{k+1})$. For $S \subseteq\{1 \etc {k+1}\}$ let $\chi_S(\phi)=\prod_{j \in S} \phi_j$ be the corresponding Walsh function. Then the function $g$ can be decomposed as $$g(\phi)= \sum_{S \subseteq\{1 \etc k+1\}} \widehat{g}(S) \chi_S(\phi).$$ Note that $\text{deg}(f)=\text{deg}(g)$, and so $\text{deg}(f)=k+1$ iff $\widehat{g}(1 \etc k+1) \neq 0$. We have
\begin{equation}  \label{coefficient}
\widehat{g}(1 \etc k+1)=2^{-k-1} \sum_{\phi \in \{-1,1\}^{k+1}} g(\phi) \prod_{j=1}^{k+1} \phi_j.
\end{equation}
Since $$g(\phi)=(-1)^{f(\d_1 \etc \d_{k+1})}\;\; \mbox{ and } \;\; \prod_{j=1}^{k+1} \phi_j=(-1)^{\sum_{j=1}^{k+1} \d_j},$$ 
where $\d_j=(1-\phi_j)/2$, the lemma follows. 
\end{proof}

\begin{remark}
There are 10 non-degenerate functions of two boolean variables:
\begin{align*}
& \mbox{AND: }\d_1 \wedge \d_2, \ \d_1 \wedge (\neg \d_2), \  (\neg \d_1) \wedge \d_2, \ ( \neg \d_1) \wedge (\neg \d_2), \\
& \mbox{OR: }\d_1 \vee \d_2, \ \d_1 \vee (\neg \d_2), \  (\neg \d_1) \vee \d_2, \ ( \neg \d_1) \vee (\neg \d_2),\\
& \mbox{XOR: } \d_1 \oplus \d_2, \ (\neg \d_1) \oplus (\neg \d_2).
\end{align*}
The remaining $6$ boolean functions of two variables are degenerate.
\end{remark}

\section{Euclidean Section Property of Random Row Function Matrices} \label{app:euclid}
In this section, we establish the Euclidean section property needed for Theorem~\ref{thm:lp} (LP decoding attack). Let us consider a function $h \,:\, \{-1,1\}^k \to \{-1,0,1\}$ having a representation as a multilinear polynomial of degree $k$. Let $P^{(h)}$ denote this multilinear polynomial. We first prove a proposition analogous to Proposition~\ref{prop:derivatives} for $h$. For $I \subseteq [k]$, let us define 
\begin{equation*}P^{(h)}_{I}(\phi_1 \etc \phi_k) = P^{(h)}(\phi_1' \etc \phi_k')   \mbox{ where } \phi_i' = \phi_i \mbox{ if } i \in I, \mbox{ else } \phi_i' = 0.\end{equation*}
That is $P^{(h)}_{I}(\phi_1 \etc \phi_k)$ is the multilinear polynomial $P_h(\phi_1 \etc \phi_k)$ restricted to variables only in $I$. Under this notation
\[ P^{(h)}_{[k]}(\phi_1,\phi_2 \etc \phi_k) = P^{(h)}(\phi_1,\phi_2 \etc \phi_k).\]

\begin{proposition} \label{prop:plus1}
Let $h$ be a function from  $\{-1,1\}^k \to \{-1,0,1\}$ having a representation as a multilinear polynomial of degree $k$. Let $P^{(h)}$ be this multilinear polynomial. Let $(\phi_1 \etc \phi_k) \in \{-1,1\}^k$. Then 
\begin{equation*}  \phi_1 \cdot \ldots \cdot \phi_k  = c_h(k) ( \sum_{I \subset [k]} (-1)^{k-|I|} P^{(h)}_{I}(\phi_1 \etc \phi_k) + h(\phi_1 \etc \phi_k) ), \end{equation*}
where $1/c_h(k)$ is the coefficient of the monomial corresponding to all $k$ variables in $P^{(h)}$.
\end{proposition}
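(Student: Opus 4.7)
The plan is to expand $P^{(h)}$ in the multilinear monomial basis $\{\prod_{i\in S}\phi_i : S\subseteq[k]\}$ and apply a one-line M\"obius/inclusion-exclusion argument on the Boolean lattice to isolate the top-degree monomial.

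First, I would write $P^{(h)}(\phi_1 \etc \phi_k) = \sum_{S\subseteq[k]} a_S \prod_{i\in S}\phi_i$. Since $P^{(h)}$ has degree exactly $k$, the coefficient of the top monomial is $a_{[k]} = 1/c_h(k)$ by the very definition of $c_h(k)$. By the definition of $P^{(h)}_I$, substituting $\phi_i = 0$ for every $i \notin I$ annihilates every monomial whose index set is not contained in $I$, leaving $P^{(h)}_I(\phi_1 \etc \phi_k) = \sum_{S\subseteq I} a_S \prod_{i\in S}\phi_i$.

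Second, I would compute the alternating sum $T(\phi) := \sum_{I\subseteq[k]} (-1)^{k-|I|} P^{(h)}_I(\phi)$ by swapping the order of summation:
$$T(\phi) = \sum_{S\subseteq[k]} a_S \Bigl(\prod_{i\in S}\phi_i\Bigr) \sum_{I:\,S\subseteq I\subseteq[k]} (-1)^{k-|I|}.$$
Parametrizing $I = S \cup J$ with $J\subseteq[k]\setminus S$, the inner sum becomes $(-1)^{k-|S|} \sum_{J\subseteq[k]\setminus S}(-1)^{|J|} = (-1)^{k-|S|}(1-1)^{k-|S|}$, which vanishes unless $S=[k]$ and equals $1$ in that case. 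Hence $T(\phi) = a_{[k]}\,\phi_1\cdots\phi_k = \phi_1\cdots\phi_k / c_h(k)$.

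Finally, I would split off the summand $I=[k]$ from $T(\phi)$, noting that $P^{(h)}_{[k]}(\phi) = P^{(h)}(\phi) = h(\phi)$ on $\{-1,1\}^k$, and multiply through by $c_h(k)$. This is exactly the claimed identity. There is essentially no obstacle here: the proof is a routine M\"obius inversion on the Boolean lattice once the definition of $P^{(h)}_I$ is unpacked. The only subtle point worth a sentence is that the derivation actually gives a polynomial identity in $\R[\phi_1 \etc \phi_k]$, which is a fortiori stronger than the pointwise identity on $\{-1,1\}^k$ stated in the proposition.
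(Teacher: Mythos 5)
Your proof is correct. It establishes the same key intermediate identity $\sum_{I\subseteq[k]}(-1)^{k-|I|}P^{(h)}_I(\phi) = a_{[k]}\,\phi_1\cdots\phi_k$ that the paper uses, but by a visibly different route: the paper iterates the finite-difference step $P^{(h)}(\phi_1,\phi_2,\ldots,\phi_k)-P^{(h)}(0,\phi_2,\ldots,\phi_k)=\phi_1\,\partial_{\phi_1}P^{(h)}$ one coordinate at a time, arriving at $\phi_1\cdots\phi_k\cdot\partial_{\phi_1}\cdots\partial_{\phi_k}P^{(h)}$ on the right, whereas you expand $P^{(h)}$ in the monomial basis, swap the order of summation, and let the binomial theorem annihilate every monomial except the top one. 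The two arguments are logically equivalent (the $k$-fold mixed partial of a multilinear polynomial is exactly the top coefficient $a_{[k]}=1/c_h(k)$), but yours makes that identification explicit and dispenses with the iterated telescope, so it is arguably slightly more transparent. The paper's phrasing is chosen to parallel its earlier Proposition~\ref{prop:derivatives}, which is proved by the same finite-difference device; your version would require a separate line to rederive the corresponding Corollary~\ref{cor:pmderivatives} but otherwise serves equally well.
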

\begin{proof}
The proof is similar to Proposition~\ref{prop:derivatives}. Since $P^{(h)}$ is a linear function in $\phi_1$,
\begin{align*} P^{(h)}(\phi_1,\phi_2 \etc \phi_k) - P^{(h)}(0,\phi_2 \etc \phi_k) & = P^{(h)}_{[k]}(\phi_1,\phi_2 \etc \phi_k) - P^{(h)}_{\{2 \etc k\}}(\phi_1,\phi_2 \etc \phi_k) \\
& = \phi_1 \frac{\partial\,}{\partial \phi_1} P^{(h)},\end{align*}
where $\frac{\partial\,}{\partial \phi_1} P^{(h)}$ is the partial derivative of $P^{(h)}(\phi_1 \etc \phi_k)$ with respect to $\phi_1$. Repeating this for other coordinates, we get
\begin{equation*} \sum_{I \subseteq [k]} (-1)^{k-|I|} P^{(h)}_{I}(\phi_1 \etc \phi_k) =  \phi_1 \cdot \ldots \cdot \phi_k \cdot  \left ( \frac{\partial\,}{\partial \phi_1} \ldots \frac{\partial\,}{\partial \phi_k} P^{(h)} \right ).\end{equation*}
When $I=[k]$, $$h(\phi_1 \etc \phi_k) = P^{(h)}_{I}(\phi_1 \etc \phi_k),$$ so the above equation could be re-expressed as
\begin{equation*} \sum_{I \subset [k]} (-1)^{k-|I|} P^{(h)}_{I}(\phi_1 \etc \phi_k) + h(\phi_1 \etc \phi_k) = 
\phi_1 \cdot \ldots \cdot \phi_k \cdot 
\left ( \frac{\partial\,}{\partial \phi_1} \ldots \frac{\partial\,}{\partial \phi_k} P^{(h)} \right ). \\
\end{equation*}
The last term in the right hand side is the coefficient of the polynomial $P^{(h)}(\phi_1 \etc \phi_k)$ corresponding to the monomial $\phi_1 \cdot \ldots \cdot \phi_k$, and we denote it by $1/c_h(k)$. 
\end{proof}
Let $\Phi_{(1)} = (\phi_{i,j}^{(1)}) \etc \Phi_{(k)}= (\phi_{i,j}^{(k)})$ be $k$ matrices with $\{-1,1\}$ entries and dimensions $d \times n$. Let us define a $d^k \times n$ matrix $\Pi_{P^{(h)}_{I}}(\Phi_{(1)} \etc \Phi_{(k)})$ as in Definition~\ref{def:funcmat} using the multilinear polynomial $P^{(h)}_{I}$. More concretely, for $J = (j_1, j_2 \etc j_k) \in \{1 \etc d\}^k$ the $(J,a)$ entry of the matrix $\Pi_{P^{(h)}_{I}}(\Phi_{(1)} \etc \Phi_{(k)})$ will be defined by the relation
\[ \pi_{J,a}=P^{(h)}_{I}(\phi^{(1)}_{j_1,a}, \phi^{(2)}_{j_2,a} \etc \phi^{(k)}_{j_k,a}). \] 
Using this definition, the following corollary follows immediately from Proposition~\ref{prop:plus1}.
\begin{corollary} \label{cor:pmderivatives}
Let $\Phi_{(1)} \etc, \Phi_{(k)}$ be $k$ matrices with $\{-1,1\}$ entries and dimensions $d \times n$. Then 
\begin{equation*}\Phi_{(1)} \odot \ldots \odot \Phi_{(k)} = c_h(k)  \left (\sum_{I \subset [k]} (-1)^{k-|I|} \Pi_{P^{(h)}_I}(\Phi_{(1)} \etc \Phi_{(k)})   +  \Pi_{h}(\Phi_{(1)} \etc \Phi_{(k)}) \right ).\end{equation*}
\end{corollary}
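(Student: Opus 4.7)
The plan is to verify this identity entrywise and reduce it immediately to Proposition~\ref{prop:plus1}. Fix an arbitrary row index $J = (j_1 \etc j_k) \in \{1 \etc d\}^k$ and column index $a \in \{1 \etc n\}$, and write $\phi_i := \phi^{(i)}_{j_i,a} \in \{-1,1\}$ for each $i \in [k]$. By Definition~\ref{def:entrywise}, the $(J,a)$ entry of the row product $\Phi_{(1)} \odot \ldots \odot \Phi_{(k)}$ is precisely the scalar product $\phi_1 \cdot \phi_2 \cdots \phi_k$. Meanwhile, by the definition of $\Pi_{P^{(h)}_I}(\cdot)$ given right before the corollary statement (which parallels Definition~\ref{def:funcmat}), the $(J,a)$ entry of $\Pi_{P^{(h)}_I}(\Phi_{(1)} \etc \Phi_{(k)})$ equals $P^{(h)}_I(\phi_1 \etc \phi_k)$, and the $(J,a)$ entry of $\Pi_h(\Phi_{(1)} \etc \Phi_{(k)})$ equals $h(\phi_1 \etc \phi_k)$.

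Next, I would invoke Proposition~\ref{prop:plus1} on the specific tuple $(\phi_1 \etc \phi_k) \in \{-1,1\}^k$, which yields exactly the scalar identity
\[
\phi_1 \cdots \phi_k \;=\; c_h(k)\left(\sum_{I \subset [k]} (-1)^{k - |I|} P^{(h)}_I(\phi_1 \etc \phi_k) + h(\phi_1 \etc \phi_k)\right).
\]
Substituting the entrywise interpretations from the first paragraph, this is the claimed identity read off at the $(J,a)$ entry on both sides. Since $J$ and $a$ were arbitrary, the two matrices agree in every entry, and the matrix-level identity of the corollary follows.

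There is no real obstacle here: the work has already been done in Proposition~\ref{prop:plus1}, and the only thing to check is that the definitions of the matrices $\Pi_{P^{(h)}_I}(\Phi_{(1)} \etc \Phi_{(k)})$, $\Pi_h(\Phi_{(1)} \etc \Phi_{(k)})$, and the row product $\Phi_{(1)} \odot \ldots \odot \Phi_{(k)}$ all index rows by tuples $J \in [d]^k$ in a consistent way so that the entrywise identity lifts cleanly to a matrix identity. This consistency is built into Definitions~\ref{def:funcmat} and~\ref{def:entrywise}, and the remark in Definition~\ref{def:funcmat} that the particular ordering of rows does not matter for the analysis ensures no ambiguity arises.
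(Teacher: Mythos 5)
Your proof is correct and takes the same route as the paper, which simply states that the corollary follows immediately from Proposition~\ref{prop:plus1} by reading off the scalar identity at each $(J,a)$ entry. You've spelled out the entrywise verification in more detail than the paper does, but the underlying argument is identical.
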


\begin{theorem} \label{thm:norm}
Let $k,n,d$ be natural numbers. Consider a $d \times n$ matrix $\Z$ with independent Bernoulli entries taking values $-1$ and $1$ with probability $1/2$. Let $h$ be a function from  $\{-1,1\}^k \to \{-1,0,1\}$ having a representation as a multilinear polynomial of degree $k$. Then the matrix $\Pi_h(\Z \etc \Z)$ satisfies
\begin{equation*} \Pr \left[ \norm{\Pi_h(\Z \etc \Z)}  \ge  c_6 \left (\sqrt{d^k}+\sqrt{n} \right ) \right]  \le c_7 \exp \left( -c_8 \left (n^\frac{1}{12k} \right )  \right). \end{equation*}
The constants $c_6,c_7,c_8$ depend only on $k$.
\end{theorem}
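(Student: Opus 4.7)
The plan is to prove Theorem~\ref{thm:norm} by induction on $k$, mirroring the $\ell_1$-norm analysis in Theorem~\ref{thm:L1}: combine the identity from Corollary~\ref{cor:pmderivatives} with the full-product norm bound of Theorem~\ref{thm:rowprodnorm}. Setting $\Phi_{(1)} = \cdots = \Phi_{(k)} = \Z$ in Corollary~\ref{cor:pmderivatives} and solving for $\Pi_h$ yields
\begin{equation*}
\Pi_h(\Z \etc \Z) \;=\; \frac{1}{c_h(k)}\,\Z \odot \cdots \odot \Z \;-\; \sum_{I \subsetneq [k]} (-1)^{k - |I|}\, \Pi_{P^{(h)}_I}(\Z \etc \Z),
\end{equation*}
so the triangle inequality reduces everything to bounding each summand. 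Since $\Z$ has independent Rademacher entries (i.e., $1$-random in the sense of Definition~\ref{defn:tau}), Theorem~\ref{thm:rowprodnorm} gives $\|\Z \odot \cdots \odot \Z\| \le c_3(\sqrt{d^k} + \sqrt{n})$ with probability at least $1 - \exp(-c_4 n^{1/(12k)})$.

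For each $I \subsetneq [k]$, the polynomial $P^{(h)}_I$ depends only on the variables indexed by $I$, so the entry of $\Pi_{P^{(h)}_I}(\Z \etc \Z)$ at row $(j_1,\ldots,j_k)$ and column $a$ is determined solely by $(z_{j_i,a})_{i \in I}$. As the remaining coordinates $(j_i)_{i \notin I}$ range freely, each distinct row is repeated exactly $d^{k-|I|}$ times, which gives $\|\Pi_{P^{(h)}_I}(\Z \etc \Z)\| = \sqrt{d^{k-|I|}}\,\|M_I\|$, where $M_I$ is the contracted $d^{|I|} \times n$ matrix associated to $P^{(h)}_I$ viewed as a multilinear polynomial in $|I|$ variables. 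Invoking the inductive hypothesis on each $M_I$, multiplying by $\sqrt{d^{k-|I|}}$, and summing over the $2^k - 1$ proper subsets of $[k]$ assembles the claimed estimate $c_6(\sqrt{d^k}+\sqrt{n})$; the base case $k=1$ is handled directly from Theorem~\ref{thm:rowprodnorm} applied to $\Z$ itself, and a union bound over the $O(2^k)$ inductive invocations preserves the exponential tail $\exp(-c_8 n^{1/(12k)})$.

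The main obstacle I anticipate is twofold. First, $P^{(h)}_I$ may have degree strictly less than $|I|$---the Fourier coefficient $\hat{h}(I)$ can vanish even when $\hat{h}([k]) \neq 0$---so the inductive statement must be strengthened to cover arbitrary multilinear polynomials on $\{-1,1\}^{k'}$ rather than only those of exact degree $k'$. Second, and more delicately, naive accounting produces cross-terms of size $\sqrt{d^{k-|I|} n}$ that threaten to exceed $\sqrt{d^k}+\sqrt{n}$ when $|I|$ is small; reconciling this will require either exploiting cancellation in the alternating sum $\sum_I (-1)^{k-|I|}$ or peeling off the low-degree, rank-one contributions (such as the all-ones block coming from $P^{(h)}_\emptyset$) and treating them separately before applying the inductive norm bound to the mean-zero remainder. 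Managing these constants so that the final $c_6$ depends only on $k$ is where I expect the technical work to concentrate.
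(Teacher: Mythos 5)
Your plan reproduces the paper's own strategy (decompose via Corollary~\ref{cor:pmderivatives}, use the triangle inequality, invoke Theorem~\ref{thm:rowprodnorm} on the full row product, and induct on $|I|$), and the two obstacles you flagged are genuine; in fact, they are not resolved in the paper's proof either. The paper's base case asserts $\Pi_{P^{(h)}_I}(\Z,\ldots,\Z) = c_9\Z$ for $|I|=1$, but by Definition~\ref{def:funcmat} that matrix is $d^k\times n$, not $d\times n$, with each distinct row of $\Z$ repeated $d^{k-1}$ times; and $P^{(h)}_{\{i\}}(\phi_i)=\hat{h}(\emptyset)+\hat{h}(\{i\})\phi_i$ has a constant term, so the matrix also contains the rank-one block $\hat{h}(\emptyset)\mathbf{1}_{d^k\times n}$, whose operator norm is $|\hat{h}(\emptyset)|\sqrt{d^k n}$. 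Likewise, the inductive claim $\|\Pi_{P^{(h)}_I}\|\le c_6(d^{|I|/2}+\sqrt{n})$ implicitly treats $\Pi_{P^{(h)}_I}$ as $d^{|I|}\times n$ and omits the $\sqrt{d^{k-|I|}}$ multiplicity factor you point out. So the paper's ``proof'' silently skips exactly the two steps you correctly anticipate being hard.

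Moreover, the difficulty is not one of bookkeeping: without assuming $\hat{h}(\emptyset)=0$, the stated bound is false. Take $k=1$ and $h(\phi)=(1+\phi)/2$, which maps $\{-1,1\}$ into $\{0,1\}\subset\{-1,0,1\}$ and has degree $k=1$. Then $\Pi_h(\Z)=\tfrac12(\mathbf{1}_{d\times n}+\Z)$, so with probability $1-o(1)$, $\norm{\Pi_h(\Z)}\ge\tfrac12\sqrt{dn}-O(\sqrt d+\sqrt n)$, which exceeds any fixed multiple of $\sqrt d+\sqrt n$ when, say, $d\asymp n\to\infty$. There is no saving cancellation in the alternating sum: $\sum_I(-1)^{k-|I|}P^{(h)}_I$ just recovers the single top Walsh monomial, and the remaining $\hat{h}(\emptyset)\mathbf{1}$ block survives untouched. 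The correct repair is your second suggestion: peel off the rank-one contribution $\hat{h}(\emptyset)\mathbf{1}_{d^k\times n}$ and prove the bound for the centered matrix (equivalently, add the hypothesis $\hat h(\emptyset)=0$ and, in the induction, carry the stronger claim for arbitrary multilinear polynomials in $|I|$ variables with mean zero). Note that this propagates downstream: the application in Theorem~\ref{thm:euclidean} takes $h=g_2$, and $\hat{g_2}(\emptyset)=\hat g(\{k+1\})$ is nonzero for, e.g., $f=\text{AND}$, so the Euclidean-section argument must also be adjusted to handle the direction $\x\propto\mathbf{1}_n$ separately.
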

\begin{proof}
From Corollary~\ref{cor:pmderivatives}, we know that
\begin{equation}\label{eqn:norm} 
\Pi_{h}(\Z \etc \Z) =  \Pi_{P^{(h)}_{[k]}}(\Z \etc \Z) 
=\frac{\Z \odot \ldots \odot \Z}{c_h(k)} -\sum_{I \subset [k]} (-1)^{k-|I|} \Pi_{P^{(h)}_I}(\Z \etc \Z). \end{equation} 
To prove the theorem, we use induction over the size of $I$. Our inductive claim is that for every $I \subseteq [k]$,
\begin{equation*}\Pr \left[ \norm{\Pi_{P^{(h)}_I}(\Z \etc \Z)} \geq c_6 \left (d^{|I|/2}+\sqrt{n} \right ) \right] \le c_7 \exp \left( -c_8  n^\frac{1}{12|I|}   \right),\end{equation*}
where constants $c_6,c_7,c_8$ depend only on $|I|$.

\noindent{\textbf{Step 1.}} Let $|I| = 1$. Then $$\Pi_{P^{(h)}_I}(\Z \etc \Z) = c_9 \Z$$ for some constant $c_9$. Therefore, $$\norm{\Pi_{P^{(h)}_I}(\Z \etc \Z)} = c_9\norm{\Z}.$$ 
Since  $\Z$ is a random $\{-1,1\}$ matrix, it is well known that (see e.g.,~\cite{V}) there exists constant $c_{10},c_{11}$ such that
\begin{equation*}\Pr[\norm{\Z} \geq c_{10}(\sqrt{d}+\sqrt{n}) ]  \leq \exp(-c_{11} n).\end{equation*}
Therefore, 
\begin{equation*}\Pr \left [\norm{\Pi_{P^{(h)}_I}(\Z \etc \Z)} \geq c_9 c_{10} (\sqrt{d}+\sqrt{n}) \right ] \\ \leq \exp \left (-c_{11} n \right).\end{equation*}
Therefore, there exists constants $c_6,c_7,c_8$ such that
\begin{equation*} \Pr \left [\norm{\Pi_{P^{(h)}_I}(\Z \etc \Z)} \geq c_6 \left (\sqrt{d}+\sqrt{n} \right ) \right ] \\ \leq c_7\exp \left (-c_8 n \right ) \leq c_7\exp \left (-c_8 n^\frac{1}{12} \right ).\end{equation*}
This completes the basis for induction.

\noindent{\textbf{Step 2.}} Let $|I| = k$, i.e., $I=\{1 \etc k \} = [k]$. We want to bound $\norm{\Pi_{P^{(h)}_{I}}(\Z \etc \Z)}$. By inductive hypothesis, we assume that for every $L \subset [k]$,
\begin{equation} \label{eqn:one}
\Pr \left [\norm{\Pi_{P^{(h)}_{L}}(\Z \etc \Z)} \geq c_6 \left (d^{|L|/2}+\sqrt{n} \right ) \right ]  \leq  c_7 \exp \left( -c_8 \left (n^\frac{1}{12|L|} \right ) \right ), 
\end{equation}
where constants $c_6,c_7,c_8$ depend only on $|L|$. From Theorem~\ref{thm:rowprodnorm}, we know that the $k$-times entry wise product $\Z \odot \dots \odot \Z$ satisfies  the following norm condition (as $\Z$ is matrix with independent $\tau$-random entries)
\begin{equation} \label{eqn:two} \Pr \left [\norm{\Z \odot \dots \odot \Z} \geq c_3 \left (\sqrt{d^k}+\sqrt{n} \right ) \right ]  \leq \exp \left (-c_4 n^\frac{1}{12 k} \right ),\end{equation}
where again the constants $c_3,c_4$ depend only on $k$. 
From Equations~\eqref{eqn:one} and~\eqref{eqn:two}, we get that there exists constants $c_6,c_7,c_8$ (depending only on $k$) such that
\begin{equation} \label{eqn:last}
\Pr [ \sum_{L \subset [k]} \norm{\Pi_{P^{(h)}_{L}}(\Z \etc \Z)} + \norm{\Z \odot \dots \odot \Z}   \geq c_6 \left (\sqrt{d^k}+\sqrt{n} \right ) ]  
\leq c_7\exp \left (-c_8 n^\frac{1}{12 k}  \right ).
\end{equation}
From Equation~\eqref{eqn:norm},
\begin{align*}
& \Pr \left [ \norm{\Pi_{P^{(h)}_I}(\Z \etc \Z)}  \ge  c_6 \left (\sqrt{d^k}+\sqrt{n} \right ) \right ] \\
&  = \Pr  \left [\| \Z \odot \ldots \odot \Z - \sum_{L \subset [k]} (-1)^{k-|L|} \Pi_{P^{(h)}_L}(\Z \etc \Z) \|  \geq   c_6 (\sqrt{d^k}+\sqrt{n}  ) \right ] \\
& \leq c_7 \exp \left (-c_8 \left ( n^{\frac{1}{12k}} \right ) \right ).
\end{align*}
The last inequality comes by applying triangle inequality and then using  from Equation~\eqref{eqn:last}. Note that for $I =[k]$, $\norm{\Pi_{P^{(h)}_I}(\Z \etc \Z)} = \norm{\Pi_{h}(\Z \etc \Z)}$. This completes the proof of the theorem. 
\end{proof}

\begin{theorem}\label{thm:euclidean}
Let $k,q,n,d$ be natural numbers. Assume that $n \le c d^k/\log_{(q)}d$. Consider a $d \times n$ matrix $\Z$ with independent Bernoulli entries taking values $-1$ and $1$ with probability $1/2$. Let $h$ be a function $\{-1,1\}^k \to \{-1,0,1\}$  having a representation as a multilinear polynomial of degree $k$. Then the matrix $\Pi_h(\Z \etc \Z)$ is with exponentially high probability a Euclidean section.
\end{theorem}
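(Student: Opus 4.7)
The plan is to combine a uniform $\ell_1$ lower bound on the sphere with the operator-norm upper bound from Theorem~\ref{thm:norm}. Concretely, I will show that with exponentially high probability one has, simultaneously for all $\x \in S^{n-1}$,
$$\|\Pi_h(\Z \etc \Z)\x\|_1 \ge C' d^k \quad \text{and} \quad \|\Pi_h(\Z \etc \Z)\x\| \le c' \sqrt{d^k}.$$
Given both bounds, the ratio $\|\Pi_h(\Z\etc\Z)\x\|_1 / (\sqrt{d^k}\,\|\Pi_h(\Z\etc\Z)\x\|)$ is bounded below by the constant $C'/c'=:\alpha>0$, and this extends to every $\x \in \R^n$ by homogeneity. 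The reverse inequality $\sqrt{d^k}\,\|\Pi_h(\Z \etc \Z)\x\| \ge \|\Pi_h(\Z \etc \Z)\x\|_1$ is automatic by Cauchy--Schwarz, so these together establish the Euclidean section property.

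The upper bound on $\|\Pi_h(\Z\etc\Z)\|$ is essentially free: Theorem~\ref{thm:norm} gives $\|\Pi_h(\Z\etc\Z)\| \le c_6(\sqrt{d^k}+\sqrt{n})$ with exponentially high probability, and the hypothesis $n \le c d^k/\log_{(q)}d \le d^k$ lets me absorb $\sqrt{n}$ into $\sqrt{d^k}$, giving a bound of the form $c' \sqrt{d^k}$. Since $\|\Pi_h(\Z\etc\Z)\x\| \le \|\Pi_h(\Z\etc\Z)\|\cdot\|\x\|$, this is the desired pointwise bound on the sphere.

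For the $\ell_1$ lower bound, I would re-run the proof of Theorem~\ref{thm:L1} verbatim in the $\pm 1$ setting. The key observation is that Proposition~\ref{prop:derivatives} and Corollary~\ref{cor:derivatives} are purely algebraic identities about multilinear polynomials, valid for arbitrary real inputs, so they apply unchanged when the entries are $\pm 1$ instead of $\{0,1\}$. Write $d = 2kd' + l$ with $0\le l < 2k$, and partition the rows of $\Z$ into $2k$ independent sub-blocks $\Z_1^0,\Z_1^1,\ldots,\Z_k^0,\Z_k^1$ of $d'$ rows each. Corollary~\ref{cor:derivatives} then gives
$$(\Z_1^1-\Z_1^0)\odot\cdots\odot(\Z_k^1-\Z_k^0) \;=\; c_h(k)\sum_{I\subseteq[k]}(-1)^{|I|}\,\Pi_h(\Z_{(1)}(I)\etc\Z_{(k)}(I)).$$
Each difference $\Z_j^1-\Z_j^0$ has independent entries taking values in $\{-2,0,2\}$ with probabilities $(1/4,1/2,1/4)$, so after dividing by $2$ they are $\tau$-random in the sense of Definition~\ref{defn:tau} with $\tau = 1/\sqrt 2$. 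Theorem~\ref{thm:R11} then yields, with exponentially high probability, a uniform lower bound $\|(\Z_1^1-\Z_1^0)\odot\cdots\odot(\Z_k^1-\Z_k^0)\x\|_1 \ge 2^k C_1 d^k$ for every $\x \in S^{n-1}$. Since each $\Pi_h(\Z_{(1)}(I)\etc\Z_{(k)}(I))$ is a row-submatrix of $\Pi_h(\Z\etc\Z)$, the same triangle-inequality/row-submatrix step as in Equation~\eqref{eqn:prev} converts this into $\|\Pi_h(\Z\etc\Z)\x\|_1 \ge C' d^k$ uniformly in $\x \in S^{n-1}$, where $C' = C_1/c_h(k)$.

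There is no genuinely new probabilistic obstacle: the operator-norm upper bound is already Theorem~\ref{thm:norm}, and the $\ell_1$ lower bound is a direct transcription of the Theorem~\ref{thm:L1} argument into the $\pm 1$ alphabet. The only point that needs explicit verification is that the multilinear differencing identity of Proposition~\ref{prop:derivatives} does not depend on the alphabet being $\{0,1\}$ (it doesn't, since it is a statement about partial derivatives of a multilinear polynomial at arbitrary real points) and that the splitting trick produces centered, bounded entries with a constant lower bound on the variance (it does, with an extra factor of $2$ that is absorbed into the constant $C'$). Once these are checked, combining the two bounds yields the Euclidean section property with a constant $\alpha$ depending only on $k$ and $q$.
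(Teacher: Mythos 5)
Your proposal is correct and follows essentially the same approach as the paper: combine the operator-norm upper bound from Theorem~\ref{thm:norm} (absorbing $\sqrt{n}$ into $\sqrt{d^k}$) with an $\ell_1$ lower bound obtained by transcribing the Theorem~\ref{thm:L1} splitting argument to the $\pm 1$ alphabet, then conclude by comparing the two. Where the paper simply asserts that ``the proof idea of Theorem~\ref{thm:L1} also works for $h$'', you carefully fill in why the multilinear differencing identity is alphabet-independent, why the halved differences $(\Z_j^1-\Z_j^0)/2$ are $\tau$-random with $\tau=1/\sqrt 2$, and how the resulting factor of $2^k$ cancels against the $2^k$ in the triangle-inequality step; that bookkeeping is accurate and makes the argument self-contained.
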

\begin{proof}
Firstly note that $\Pi_h(\Z \etc \Z)$ is an operator from $\R^{n} \to \R^{d^k}$. As mentioned before, by Cauchy-Schwarz's inequality
\begin{equation*} \forall \x \in S^{n-1} \;\; \sqrt{d^k} \norm{\Pi_h(\Z \etc \Z)\x}_2  \\ \geq \norm{\Pi_h(\Z \etc \Z) \x}_1.\end{equation*}

Note that the proof idea of Theorem~\ref{thm:L1} also works for $h$. That is  if $n \le c d^k/\log_{(q)} d$ then
\begin{equation*} \Pr \left[ \exists \x \in S^{n-1} \; \norm{\Pi_h(\Z \etc \Z) \x}_1  \le  C' d^{k}   \right]  \le  c_1 \exp \left( - c_2 d  \right). \end{equation*}
In other words,
\begin{equation*} \Pr \left[ \forall \x \in S^{n-1} \; \norm{\Pi_h(\Z \etc \Z) \x}_1 \ge  C' d^{k}   \right]  \\ \ge  1-c_1 \exp \left( - c_2 d  \right). \end{equation*}
Theorem~\ref{thm:norm} implies that 
\begin{equation*} \begin{split} & \Pr [ \forall \x \in S^{n-1} \; \norm{\Pi_h(\Z \etc \Z)\x}_2  \le  c_6 \left (\sqrt{d^k} +\sqrt{n} \right ) ]  \\ & \ge 1- c_7\exp \left( -c_8 \left (n^\frac{1}{12k} \right )  \right). \end{split} \end{equation*}
If $n \le c d^k/\log_{(q)} d$, then there exists a constant $c_{13}$ (depending only on $k$) such that 
\begin{equation*} \Pr \left[ \forall \x \in S^{n-1} \; \norm{\Pi_h(\Z \etc \Z)\x}_2  \le  c_{13} \sqrt{d^k}  \right]   \ge 1- c_7\exp \left( -c_8 \left (n^\frac{1}{12k} \right )  \right). \end{equation*}
Therefore, with probability at least  $1-c_1 \exp \left( - c_2 d  \right)-c_7\exp \left( -c_8 \left (n^\frac{1}{12k} \right )  \right)$, there exists a $\alpha$ (depending only on $k$ and $q$) such that
\begin{equation*} \forall \x \in S^{n-1} \;\; \norm{\Pi_h(\Z \etc \Z) \x}_1   \geq \alpha \sqrt{d^k} \norm{\Pi_h(\Z \etc \Z)\x}_2.\end{equation*}
This shows that the matrix $\Pi_h(\Z \etc \Z)$ is with exponentially high probability a Euclidean section. 
\end{proof}

\section{MLE's for Linear and Logistic Regression} \label{app:linlogback}
Consider the linear regression problem $\s = X\theta + \varepsilon$. The likelihood function for linear regression (under the assumption that the entries in $\s$ are normally distributed)~is: 
$$\prod_{i=1}^{n} \frac{1}{\sqrt{2 \pi \sigma^2}} \exp \left (-\frac{s_i - \langle X_i, \theta \rangle}{2\sigma^2} \right ),$$
where $X_i$ is the $i$th row in $X$.
The log-likelihood is:
\begin{align*}
\LIN(\theta;\s,X,\sigma^2) & = \sum_{i=1}^{n} \ln (\frac{1}{\sqrt{2\pi \sigma^2}} \exp (-\frac{s_i - \langle X_i, \theta \rangle}{2\sigma^2} ) ) \\
& = -\frac{n}{2}\ln(2\pi) - \frac{n}{2} \ln \sigma^2 -\frac{1}{2} \left (\frac{(\s-X\theta)^\top(\s-X\theta)}{\sigma^2} \right ).
\end{align*}

For the logistic regression problem, the likelihood function (under the assumption that the entries in $\s$ are binomially distributed) is:
$$\prod_{i=1}^n  \zeta_i^{s_i}(1-\zeta_i)^{1-s_i} = \prod_{i=1}^n \left ( \frac{\zeta_i}{1-\zeta_i} \right )^{s_i}(1-\zeta_i).$$
The log-likelihood is:
$$\LOG(\theta;\s,X) = \sum_{i=1}^n s_i \langle X_i,\theta \rangle  - \ln(1+\exp(\langle X_i, \theta \rangle)).$$

\section{Least Singular Value of Perturbed Random Matrices }
In this section, we bound the least singular value of a random matrix perturbed by a low rank matrix. We need the following fact.
\begin{lemma} \label{l: ball}
Let $\mathbf{R}$ be a $d \times n$ random matrix with independent centered subgaussian entries with variances uniformly bounded below ($\tau$-random entries fall into this category) and with $d \geq c_{14}n$. 
For any $\z \in \R^d$,
\[ \Pr \left [\exists \x \in S^{n-1} \ \norm{\mathbf{R}\x + \z} \le c_{15}\sqrt{d} \right ] \le \exp(-c_{16} d). \]
The constants $c_{14},c_{15}$, and $c_{16}$ are all independent of $n$ and $d$. 
\end{lemma}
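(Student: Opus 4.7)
The plan is to first establish a pointwise lower bound on $\norm{\mathbf{R}\x + \z}$ for each fixed $\x \in S^{n-1}$, and then pass to a uniform bound on the sphere via a standard $\epsilon$-net plus operator-norm argument.

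For the pointwise bound, I would fix $\x \in S^{n-1}$ and expand
\[ \norm{\mathbf{R}\x + \z}^2 = \norm{\mathbf{R}\x}^2 + 2\langle \mathbf{R}\x, \z\rangle + \norm{\z}^2. \]
The coordinates $(\mathbf{R}\x)_i = \sum_j \mathbf{R}_{ij} x_j$ are independent mean-zero subgaussians with subgaussian norm $O(1)$ and variance at least $\tau^2 \norm{\x}^2 = \tau^2$, where $\tau^2$ is the uniform lower bound on the variances of the entries. Bernstein concentration applied to $\norm{\mathbf{R}\x}^2 = \sum_{i} (\mathbf{R}\x)_i^2$, a sum of $d$ i.i.d.\ subexponential random variables of mean at least $\tau^2$, yields $\norm{\mathbf{R}\x}^2 \ge (\tau^2/2)\, d$ with probability at least $1 - \exp(-c_1 d)$. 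Independently, $\langle \mathbf{R}\x, \z\rangle$ is a centered subgaussian scalar of parameter at most $C \norm{\z}$, so for any $\eta > 0$,
\[ \Pr\bigl[\,|\langle \mathbf{R}\x, \z\rangle| > \eta \norm{\z}\sqrt{d}\,\bigr] \le 2\exp(-c_2 \eta^2 d). \]
Combining these and using the AM-GM bound $2\eta \norm{\z}\sqrt{d} \le \norm{\z}^2/2 + 2\eta^2 d$ to absorb the cross term into the $\norm{\z}^2$ contribution yields
\[ \norm{\mathbf{R}\x + \z}^2 \ge (\tau^2/2 - 2\eta^2)\, d + \norm{\z}^2/2 \]
on an event of probability at least $1 - 3\exp(-c_3 d)$. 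Choosing $\eta$ to be a small constant depending only on $\tau$ (e.g.\ $\eta = \tau/4$) gives a pointwise lower bound $\norm{\mathbf{R}\x + \z} \ge c_4 \sqrt{d}$ uniformly in $\z$.

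For the uniform bound over the sphere, I would take a standard $\epsilon$-net $\mathcal{N} \subset S^{n-1}$ with $|\mathcal{N}| \le (3/\epsilon)^n$. By the union bound, $\norm{\mathbf{R}\x + \z} \ge c_4 \sqrt{d}$ holds for every $\x \in \mathcal{N}$ simultaneously with probability at least $1 - (3/\epsilon)^n \exp(-c_3 d)$, which is at least $1 - \exp(-c_5 d)$ provided $d \ge c_{14} n$ for $c_{14}$ large enough depending on $\epsilon$. Separately, standard norm bounds for matrices with independent subgaussian entries (as invoked in Theorem~\ref{thm:rowprodnorm}) give $\norm{\mathbf{R}} \le C_6(\sqrt{d} + \sqrt{n}) \le 2 C_6 \sqrt{d}$ with exponentially high probability. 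On the intersection of these events, every $\x \in S^{n-1}$ has some $\x_0 \in \mathcal{N}$ with $\norm{\x - \x_0} \le \epsilon$, so
\[ \norm{\mathbf{R}\x + \z} \ge \norm{\mathbf{R}\x_0 + \z} - \norm{\mathbf{R}}\,\epsilon \ge c_4 \sqrt{d} - 2 C_6 \epsilon \sqrt{d}. \]
Choosing $\epsilon = c_4/(4 C_6)$ gives $\norm{\mathbf{R}\x + \z} \ge (c_4/2) \sqrt{d}$ uniformly on $S^{n-1}$, which is the claim with $c_{15} = c_4/2$.

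The main obstacle is controlling the cross term $\langle \mathbf{R}\x, \z\rangle$, which is mean-zero but whose fluctuations scale linearly in $\norm{\z}$, so it can in principle cancel against the other two terms when $\norm{\z}$ is of order $\sqrt{d}$. The AM-GM absorption is the key trick: the $\norm{\z}^2$ piece always dominates its own cross term, leaving only an $O(\eta^2 d)$ penalty that can be made arbitrarily small relative to the $(\tau^2/2)\, d$ baseline coming from the anti-concentration of $\norm{\mathbf{R}\x}^2$. Everything else (the net argument and operator norm bound) is textbook.
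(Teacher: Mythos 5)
Your proof is correct. The paper itself gives only a pointer: for $\z=0$ it cites Proposition~2.5 of Rudelson--Vershynin~\cite{RV08} (a per-direction small-ball estimate for $\norm{\mathbf{R}\x}$) together with the standard operator-norm bound and a net, and says the general-$\z$ case ``follows the same lines.'' Your argument uses the identical net-plus-operator-norm framework but makes the per-direction step elementary and self-contained: you replace the anti-concentration input by Bernstein concentration of $\norm{\mathbf{R}\x}^2$ around its mean $\geq\tau^2 d$, and you absorb the cross term $2\langle\mathbf{R}\x,\z\rangle$ with an explicit AM--GM step so that $\norm{\z}^2/2$ always survives, giving $\norm{\mathbf{R}\x+\z}^2 \geq (\tau^2/2-2\eta^2)\,d + \norm{\z}^2/2$ on the good event. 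This is a genuine, if small, simplification: the Rudelson--Vershynin route handles arbitrary shifts $\z$ for free because anti-concentration statements are about small balls around arbitrary centers, whereas you handle $\z$ by hand but without any sophisticated machinery. One nit in the writeup: you say the cross term is bounded ``independently,'' but $\norm{\mathbf{R}\x}^2$ and $\langle\mathbf{R}\x,\z\rangle$ are not independent (both are functions of $\mathbf{R}\x$); what you actually use, correctly, is a union bound over the two bad events.
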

For $\z=0$ this follows from Proposition 2.5~\cite{RV08} and the standard estimate of the norm of a subgaussian matrix (see e.g., Proposition 2.3~\cite{RV}). The proof for a general $\z$ follows the same lines.

The lemma below gives an estimate of the smallest singular value of a perturbed random matrix.
\begin{lemma} \label{lem:randpert}
Let $\mathbf{R}$ be a $d \times n$ random matrix with independent centered subgaussian entries with variances uniformly bounded below and $d \geq c_{14}n$. Let $D$ be a deterministic $d \times n$ matrix such that $\text{rank}(D)=K, \ \norm{D} \le d^a$, where $a>0$ is a constant. If 
\[\begin{cases}
K \le \frac{c_{17}d}{(a-1/2) \log d} & \text{if } a>1/2 , \\
K \le c_{18}d & \text{if } a \le 1/2.
\end{cases} \]
 then 
\[ \Pr [\sigma_n(\mathbf{R}+D) \le c_{19} \sqrt{d}] \le \exp(-c_{20}d). \]
The constants $c_{17},c_{18},c_{19}$, and $c_{20}$ are all independent of $n$ and $d$.
\end{lemma}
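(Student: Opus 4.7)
The plan is a standard net argument on the range of $D$, using Lemma~\ref{l: ball} as a black box. The key observation is that for any $\x \in S^{n-1}$, we can write $(\mathbf{R}+D)\x = \mathbf{R}\x + \z$ where $\z = D\x$ lies in the $K$-dimensional subspace $V = \mathrm{range}(D)$ and satisfies $\|\z\| \le \|D\| \le d^a$. Lemma~\ref{l: ball} already handles such a shift by an \emph{arbitrary fixed} vector, so the only thing that must be done is to discretize over the possible values of $D\x$.

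First, I would fix $\epsilon = c_{15}\sqrt{d}/2$ and construct an $\epsilon$-net $\mathcal{N}$ for the set $\{D\x : \x \in S^{n-1}\} \subseteq V \cap d^a B_2^d$. Since this set lives in a $K$-dimensional ball of radius $d^a$, a standard volume argument yields a net of cardinality
\[
|\mathcal{N}| \le \left(\frac{C d^a}{\epsilon}\right)^K = \left(\frac{C'd^{a-1/2}}{1}\right)^K.
\]
For each fixed $\z \in \mathcal{N}$, Lemma~\ref{l: ball} gives
$\Pr[\exists \x \in S^{n-1}:\ \|\mathbf{R}\x + \z\| \le c_{15}\sqrt{d}] \le \exp(-c_{16}d)$,
and a union bound over $\mathcal{N}$ produces the event
\[
\mathcal{E} := \bigl\{\forall \z \in \mathcal{N},\ \forall \x \in S^{n-1},\ \|\mathbf{R}\x + \z\| \ge c_{15}\sqrt{d}\bigr\},
\]
with $\Pr[\mathcal{E}^c] \le |\mathcal{N}|\exp(-c_{16}d)$.

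Next, on $\mathcal{E}$ I would approximate an arbitrary $\x \in S^{n-1}$ by choosing $\z \in \mathcal{N}$ with $\|D\x - \z\| \le \epsilon$, giving
\[
\|(\mathbf{R}+D)\x\| \ge \|\mathbf{R}\x + \z\| - \|D\x - \z\| \ge c_{15}\sqrt{d} - \epsilon = \tfrac{c_{15}}{2}\sqrt{d},
\]
which yields $\sigma_n(\mathbf{R}+D) \ge (c_{15}/2)\sqrt{d}$. It remains to verify that the net is small enough for the union bound to be meaningful, i.e.\ $|\mathcal{N}|\exp(-c_{16}d) \le \exp(-c_{20}d)$. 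Taking logarithms, this requires $K \log(C' d^{a-1/2}) \le (c_{16}/2) d$. When $a > 1/2$ this gives exactly $K \le c_{17} d/((a-1/2)\log d)$, and when $a \le 1/2$ the logarithm is bounded by an absolute constant, so the condition reduces to $K \le c_{18} d$. These are precisely the two hypotheses of the lemma.

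The only genuine step of interest is the net construction on the range of $D$ (rather than on $S^{n-1}$ itself) so that the cardinality depends on the rank $K$ instead of the ambient dimension $n$; once this observation is made, balancing $\epsilon$, the net cardinality, and the exponential failure probability is routine. I do not expect any significant obstacle beyond bookkeeping of constants.
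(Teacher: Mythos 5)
Your proposal is correct and follows essentially the same route as the paper: an $\epsilon$-net on the image $D B_2^n$ (rather than on $S^{n-1}$) of cardinality roughly $(Cd^{a-1/2})^K$, followed by a union bound over Lemma~\ref{l: ball} and a triangle-inequality approximation with $\epsilon = c_{15}\sqrt{d}/2$, the rank hypotheses on $K$ ensuring the net size is absorbed by the exponential tail. The bookkeeping of constants and the case split on $a$ match the paper's argument.
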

\begin{proof}
Set $Q=D B_2^n$ (where $B_2^n$ denotes the unit Euclidean ball in $\R^n$), and let $\epsilon=c_{15} \sqrt{d}/2$. By the volumetric estimate, there exists an $\epsilon$-net $\NN$ in $Q$ of cardinality at most
 \[ |\NN| \le \left( \frac{3}{\epsilon} \right)^K \norm{D} \le (c_{21}d)^{(a-1/2)K} \]
 for $a>1/2$ and $|\NN| \le c_{22}^K$ for $a \le 1/2$. Assume that there exists $\x \in S^{n-1}$ such that $\norm{(\mathbf{R}+D)\x} \le c_{15}\sqrt{d}/2$. Choose $\z \in \NN$ so that $\norm{D\x-\z} <\epsilon$. Then
\begin{align*} \norm{\mathbf{R}\x+\z} & \le \norm{(\mathbf{R}+D)\x} +\norm{\z-D\x} \\ 
& < c_{15} \sqrt{d}/2+\epsilon=c_{15}\sqrt{d}. 
\end{align*}
Lemma~\ref{l: ball} and the union bound yield
\begin{align*}
& \Pr [\exists \x \in S^{n-1} \  \norm{(\mathbf{R}+D)\x}_2 \le c_{15} \sqrt{d}/2 ] \\
& \le  \Pr [\exists \x \in S^{n-1} \exists z \in \NN \ \norm{\mathbf{R}\x+\z}_2\le c_{15} \sqrt{d}] \\
& \le |\NN| \cdot \exp(-c_{16} d)\\
& \le \exp(-c_{20} d).
 \end{align*}
 The last inequality follows from the assumption on $K$. Here, $c_{21}$ and $c_{22}$ are constants independent of $n$ and~$d$. 
\end{proof}
\end{document}